\definecolor{mygray}{gray}{.9}
\definecolor{dkgreen}{rgb}{0,0.6,0}
\definecolor{gray}{rgb}{0.5,0.5,0.5}
\definecolor{mauve}{rgb}{0.58,0,0.82}
\newcommand{\rev}[1]{\textcolor[rgb]{0,0,0}{#1}}
\newcommand{\rrev}[1]{\textcolor[rgb]{0,0,0}{#1}}
\newcommand{\kpc}{hop-constrained cycle cover\xspace}
\newcommand{\kpcs}{hop-constrained cycle covers\xspace}
\newcommand{\cvc}{hop-constrained cycle cover\xspace}
\newcommand{\kc}{hop-constrained cycle\xspace}
\newcommand{\kcs}{hop-constrained cycles\xspace}
\newcommand{\SetVline}{\SetAlgoVlined}%
\newcommand{\bt}{block technique\xspace}
\newcommand{\bft}{BFS-filter technique\xspace}
\newcommand{\adp}{BUR\xspace}
\newcommand{\adpm}{BUR+\xspace}
\newcommand{\darc}{\text{DARC-DV}\xspace}
\newcommand{\np}{$NP$-$hard$\xspace}
\newcommand{\hcst}{hc-$s$-$t$\xspace}
\newcommand{\tpd}{Top-Down\xspace}
\newcommand{\tbs}{TDB\xspace}
\newcommand{\tbk}{TDB+\xspace}
\newcommand{\tdb}{TDB++\xspace}
\newcommand{\eat}[1]{#1}
\newcommand{\reat}[1]{}
\newcommand{\etal}{\emph{et al.}}
\newtheorem{example}{\textbf{Example}}
\newtheorem{theorem}{\textbf{Theorem}}
\newtheorem{lemma}{\textbf{Lemma}}
\newtheorem{definition}{\textbf{Definition}}
\def\BibTeX{{\rm B\kern-.05em{\sc i\kern-.025em b}\kern-.08em
    T\kern-.1667em\lower.7ex\hbox{E}\kern-.125emX}}
\begin{document}
\title{TDB: Breaking All Hop-Constrained Cycles in Billion-Scale Directed Graphs
\thanks{Xuemin Lin is the corresponding author.}}

\author{{You Peng$^{\dagger}$, Xuemin Lin$^{\ddagger}$$^*$, Michael Yu$^{\dagger}$, Wenjie Zhang$^{\dagger}$, Lu Qin$^{\S}$
} %
\vspace{1.6mm}\\
\fontsize{10}{10}
\selectfont\itshape
$^\dagger$The University Of New South Wales, \\
$^{\ddagger}$Antai College of Economics \& Management, Shanghai Jiao Tong University, Shanghai, China,\\
$^\S$QCIS, University of Technology, Sydney\\
\fontsize{9}{9} \selectfont\ttfamily\upshape
\{unswpy, xuemin.lin\}@gmail.com, \{mryu,zhangw\}@cse.unsw.edu.au, lu.qin@uts.edu.au \\}


%



\maketitle
\begin{abstract}
\rev{The Feedback vertex set with the minimum size is one of Karp's 21 NP-complete problems targeted at breaking all the cycles in a graph.} This problem is applicable to a broad variety of domains, including E-commerce networks, database systems, and program analysis. In reality, users are frequently most concerned with the hop-constrained cycles (i.e., cycles with a limited number of hops). For instance, in the E-commerce networks, the fraud detection team would discard cycles with a high number of hops since they are less relevant and grow exponentially in size. Thus, it is quite reasonable to investigate the feedback vertex set problem in the context of hop-constrained cycles, namely \textit{\kpc} problem. It is concerned with determining a set of vertices that covers all hop-constrained cycles in a given directed graph.
A common method to solve this is to use \rrev{a} \textit{bottom-up} algorithm, where \rev{it iteratively selects cover vertices into the result set}. \rev{Based on this paradigm, the existing works mainly focus on the vertices orders and several heuristic strategies. In this paper, a totally opposite cover process \textit{top-down} is proposed and bounds are presented on it}. Surprisingly, both theoretical time complexity and practical performance \rev{are improved}. On the theoretical side, \rev{this work is} the first to achieve $O(k \cdot n \cdot m)$ time complexity, whereas the state-of-the-art method achieves time complexity \rrev{of} $O(n^k)$.\footnote{$m$ and $n$ denote the number of edges and vertices, respectively. The $k$ denotes the hop constraint.}
On the practical level, \rev{the} proposed algorithm, namely \tdb, outperforms the state-of-the-art by $2$ to $3$ orders of magnitude on average while preserving the minimal property. 
As a result, \rev{the method in this paper} outperforms the state-of-the-art approaches in terms of both running time and theoretical time complexity. This is the first time, to our best knowledge, that the \kpc problem on billion-scale networks has been solved with a minimal\footnote{``Minimal" indicates local optimal in this paper.} cover set for $k > 3$.
\end{abstract}

\section{Introduction}
\label{sec:introduction}
\rev{The Feedback vertex set with the minimum size was one of the 21 NP-complete problems Karp~\cite{karp1972reducibility} developed in an effort to break every cycle in a graph. It was created way back in the early 1960s (see the survey of Festa \etal~\cite{festa1999feedback}). A wide variety of applications, such as operating systems~\cite{galvin2003operating}, database systems~\cite{gardarin1976integrity}, and circuit testing~\cite{leiserson1991retiming}, make use of it. Numerous studies have been conducted over many years, including those on approximation algorithms~\cite{bafna19992,bar1998approximation,even2000approximating,kleinberg2001wavelength}, linear programming~\cite{chudak1998primal}, parameterized complexity~\cite{dehne20072, downey2012parameterized, guo2005parameterized}, etc.}

Nevertheless, \rev{it is observed} that individuals are primarily concerned about hop-constrained cycles in a large number of real-world applications. For instance, a team from Alibaba group recently \rrev{investigated} the \textit{hop-constrained cycles}~\cite{qiu2018real} in the context of financial fraud detection on e-commerce networks. As mentioned in~\cite{qiu2018real}, the fraud detection team disregards the cycles with a high number of hops due to their lack of relevance and exponential growth in search space. This leads us to investigate the feedback vertex set problem in terms of hop-constrained cycles, namely \textit{\kpc}. \rev{Specifically, given a directed graph $G$, we need a set of vertices containing all hop-constrained cycles whose length is constrained by a parameter on the graph; for each constrained cycle in the graph, at least one of its vertices must be in the feedback vertex.}

\eat{
The hop-constrained cycles are more practical than cycles in real applications, since the real applications are inherently constrained.
Users are interested in cycles in a variety of graph analytic tasks. The cycle is a vital graph pattern that is usually associated with certain behaviors in many real-life applications, such as financial fraud detection, program analysis, and compiler optimization. Significant research efforts have been devoted to the cycle-related studies such as cycle enumeration (e.g.,~\cite{birmele2013optimal,ALLPath78,ALLPath80,ALLPath86}) and real-time cycle detection (e.g.,~\cite{qiu2018real}).

The problem of \textit{\cvc} can be regarded as an important extension of the well-known feedback vertex set problem~\cite{bafna19992} with specific concerns \rrev{about} constrained simple cycles (i.e., cycles with only the first and last vertices repeated).
}

\vspace{1mm}
\noindent\textbf{Applications.}~The following are some compelling applications of the \kpc problem.


\vspace{0.5mm}
\noindent \textit{(1) \underline{Combinatorial Circuit Design}}. As shown in~\cite{bafna19992}, one typical use of the feedback vertex set is the combinatorial circuit design.
The circuits are depicted by graphs in which each cycle denotes a possible ``racing condition''. Certain circuit components may receive new inputs prior to \rrev{stabilization}. One method to avoid such a condition is to include a clocked register at each cycle in the circuit. Due to the fact that the ``racing condition'' is negligible for a long cycle~\cite{liu2018detailed,bafna19992}, the hop-constraint is imposed automatically in this application. As a consequence, this application enforces \rrev{hop-constraint} by default.

\vspace{0.5mm}
\noindent \textit{(2) \underline{E-commerce Networks}}.
Each node represents an account in an E-commerce Network, and each directed edge represents a money transfer between two accounts. Figure~\ref{fig:motivation} depicts an example of this kind. According to Alibaba Group specialists in~\cite{qiu2018real}, a \kc is a strong indicator of fraudulent activity or financial crime such as money laundering~\cite{yue2007review}. Figure~\ref{fig:motivation} depicts three simple cycles, i.e., potential money laundering behaviors. \rrev{By using a minimal \kc vertex cover,} we can identify a group of critical individuals who are more likely to participate in fraudulent activities. For instance, \kpc $\{a\}$ with the constraint $hop \leq 5$ is the most suspicious individual since it covers (i.e., becomes involved in) all three simple cycles with a length limitation of $5$.


\begin{figure}[t]
	\centering

     \includegraphics[width=0.50\linewidth]{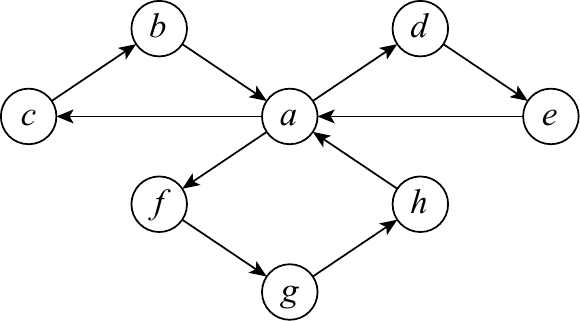}

\caption{\small An example of the e-commerce network, where vertices represent accounts, and edges represent transactions.}
\label{fig:motivation}
\end{figure}


\vspace{0.5mm}
\noindent \textit{(3) \underline{Program Analysis}}.
The \cvc could also be applied to identify and resolve deadlock potentials in program analysis, especially for concurrent applications~\cite{cai2014magiclock}. Deadlock is a frequent concurrency error occurs when a set of threads are blocked, and a constrained cycle in a lock graph signals the possibility of a deadlock~\cite{agarwal2010detection}. As a consequence, building a minimal \cvc is crucial in this field. 

\noindent\textbf{Constraints.}~
\rev{ Various constraints might be placed on the cycle computation in real-world applications. We concentrate on two typical constraints in this paper, which follows the settings in~\cite{qiu2018real}: the simple and the hop constraint. Notably, these two constraints are beneficial in reality since they might result in considerably smaller results and fewer relevant cycles (e.g., a cycle with a high number of hops implies a weak connection between the vertices).}

\reat{
Additionally, in Section~\ref{subsec:moreCons}, we explore how to expand our methods to include other constraints such as label constraints and temporal constraints.}


Self-loops and bidirectional edges are not considered as cycles in this paper since they are uninteresting and substantially increase the result size.
Nota bene, the self-loop and bidirectional edge may be promptly verified if required.


\reat{\vspace{1mm}
\noindent\textbf{Challenges.}
The main obstacle of solving the \cvc problem is the enormous search space. 
Theoretically, we demonstrate that the problem is very challenging since finding optimal \cvc for a given graph $G$ is \np. Also, existing works, e.g.,~\cite{kuhnle2019scalable}, mainly focus on a \textit{bottom-up} framework, which requires extra cost to ensure the \textit{minimal} property of the cover set. In this paper, we demonstrate that a \textit{top-down} framework is better than \textit{bottom-up} since it could naturally satisfy the \textit{minimal} property without extra cost. We also propose effective and efficient heuristic and pruning techniques to speed up the performance.

\vspace{1mm}
\noindent\textbf{Our Approaches.}~
To begin, we propose a bottom-up method with the minimal pruning technique. Using it, we could generate a feasible result set. Nevertheless, the algorithm is inefficient. Then, a top-down method with a properly preserved lower bound $block$ is proposed to accelerate our enumeration process. Additionally, an upper bound is provided to further speed up our algorithm. Our algorithm could return a feasible result for billion-scale graphs using all of these techniques.}

\vspace{1mm}
\noindent\textbf{Contributions.}
\rrev{The} main contributions are listed as follows:
\begin{itemize}
\item \rev{\textit{Scalability}.~To the best of our knowledge, this is the first work to answer the problems of \kpc on billion-scale directed graphs, both of which have a wide range of real-life applications. To achieve this, we use an opposite cover process from existing works. In addition, delicate upper and lower bounds are proposed to achieve better theoretical time complexity and practical performance.}

\item \rev{\textit{Theoretical Analysis}.~For a given directed, unweighted graph $G$, we demonstrate that approximating the \kpc problem with lengths ranging from $3$ to $k$ within $(k-1-\epsilon)$ is UGC-hard (Unique Games Conjecture).}


\item \rev{\textit{Comprehensive Experiments.}~Compared to the baselines, our comprehensive experiments demonstrate the efficiency and effectiveness of our proposed method.}
\end{itemize}

\vspace{1mm}
\noindent\textbf{Organization of the paper.} The remainder of this paper is structured as follows. Section~\ref{sect:related} is devoted to related work.
In Section~\ref{sect:preliminary}, we introduce the preliminary. Section~\ref{sect:Theo} provides an examination of these problems from a theoretical standpoint. Bottom-up and \tpd algorithms are conducted in Sections~\ref{sect:rAda} and~\ref{sect:tpd}, respectively. Extensive experiments are conducted in Section~\ref{sect:experiment}. Finally, Section~\ref{sect:conclusion} concludes this paper.

\section{Related Work}
\label{sect:related}
We review closely related works in this section.

\subsection{K-Cycle Traversal and K-Cycle-Free Subgraph}
\label{subsect:kCycleTransversal}
 \rev{The k-cycle transversal problem (K-cycle problem for short) is to find a minimum-size set of edges that intersects all simple cycles of length $k$ in a network. K-cycle problems are crucial in the disciplines of extremal combinatorics, combinatorial optimization, and approximation algorithms, as shown in~\cite{alon1996bipartite,alon2003maximum,erdos1996covering,krivelevich1995conjecture}.~\cite{krivelevich1995conjecture} investigates the $3$-Cycle Traversal. \cite{erdos1996covering} addresses $3$-Cycle Traversal, $3$-Cycle-Free Subgraphs, and their connections to related issues. In~\cite{pevzner2004novo}, the problem of locating a maximum subgraph devoid of cycles of length $\leq k$ is examined in the context of computational biology, and different heuristics are presented without analyzing their approximation ratio.}

\reat{However, the majority of related studies examined k-Cycle-Free Subgraph in the context of extremal graph theory, with a particular emphasis on the maximum number of edges in a graph without $k$-cycles (or without cycles of length $\leq k$). This is essentially the k-Cycle-Free Subgraph problem on complete graphs.}

The most closely related problem is~\cite{kortsarz2008approximating}, which tackles both the problem of discovering a minimal edge subset of $E$ that intersects every \kc, and the problem of discovering a maximum edge subset of $E$ that does not intersect any \kcs. They provide a $(k-1)$-approximation for this problem, when $k$ is odd.~\cite{xia2012kernelization} investigates the kernelization for the cycle traversal problems. The most efficient method is described in~\cite{kuhnle2019scalable}. We use their method as our baseline.


\subsection{Feedback Vertex Set}
\label{subsect:feedbackVertexSet}
Another related work is the \rev{feedback vertex set (FVS)} problem, which seeks to intersect a minimum-size of vertices with all cycles of any length in the network.~\cite{bafna19992} proposes a 2-approximation method for FVS in undirected graphs. The analogous edge version issue may be reduced to the minimum spanning tree problem~\cite{vazirani2013approximation}. As a result, it can be solved in polynomial time.

\reat{Guruswami and Lee~\cite{guruswami2014} demonstrated the UGC-hardness for the inapproximability of the bounded cycle feedback set. Additionally, some studies, e.g.,~\cite{simpson2016efficient}, investigate the feedback arc set (FAS).
They investigate whether the directed graph has a feedback vertex set of at most $k'$ vertices. Both undirected and directed FVS are proven to be FPT in fixed-parameter tractability~\cite{bodlaender1994disjoint,chen2008fixed}.}

\reat{Additionally, there are approximation algorithms~\cite{seymour1995packing,even1998approximating, even2000divide} achieve an approximation factor of $O(min(log \ \tau^*log log \ \tau^*; log \ n  \  log log \ n))$, where $\tau^*$ is the optimal fractional solution in the natural LP relaxation.\footnote{In unweighted cases, $\tau^*$ is always at most $n$. In weighted cases, they assume all weights are at least 1.}
A constant factor approximation for the directed feedback vertex set is ruled out by the Unique Games Conjecture (UGC). We do not consider the feedback vertex set as baselines mainly due to the following reasons.

There is no related work that addresses the issue of efficiently coping with length-bounded cycles.~\cite{kuhnle2019scalable} summarizes that breaking all cycles is a simpler task than only breaking cycles of length $k$.
This is also $true$ for the length bounded by $k$, since there is a $2$-approximation method for breaking all cycles, but breaking cycles of length at most to $k$ could not be approximated within $(k-1-\epsilon)$~\cite{guruswami2014}.}

\subsection{Cycle and Path Enumeration}
\label{subsect:pathCycleEnumeration}
With the rapid development of information technology, a growing number of applications represent data as graphs~\cite{peng2018efficient, jin2021fast, peng2021answering,yuan2022efficient,peng2022finding,chen2022answering,yang2021huge,qin2020software,lai2021pefp}. Numerous \rrev{research has} been conducted on the subject of enumerating $s$-$t$ simple paths (e.g.,~\cite{bohmova2018computing,knuth2011art,nishino2017compiling, yasuda2017fast,peng2021dlq,peng2020answering,peng2021efficient,qing2022towards}), with the \textit{simpath} algorithm proposed in~\cite{knuth2011art} being one among them. Due to the huge amount of results, \cite{qiu2018real} applies the hop-constrained path enumeration problem on the dynamic graphs and analyzes \rrev{them}.
An indexing technique named HP-index is proposed to continuously maintain the pairwise \hcst paths among a set of hot points (i.e., vertices with \rrev{a} high degree).

There is a long history of study on enumerating all simple paths or cycles on a graph~\cite{ALLPath78,ALLPath80,ALLPath86,chen2022answering}.
Another area of study (e.g.,~\cite{DBLP:journals/jgaa/RobertsK07}) is counting or estimating the number of paths connecting two given vertices, which is a well-known $\#P$ hard problem.

\noindent\textbf{Hop-constrained Path Enumeration~\cite{Hao21,Hao22,peng2019towards}}.~\rev{Peng et al.~\cite{peng2019towards,peng2021efficient} designed a barrier-based method, which dynamically maintains the distance from each vertex to $t$. Initially, they set the barrier for each $v \in V(G)$ as $S(v,t|G)$. During the enumeration, if they find that a sub-tree rooted at a node in the search tree contains no result, then they will increase the barrier to avoid falling into the same sub-tree again. T-DFS~\cite{rizzi2014efficiently} and T-DFS2~\cite{grossi2018efficient} are two theoretical works. They achieve polynomial delay by ensuring that each search branch in the search tree leads to a result. For example, before extending $M$ by adding $v'$' in Algorithm 1, T-DFS checks whether there is a shortest path from $v$' to $t$ without vertices in $M$ whose length is bounded by $k-L(M)-1$. Although all three algorithms achieve $O(k \times |E(G)|)$ polynomial delay. Peng et al. showed that their method runs much faster than T-DFS and T-DFS2 in practice because their pruning strategy incurs lower overhead~\cite{peng2019towards}.}



\section{Preliminaries}
\label{sect:preliminary}
This section introduces the \cvc problem explicitly. Then, the state-of-the-art methods are presented.
\rev{Table~\ref{tb:notations} provides} the mathematical notations that are most frequently used throughout this paper.

%
\begin{table}[tb]
\small
  \centering
  \vspace{-1mm}
  \caption{The summary of notations.}
\label{tb:notations}
    \begin{tabular}{|c|l|}
      \hline
      \cellcolor{gray!25}\textbf{Notation} & \cellcolor{gray!25}\textbf{Definition}        \\ \hline
      $G$ &  a given directed unweighted graph \\ \hline
      $m$, $n$ & the number of edges(vertices) for $G$ \\ \hline
      $C$   & a \kpc for given graph G  \\ \hline
      $c$   & a simple \kc , where $ 3 \leq |c| \leq k$  \\ \hline
      $| C |$,$| c |$   & the size(length) for a vertice set $C$($c$)  \\ \hline
      $\mathcal{H}[v]$   & the hit times for vertex $v$  \\ \hline
      $CN$   & the cover node  \\ \hline
      $opt(G,k)$   & the optimal \kpc in $G$  \\ \hline
       $len(p)$ & length (i.e., number of hops) of path $p$ \\ \hline
      $sd(u,v)$ & shortest path distance from $u$ to $v$ \\
                & i.e., the minimal number of hops from $u$ to $v$ \\ \hline
      $sd(u,v|T)$ & shortest path distance from $u$ to $v$ \\
      & not containing any vertex in $T$ \\ \hline
      $p[x]$ & number of hops the vertex $x$ can reach \\
             & the end vertex of $p$ along the path $p$ \\ \hline

      $\mathcal{S}$, $|\mathcal{S}|$ & the stack in DFS and its size \\ \hline
      $p(\mathcal{S})$ & the path associated with stack $\mathcal{S}$ \\ \hline
      $adj_{in}[v]$ & the in-neighbors of vertex $v$ \\ \hline
      $adj_{out}[v]$ & the out-neighbors of vertex $v$ \\ \hline
      $len(\mathcal{S})$ & the length of the path associated with $\mathcal{S}$ \\
                         & where $len(\mathcal{S}) = len(p(\mathcal{S}))= |\mathcal{S}|-1$  \\ \hline
%
     \end{tabular}

\end{table}

\subsection{Problem Definition}
\label{sect:definition}
\rev{This subsection begins by} formally introducing the \kpc problem.

$G = (V, E)$ is a directed graph containing the vertex set $V$ and the edge set $E$.
The symbol $e(u,v) \in E$ denotes a directed edge connecting the vertex $u$ to the vertex $v$.
When the context is clear, ``neighbor'' \rev{refers} to the ``out-going neighbor''. $adj_{in}[v]$ ($adj_{out}[v]$) \rev{designate} the in-neighbors (out-neighbors) of vertex $v$, respectively. A \textit{path} $p$ from the vertex $v$ to the vertex $v'$ is a sequence of vertices $v = v_0$, $v_1$, $\ldots$, $v_h = v'$, where $e(v_{i-1},v_i)$ $\in E$ for all $i \in [1,h]$.
A $circuit$ is a non-empty path with repeated start and end vertices, e.g., $(v_1,v_2,...,v_u,v_1)$.
The length of a $circuit$ is equal to its number of vertices.
A $cycle$ or $simple$ $circuit$ is a circuit where only the initial and final vertices are repeated.

The \kpc problem is defined as follows:
\begin{definition}[A Constrained Cycle]
\rev{Given a hop constraint $k$, \textit{A Constrained Cycle} $C$ is a cycle with $ 3 \leq |C| \leq k$, and there are no repeat vertices except the starting and ending vertices.}
\end{definition}

\begin{definition}[Hop-Constrained Cycle Cover]~\rev{Assume that $k$ is a positive integer. A \kpc of graph $G(V,E)$ is a subset of vertices $C \subset V$ such that, for any constrained cycle $C$, $C \cap \pi \neq \emptyset$.} 
\end{definition}



The following definitions apply to the optimal and minimal \kpc.
\begin{definition}[Optimal Hop-Constrained Cycle Cover]
The term ``an optimal \kpc'' refers to a collection of vertices $C_0$ that has the smallest size of all the \kpcs given a directed graph $G$.
\end{definition}

\begin{definition}[Minimal Hop-Constrained Cycle Cover]
Given a directed graph $G$, a collection of vertices $C_0$ is said to be a minimal \kpc if no vertex in the cover could not be deleted. 
\end{definition}

\eat{
\rev{This paper covers} all the simple cycles (\kcs), which is described as Theorem~\ref{the:simpleCycle}.
\begin{theorem}
\label{the:simpleCycle}
If $C = \{ \ c \  | \ c$ is a cycle $\in G \}$, $C_s = \{ \ c \ | \ c$ is a simple cycle $\in G \}$, and a vertex set $V_0$ is cover set for $C_s$ s.t. $\forall c \in C_s$, $\exists v \in V_0$, and $ v \in c$, then $\forall c \in C$, $\exists v \in V_0$, and $ v \in c$.
\end{theorem}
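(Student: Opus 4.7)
The plan is to reduce the general case to the simple case by showing that every cycle in $G$ necessarily contains, among its vertices, a simple cycle already covered by $V_0$. The key observation is that any closed walk with repeated intermediate vertices can be decomposed until a simple cycle is exposed on a subset of its vertex set, so covering the simple cycles is automatically sufficient.

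Concretely, I would fix an arbitrary $c \in C$ and write it as a closed vertex sequence $v_0, v_1, \ldots, v_h$ with $v_0 = v_h$. If $c$ is already a simple cycle, then $c \in C_s$ and the hypothesis immediately gives some $v \in V_0$ with $v \in c$. Otherwise, there exist indices $0 \le i < j \le h$ with $(i,j) \ne (0,h)$ such that $v_i = v_j$. Choose such a pair with $j - i$ minimum. Then the subwalk $c' = (v_i, v_{i+1}, \ldots, v_j)$ is a circuit, and minimality of $j-i$ rules out any further internal repetition: if some $v_a = v_b$ with $i \le a < b \le j$ and $(a,b) \ne (i,j)$, this would contradict the minimality of $j - i$. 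Hence $c'$ is a simple cycle, i.e.\ $c' \in C_s$.

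Applying the hypothesis to $c'$, there exists $v \in V_0$ with $v \in c'$. Since every vertex of $c'$ appears in the original sequence of $c$, we have $v \in c$, which is exactly what we needed. As $c \in C$ was arbitrary, $V_0$ covers every cycle of $G$.

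The proof is short and I do not anticipate a real obstacle; the only point that requires a bit of care is the minimal-pair argument used to guarantee that the extracted subwalk is actually simple (no repeated internal vertex), rather than merely a shorter non-simple circuit. This hinges on choosing the pair $(i,j)$ with the smallest index gap rather than, say, the lexicographically first repetition, and I would make this explicit to avoid any ambiguity in the extraction step.
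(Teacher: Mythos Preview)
Your proof is correct and follows essentially the same idea as the paper: extract a simple cycle contained in the vertex set of an arbitrary (possibly non-simple) cycle and apply the hypothesis to it. The paper phrases this as a proof by contradiction and simply asserts that any non-simple cycle decomposes into simple cycles, whereas you give the explicit minimal-gap extraction; this is a presentational difference, not a substantive one.
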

\begin{proof}
\rev{It is proven} by contradiction. Assume there is a cycle $c_0 \in C$ and there does not $\exists v \in V_0$, s.t. $v \in c_0$. Notably, any non-simple cycles could be decomposed into several simple cycles. Thus, there $\exists c_1 \subset c_0$ where $c_1$ is a simple cycle since $V_0$ is a cover set for the simple cycles. Then, there must $\exists v_2 \in V_0$ and $v_2 \in c_1$. If \rev{it chooses} $v_2$ the vertex to cover $c_0$, there is a contradiction.
\end{proof}
\noindent \textbf{Remark.}~This is also $true$ for hop-constrained cycles. The proof is similar. The main idea is that a $k$-hop constrained cycle could be divided into several small $k$-hop constrained simple cycles. In fact, Theorem~\ref{the:simpleCycle} is $true$ for any \rrev{constraint} if a cycle of such constraints could be decomposed into several small simple cycles that still satisfy these constraints. 

Theorem~\ref{the:simpleCycle} indicates that once all the simple circuits are covered, all the cycles are covered. Hence, \rev{only simple cycles are considered} in this paper.
}

\subsection{The State-of-the-art }
\label{subsec:baseline}
\eat{
\noindent\textit{Theoretical Perspective.}~The most related theoretical works are the path covers. There are two types of them, i.e., \underline{k}-hop \underline{A}ll \underline{P}ath \underline{C}over (k-APC) and \underline{k}-hop \underline{S}hortest \underline{P}ath \underline{C}over (k-SPC). The properties of k-APC were theoretically analyzed in~\cite{funke2014k}. Unfortunately, by reduction from the vertex cover problem, the problem of minimizing the size of k-APC has proven to be NP-hard for $k$ $\ge$ $2$. Furthermore, the k-APC problem cannot be approximated for $k$ $\ge$ $2$ in polynomial time within a factor of $1.3606$ (unless $P$ $=$ $NP$), which is also inherited from the vertex cover problem~\cite{brevsar2011minimum}. The NP-hardness and inapproximability also apply to k-SPC, as pointed out in~\cite{funke2014k}. The upper bound on the size of a k-SPC has been discussed using the theory of VC dimensions. Please refer to ~\cite{funke2014k} for details.
}

\noindent\textit{Practical Perspective.}~Following that, the state-of-the-art $k$-cycle algorithm DARC from~\cite{kuhnle2019scalable} \rev{is introduced} as the baseline. \rev{Firstly, the definition of $k$-cycle problem is given:}

\begin{definition}[K-cycle problem]
\rev{Given a graph $G = (V, E)$, determine minimum-size set $S \subseteq E$ such that for each constrained cycle $C \in C_k$, $C \cap S \neq \emptyset$, or, equivalently, $C_k(G \backslash S) = \emptyset$.}
\end{definition}




Algorithm~\ref{alg:darc} illustrates the details of this algorithm. The process is to iteratively go through all edges in $E$ in Line~\ref{darc:iter}. If the current edge is not included in the result set $S$, then $AUGMENT(e)$ is called. After all edges are evaluated, \rev{it is called} $PRUNED$ function.

\begin{algorithm}[htb]
\SetVline
\SetFuncSty{textsf}
\SetArgSty{textsf}
\small
\caption{ \rev{\textbf{DARC}}}
\label{alg:darc}
\Input
{   \rev{Graph $G = (V, E)$, sets $S, W, P \subseteq E$, $\mathcal{U} \subseteq C_k$ \\
        function $h$ : $(S \cup W ) \rightarrow \mathcal{U}$ \\
	$k$: the hop constraint \\}
}   
\rev{\State{$S \leftarrow \emptyset$, $h \leftarrow \emptyset$, $P \leftarrow \emptyset$, $W \leftarrow \emptyset$, $\mathcal{U} \leftarrow \emptyset$}
\For{$e \in E$}
{
\label{darc:iter}
    \If{$e \notin S$}
    {
        \State{AUGMENT($e$)}
    }
}
\State{PRUNE()}}
\end{algorithm}

Thus, the DARC consists of two phases.

\textit{AUGMENT}. In this stage, \rev{it augments} uncovered vertices and \rev{discovers uncovered hop-constrained cycles.} The final cover is constructed entirely from its vertices. \rev{The details are shown in Algorithm~\ref{alg:augment}. If $e \in S$, then this terminates in Line~\ref{aug:skip}. It is noted that $W = \emptyset$ in the initial stage, thus \rev{it} can ignore it. Then, in Line~\ref{aug:check} it checks every constrained cycle going through $e$ and \rev{adds} all edges to the result set $R$ and $P$. In addition, the relationship between constrained cycles and edges \rev{is} recorded in Line~\ref{aug:record}.}

\begin{algorithm}[htb]
\SetVline
\SetFuncSty{textsf}
\SetArgSty{textsf}
\small
\caption{ \rev{\textbf{AUGMENT($e$)}}}
\label{alg:augment}
\Input
{   \rev{Graph $G = (V, E)$, sets $S, W, P \subseteq E$, $\mathcal{U} \subseteq C_k$ \\
        function $h$ : $(S \cup W ) \rightarrow \mathcal{U}$ \\
	$k$: the hop constraint \\}
}   
\rev{\If{$e \in S$}
{
\label{aug:skip}
    \State{Return}
}
\ElseIf{$e \in W$}
{
    \State{Move $e$ from $W$ to $S$}
    \State{Add $e$ to $P$}
    \State{Return}
}
\For{$C$ $\in$ $\Delta_k(e)$}
{
\label{aug:check}
    \If{$C \cap S = \emptyset$}
    {
        \If{$C \cap W = \emptyset$}
        {
            \State{Add all edges of $C$ to $S$ and $P$}
            \label{aug:add}
            \State{Add $C$ to $\mathcal{U}$ and set $h(e) = C$ for all $e \in C$}
            \label{aug:record}
        }
        \Else{
            \State{Move any edge in $C \cap W$ to $S$ and $P$}
        }
    }
}}
\end{algorithm}

\textit{PRUNE}. At this step, \rev{it removes} any extraneous vertices to ensure that the result set remains feasible. \rev{The details are in Algorithm~\ref{alg:pruned}. For every candidate edge $e \in P$, it tries to remove it from the result set $S$ in Line~\ref{prune:try}. If yes, it will delete $e$ from $S$ and add it to $W$.}

\begin{algorithm}[htb]
\SetVline
\SetFuncSty{textsf}
\SetArgSty{textsf}
\small
\caption{ \rev{\textbf{PRUNE()}}}
\label{alg:pruned}
\Input
{   \rev{Graph $G = (V, E)$, sets $S, W, P \subseteq E$, $\mathcal{U} \subseteq C_k$ \\
        function $h$ : $(S \cup W ) \rightarrow \mathcal{U}$ \\
	$k$: the hop constraint \\}
}   
\For{\rev{$e \in P$}}
{
    \State{\rev{$P \leftarrow P \backslash \{ e\}$}}
    \If{\rev{$e \in S$}}
    {
        \If{\rev{$S \backslash e$ remains feasible to a cover}}
        {
        \label{prune:try}
            \rev{\State{$S \leftarrow S \backslash \{e \}$}
            \label{pruned:yes1}
            \State{$W \leftarrow W \cup \{ e \}$}}
            
        }
    }
}
\end{algorithm}

The specifics of these two phases may be seen in~\cite{kuhnle2019scalable}, the source code for which is publicly accessible.

\rev{\noindent\textbf{Modification to the vertex version}.~Since DARC is a method developed for discovering a minimum edge subset that does not intersect with any hop-constrained cycles, \rev{it is modified} as \rev{the} baseline. The modification is as follows: for the original graph $G(V, E)$, it \rev{is converted} to a new graph $G'(V', E')$. For every $e_{u,v} \in E$, \rev{it generates} a corresponding $v_{u,v} \in V'$, an edge $e'$ is added from vertices $v_{u,v}$ to $v_{v,w}$ since there is a common $v$ between them. Then, the edge set could be converted to the vertex result set. The modified algorithm is called $DARC$-$DV$.}
As demonstrated in~\cite{kuhnle2019scalable}, the worse case time complexity of \darc is $O(n^k)$.
\rev{This paper intends} to enhance the findings in terms of cover size and efficiency.

\section{Theoretical Analysis}
\label{sect:Theo}

\rev{This section begins} with a theoretical study of the \kpc issue.


\rev{Theorem~\ref{the:np_opt} proves that it is NP-hard to decide whether there is a \kpc for a given directed, unweighted graph $G$ with size $s$.}
\begin{theorem}
\label{the:np_opt}
\rev{Deciding whether there is a \kpc for a given directed, unweighted graph $G$ with size $s$ is \np for all constrained cycles.}
\end{theorem}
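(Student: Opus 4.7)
The plan is to establish NP-hardness via a polynomial-time reduction from the classical \textsc{Vertex Cover} (VC) problem on undirected graphs, which is one of Karp's 21 NP-complete problems. Given a VC instance consisting of an undirected graph $G = (V, E)$ and an integer $s$, I would construct a directed graph $G'$ and set the cover bound $s' = s$, arguing that $G$ admits a VC of size at most $s$ if and only if $G'$ admits a \kpc of size at most $s$.

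The construction embeds $V$ into $V'$ and, for each edge $e = \{u, v\} \in E$, installs a gadget that is a directed simple cycle of length exactly $k$ through $u$ and $v$. Concretely, I would introduce $k-2$ fresh internal vertices $w_e^1, \ldots, w_e^{k-2}$ together with the arcs $u \to w_e^1 \to w_e^2 \to \cdots \to w_e^{k-2} \to v$ and the closing arc $v \to u$. To rule out cheap solutions that pay for each edge via an internal $w$-vertex rather than a shared endpoint, I would duplicate each gadget into $s+1$ vertex-disjoint parallel copies (each copy with its own fresh internal vertices), keeping only $u,v$ shared across copies. Any cover of size $\leq s$ is then forced to meet every gadget family through $u$ or $v$ by pigeonhole, pinning the solution inside $V$.

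The correctness argument has two directions. For the forward direction, any VC $C$ of $G$ with $|C|\le s$ remains a \kpc of $G'$: every gadget cycle passes through both endpoints $u,v$, so $C$ intersects it. For the backward direction, starting from a cover $C'$ with $|C'|\le s$, the pigeonhole argument applied to the $s+1$ parallel gadgets for each edge $\{u,v\}$ guarantees that $C'$ contains $u$ or $v$; restricting $C'$ to $V$ thus yields a VC of $G$ of size at most $s$. Since the construction is clearly polynomial in $|V|+|E|$ for any fixed $k\ge 3$, NP-hardness follows.

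The main obstacle, and where the proof needs the most care, is ruling out \emph{spurious} hop-constrained cycles in $G'$ that arise from combining pieces of several gadgets (e.g., chaining closing arcs $v \to u$ of several edges together with internal segments of other gadgets). Such cycles, if of length $\le k$, could in principle be covered by internal $w$-vertices and break the backward direction. Two remedies are viable: (i) orient the original edges of $G$ along an arbitrary linear order and place the closing arcs in the reverse direction so that any cycle leaving a gadget must traverse a ``back'' arc, making any multi-gadget cycle strictly longer than $k$; or (ii) pad the internal path inside each gadget so that any traversal that leaves and re-enters the shared vertex set already consumes more than $k$ hops. Either approach yields that every \kc in $G'$ is confined to a single gadget copy, at which point the two directions of the reduction go through cleanly.
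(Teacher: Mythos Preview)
Your reduction is sound once remedy~(i) is in place, but it is a genuinely different construction from the paper's. The paper fixes $k=3$, takes each undirected edge $\{u,v\}$ and attaches a single fresh vertex $u'$ adjacent to both $u$ and $v$, so that the only length-$3$ cycles through $u'$ also pass through $u$ and $v$. Instead of pigeonhole via $s{+}1$ parallel copies, the paper uses a \emph{domination/exchange} argument: since every constrained cycle through $u'$ already contains $u$ and $v$, any cover that uses $u'$ can replace it by $u$ (or $v$) at no cost, so without loss of generality the cover lives inside the original vertex set and is therefore a vertex cover. This is shorter and avoids the $(s{+}1)$-fold blow-up; on the other hand, your construction is cleaner in the directed setting (no bidirectional edges are introduced) and treats every fixed $k\ge 3$ uniformly rather than relying on the fact that $3$-cycles remain constrained cycles for larger $k$.

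One technical caution on your side: remedy~(ii) as stated does not by itself kill spurious cycles. Padding the internal paths only controls cycles that use at least one internal chain; it does nothing against short cycles built \emph{solely} from closing arcs $v\to u$ (for instance, a triangle $\{a,b,c\}$ in $G$ could yield a directed $3$-cycle on $\{a,b,c\}$ if the closing arcs happen to be oriented cyclically). Remedy~(i), orienting all closing arcs consistently with a linear order on $V$, is the one that actually works: closing arcs then form a DAG, so any cycle must use at least one internal chain of length $k-1$, and using two such chains already exceeds $k$ hops. With that fix your backward direction goes through exactly as you describe.
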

\begin{proof}
\rev{It reduces} an NP-complete problem, the decision version of vertex cover problem to our \kpc problem. Take note that the \kpc issue does not include self-loops or bidirectional edges. 
As a result, \rev{it sets} $k = 3$ for a \kpc issue on undirected graphs. Then, \rev{it} can reduce the traditional vertex cover issue to ours.

\rev{It adds} a virtual vertex $u'$ and two bidirectional edges $(u, u')$ and $(v, u')$ to each bidirectional edge $u, v$. Figure~\ref{fig:np_proof} illustrates the proof. The original graph is shown in Figure~\ref{fig:np_proof1}, whereas our constructed graph is shown in Figure~\ref{fig:np_proof2}. Take note that the virtual vertex $b'$ is dominated by the matching edge $(b,c)$, since $b'$ only participates in only single cycle. On the built graph, the \kpc problem is equivalent to the classical vertex cover on the original graph.
\end{proof}

\begin{figure}[htb]
	\vspace{-2mm}
	\newskip\subfigtoppskip \subfigtopskip = -0.1cm
	\newskip\subfigcapskip \subfigcapskip = -0.1cm
	\centering
     \subfigure[\small{Original graph}]{
     \includegraphics[width=0.45\linewidth]{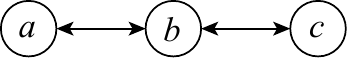}
	 \label{fig:np_proof1}
     }
     \subfigure[\small{Constructed graph}]{
     \includegraphics[width=0.45\linewidth]{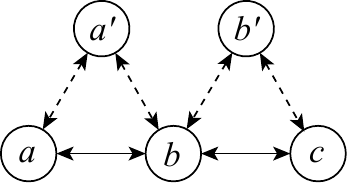}
	 \label{fig:np_proof2}
     }
\caption{The proof of NP-hardness.}
\label{fig:np_proof}
\end{figure}

After that, \rev{this work demonstrates} the inapproximation.

\begin{theorem}
For a given directed, unweighted graph $G$, approximating the \kpc problem (length from $3$ to $k$) within $(k-1-\epsilon)$ is UGC-hard (Unique Games Conjecture).
\end{theorem}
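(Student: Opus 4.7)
The plan is to derive this hardness from the edge-version UGC-hardness of Guruswami and Lee~\cite{guruswami2014}, which was already flagged in Section~\ref{subsect:feedbackVertexSet}: breaking all simple cycles of length at most $k$ by deleting edges cannot be approximated within $(k-1-\epsilon)$ under the Unique Games Conjecture. The task is then to give an approximation-preserving reduction from the edge version to our vertex version.

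First, I would recycle exactly the edge-to-vertex transformation that the paper already uses to define \darc-\text{DV}: given an instance $(G,k)$ of the edge version with $G=(V,E)$, build $G'=(V',E')$ where $V'=\{v_{u,w}:(u,w)\in E\}$ and $(v_{u,w},v_{w,x})\in E'$ whenever $(u,w),(w,x)\in E$. This is the line digraph of $G$. I would then establish the key structural lemma: there is a length-preserving bijection between simple directed cycles of $G$ and simple directed cycles of $G'$. The forward map sends a cycle $u_1\to u_2\to\cdots\to u_\ell\to u_1$ in $G$ to $v_{u_1,u_2}\to v_{u_2,u_3}\to\cdots\to v_{u_\ell,u_1}\to v_{u_1,u_2}$ in $G'$, which has the same number of vertices/edges $\ell$; the inverse is immediate from reading off the second coordinate of each $v_{\cdot,\cdot}$. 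In particular the length ranges $3\le\ell\le k$ on both sides match perfectly.

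Next I would verify that the reduction is approximation-preserving. Under the bijection above, an edge set $S\subseteq E$ intersects a cycle $c$ in $G$ if and only if the corresponding vertex set $S'=\{v_{u,w}:(u,w)\in S\}\subseteq V'$ intersects the corresponding cycle $c'$ in $G'$. Hence $S$ is a feasible edge cover of all cycles of length in $[3,k]$ in $G$ iff $S'$ is a feasible \kpc of $G'$, and $|S|=|S'|$. It follows that $\mathrm{OPT}_{\mathrm{edge}}(G,k)=\mathrm{OPT}_{\mathrm{vertex}}(G',k)$, and any $\alpha$-approximate vertex cover in $G'$ translates back into an $\alpha$-approximate edge cover in $G$ in linear time. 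An $(k-1-\epsilon)$-approximation for the vertex version would therefore yield an $(k-1-\epsilon)$-approximation for the edge version, contradicting~\cite{guruswami2014} under UGC.

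The main obstacle, as usual for these line-digraph reductions, lies in making the cycle bijection airtight: I have to rule out cycles in $G'$ that are not images of cycles in $G$ (which cannot occur because consecutive vertices $v_{\cdot,w}\to v_{w,\cdot}$ force a walk in $G$, and simplicity of the cycle in $G'$ forces simplicity in $G$) and to confirm that lengths coincide under the paper's convention that ``length of a circuit equals its number of vertices.'' Once the bijection is checked, the approximation-preservation argument is routine. A minor additional remark handles the degenerate case of self-loops and $2$-cycles by noting that the hop range starts at $3$ so those do not appear on either side.
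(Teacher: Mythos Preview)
Your approach differs from the paper's: the paper does not use the line-digraph reduction at all, but instead starts from the Guruswami--Lee hardness for the \emph{vertex} version on cycles of length $2$ to $k$ and shows that a $(k-1-\epsilon)$-approximation for the $[3,k]$ problem, combined with the trivial $2$-approximation for $2$-cycles, would yield a $(k-1-\epsilon)$-approximation for the $[2,k]$ problem. So the paper's work is entirely about handling the exclusion of $2$-cycles, not about passing between the edge and vertex formulations.

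More importantly, your ``key structural lemma'' is false as stated. Simple cycles in the line digraph $G'$ correspond to closed \emph{trails} (edge-simple closed walks) in $G$, not to simple cycles. Concretely, take $G$ on $\{1,2,3\}$ with edges $(1,2),(2,1),(1,3),(3,1)$. Then $v_{12}\to v_{21}\to v_{13}\to v_{31}\to v_{12}$ is a simple $4$-cycle in $G'$, but the ``read off the second coordinate'' map sends it to the walk $1\to 2\to 1\to 3\to 1$, which is not simple. For $k\ge 4$ this $G$ has no simple cycle of length in $[3,k]$, so $\mathrm{OPT}_{\mathrm{edge}}(G,k)=0$, yet $\mathrm{OPT}_{\mathrm{vertex}}(G',k)\ge 1$ because the $4$-cycle in $G'$ must be hit. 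Hence your asserted equality $\mathrm{OPT}_{\mathrm{edge}}(G,k)=\mathrm{OPT}_{\mathrm{vertex}}(G',k)$ fails, and with it the approximation-preservation step. Your final ``minor additional remark'' is exactly where the difficulty lives: the obstruction is precisely the presence of $2$-cycles (or self-loops) in $G$, which can be glued into longer simple cycles in $G'$.

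The argument can be repaired only if you can assume the Guruswami--Lee hard instances have no antiparallel edges or self-loops; under that assumption every closed trail of length $\ell\le k$ in $G$ decomposes into simple cycles each of length in $[3,k]$, which restores $\mathrm{OPT}_{\mathrm{vertex}}(G',k)\le\mathrm{OPT}_{\mathrm{edge}}(G,k)$. But that is an additional, nontrivial claim about the hard instances in~\cite{guruswami2014} that you would need to justify---and note that the result cited there is for the \emph{vertex} (feedback vertex set) formulation, not the edge formulation you start from.
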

\begin{proof}
Guruswami and Lee~\cite{guruswami2014} established the UGC-hard for the inapproximability of the feedback set for bounded cycles. Our issue is with the version that excludes self-loops and cycles with a length of $2$. Due to the fact that self-loops have a single vertex, they could be included in the result set. \rev{As a consequence, it places a premium on excluding \rrev{the} $2$-cycle.} \rev{The formal proof of inapproximability when self-loops are included is similar to the proof of $2$-cycle exclusion}. It is proved by contradiction. Assume that a $(k-1-\epsilon)$ approximation method exists for covering all \rrev{cycle lengths} between $3$ and $k$. 

$S(G,3,k)$ denotes the approximation method, with $G$ denoting the graph, and $3$ and $k$ denoting the length is between $3$ and $k$. $| S(G,3,k) | \leq (k-1-\epsilon) |Opt(G,3,k)|$, where $S$ is the approximation algorithm and $Opt$ is the optimal algorithm for cycles in graph $G$ lengths ranging from $3$ to $k$. Furthermore, $|Opt(G,2,k)| \geq |Opt(G,2,2) + Opt(G-(Opt(G,2,2)),3,k)|$. There is a simple $2$-approximation method for $2$-cycle by selecting all the vertices on them, i.e., $S(G,2,2) \leq 2|Opt(G,2,2)|$.
\reat{
Then, \rev{it} could combine $S(G,2,2)$ and $S(G-S(G,2,2),3,k)$.
\begin{align}
\notag
S &= |S(G,2,2)+ S(G-S(G,2,2),3,k)| \\
\notag
	&\leq 2 |Opt(G,2,2)| + (k-1-\epsilon)|Opt(G-S(G,2,2),3,k)|\\ 
\notag
	&\leq (k-1-\epsilon)|Opt(G,2,k)|  (k \geq 3)\\
\end{align}
Note that \rev{it chooses} $S(G, 2, 2)$ as the set of all vertices in 2-cycles.}

All the cycles could be divided into three types.
\begin{itemize}
    \item $C_2$ : $2$-cycle (length equals to $2$) and no intersection with $k$-cycles ($k \geq 3$).
    \item $C_3$ : $k$-cycles and no intersection with $2$-cycle.
    \item $C_{23}$ : $2$-cycles intersect with $k$-cycles ($k \geq 3$) and $k$-cycles intersect with $2$-cycles.
\end{itemize}
\rev{It divides} all the vertices on cycles into three categories.
\begin{itemize}
    \item $V_2$ : only appears on the $C_2$.
    \item $V_3$ : only appears on the $C_3$. Note that \rev{it only includes} the $k$-cycles, where all the vertices on it are in $V_3$. If a $k$-cycle intersection with a $2$-cycle ($C_{23}$), \rev{it only selects} the intersection part into $V_{23}$. The remaining part would be covered because our approximation algorithm would select all the intersection vertices.
    \item $V_{23}$ : appears on the intersection part of $2$-cycles and k-cycles ($k \geq 3$). For example, if vertex $v$ is in both a $2$-cycle and a $k$-cycle, then it is in the $V_{23}$.
\end{itemize}
Similarly, $Opt(G,2,k)$ could be divided into $V^{Opt}_2$, $V^{Opt}_3$, and $V^{Opt}_{23}$. $V^{Opt}_2$ is the intersection of $Opt(G,2,k)$ and $C_2$.
For $V^{Opt}_3$,
\begin{align}
\notag
S(V_3, 3, k) &\leq (k- \epsilon) Opt(V_3, 3, k)\\
\notag
                  &\leq (k- \epsilon)V^{Opt}_3 \\
\end{align}
For $V^{Opt}_2$, \rev{there is} a trivial $2$-approximation algorithm. Thus, 
\begin{align}
S(V_2, 2, 2) \leq 2 Opt(V_2,2,2) \leq V^{Opt}_2
\end{align}
For $V^{Opt}_{23}$, \rev{it selects all of} them, and it is not hard to prove that $S(V^{Opt}_{23}, 2, k) = V_{23} \leq 2 Opt(V^{Opt}_{23}, 2, k)$. The proof is similar to the approximation of $V_2$.
\rev{It} could also divide these cycles into $2$-cycle and $k$-cycle, while $V_{23}$ is the $2$-approximation solution of $2$-cycle part. The Optimal solution of all cycles is no less than the $2$-cycle part.

Since the limit of $\epsilon$ is close to $0$, 
\begin{align}
\notag
S&(V_2, 2, 2) + S(V_3, 3, k) + S(V_{23}, 2 , k) \\
\notag
  &\leq 2 Opt(V_2,2,2) + (k - \epsilon) Opt(V_3, 3, k) + V_{23} \\
\notag
  &\leq 2 V^{Opt}_2 + (k- \epsilon) V^{Opt}_3 + 2 V^{Opt}_{23} \\
\notag
  &\leq (k - \epsilon) Opt(G,2,k) \\
\end{align}
This violates the inapproximability for $Opt(G,2,k)$ within $(k-1-\epsilon)$.
\end{proof}

\section{Bottom-Up Approach}
\label{sect:rAda}
This section introduces our bottom-up \kpc algorithm, which is designed to minimize the size of the cover.

\subsection{Motivation}
\label{subsec:AdaMotivation}
This subsection discusses the rationale for the \textit{bottom-up} \kpc algorithm.

Given the \textit{NP-hardness} \rev{of this problem}, it is typical to employ the greedy heuristic. That is to determine the best cover vertex iteratively, i.e., the vertex that covers the largest number of uncovered cycles in the current iteration. Nonetheless, determining the optimal cover vertex requires enumerating all the \kcs, which are prohibitively difficult in terms of time and space complexities. The enumerating time complexity $O(2^n \times cost_{c} )$ is prohibitively expensive, while $cost_c$ is the cost for the check. 

The more \kcs a vertex covered in the previous iterations, the more probable it will cover additional cycles in the remaining graph. \rev{Based on this motivation, a heuristic greedy algorithm is proposed based on the number of cycles the vertices covered.}

\begin{example}
A motivational scenario is depicted in Figure~\ref{fig:adp_motivation}. $C$ is the graph's center vertex. Assume that in the first iteration, \rev{a cycle $A \rightarrow B \rightarrow C$ is found and it adds $A$ to the cover set by random.} $C$ \rev{will} be chosen in the second iteration since it occurred in the preceding cycle and is therefore more likely to cover more cycles.
\end{example}
\rev{Based on this example, a bottom-up \kpc method is proposed.}
The key \rev{idea} is that when discovering a cycle during the search, \rev{it records} hit-times ($\mathcal{H}$) of all the vertices on it and \rev{chooses} the one with the highest hit-times ($\mathcal{H}$).

\begin{figure}[htb]
	\centering

     \includegraphics[width=0.4\linewidth]{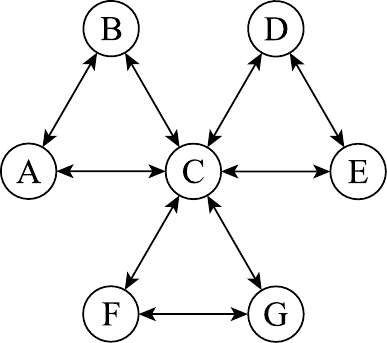}

\caption{\small A motivation example of the bottom-up approach.}
\label{fig:adp_motivation}
\end{figure}

\subsection{The Bottom-Up Approach}
\label{subsec:Adaptive}
The greedy method is an efficient solution for solving the \kpc problem. The key principle is that for each iteration, we select the vertex that covers the most \kcs.
Due to the difficulty of enumerating all the \kcs, a heuristic algorithm is proposed to solve it.

\begin{algorithm}[htb]
\SetVline
\SetFuncSty{textsf}
\SetArgSty{textsf}
\small
\caption{ \textsc{Bottom-Up}($G,k$)}
\label{alg:adp}
\State{$\mathcal{R}$ $\leftarrow$ $\emptyset$}
\State{$\mathcal{H}[v] \leftarrow 0$, for {\bf each} $v$ $\in$ $G$}
\label{adp:init}
\For{{\bf each} $v_i$  $\in$ $V$}
{
\label{adp:for}
	\State{$c$ $\leftarrow$ \textsc{FindCycle}($G,k,\emptyset,v_i$)}
	\label{adp:findC}
	\While{$c \neq \emptyset$}
	{
	\label{line:while_nempty}
	\For{\bf{each} $v$ $\in$ $c$}
	{
		\label{line:updateHT}
		\State{$\mathcal{H}$[c] $\leftarrow$ $\mathcal{H}$[c]+1}
	}
		\State{$u$ $\leftarrow$ \textsc{FindCoverNode}($v_i,\mathcal{H}, c$)}
		\label{adp:findNode}
		\State{Insert $u$ into $R$}
		\label{adp:iMinus}
		\State{Remove in-edges and out-edges of $u$ $\in$ $G$}
		\label{line:removeEdges}
	\State{$c$ $\leftarrow$ \textsc{FindCycle}($G,k, \emptyset, v_i$)}
	\label{adp:findC2}
	}
}
\State{Return $\mathcal{R}$}
\end{algorithm}
Algorithm~\ref{alg:adp}'s \rev{main idea} is to \rev{find} a cover vertex \rev{with} the $\mathcal{H}$ array. \rev{ Line~\ref{adp:init} initializes} $\mathcal{H}[v]$ to $0$ for each vertex $v$ on graph $G$.
Then, Line~\ref{adp:for} is a for-loop that iterates over all vertices in $G$. \rev{Lines~\ref{adp:findC} and~\ref{adp:findC2} attempt to identify} a \kc $c$ beginning at $v$.
The $\mathcal{H}$ array is updated for each vertex on $c$. Line~\ref{adp:findNode} specifies that \rev{it selects one vertex $u$ from $c$}, and \rev{eliminates} all of $u$'s associated edges in Line~\ref{line:removeEdges}.
Whenever $c \neq \emptyset$ in Line~\ref{line:while_nempty}, the algorithm continues the procedure.

Algorithm~\ref{alg:findCycle} employs a recursive approach to \rev{find} a \kc \rev{starting from the vertex} $v$. The graph $G$ is a reduced graph, since it has no vertex in the \rev{current} result set $R$. The key \rev{point} is to identify a \kc using a DFS method. 
Line~\ref{findC:true} demonstrates that this method identifies a valid \kc, with the condition $\mathcal{CD} > 0$ indicating that self-loops should be avoided. Line~\ref{findC:false} illustrates the situation \rev{where} the algorithm fails to locate a valid \kc. Line~\ref{findC:out} investigates all of $v$'s out-neighbors. Then, Lines~\ref{findC:push} and~\ref{findC:pop} are to recursively push and pop all $v$'s out-neighbors.


Algorithm~\ref{alg:findCoverNode} employs $\mathcal{H}$ to determine the cover vertex. Line~\ref{findCN:nodeOrder} attempts to locate the vertex with the maximum $\mathcal{H}$.



\begin{algorithm}[htb]
\SetVline
\SetFuncSty{textsf}
\SetArgSty{textsf}
\small
\caption{ \textsc{FindCycle}($G,k,\mathcal{CP},v$)}
\label{alg:findCycle}
\State{$\mathcal{C}$ $\leftarrow$ $\emptyset$}
\StateCmt{$\mathcal{CD}$ $\leftarrow$ len($\mathcal{CP}$)}{the current distance}
\If{$\mathcal{CD}$ $>$ 0 $\land$ $v = \mathcal{CP}[0]$}
{
\label{findC:true}
	\State{Return $\mathcal{CP}$}
}
\If{$\mathcal{CD}$ $>$ k}
{
\label{findC:false}
	\State{Return $\emptyset$}
}
\State{$\mathcal{CP}$.pushBack($v$)}
\label{findC:push}
\For{{\bf each} vertex $u$ of $adj_{out}[v]$ on $G$}
{
\label{findC:out}
	\State{$\mathcal{C}$ $\leftarrow$ \textsc{FindCycle}(G,k,$\mathcal{CP}$,$u$)}
	\If{$C \neq \emptyset$}
	{
		\State{Return $\mathcal{C}$}
	}
}
\State{$\mathcal{CP}$.pop()}
\label{findC:pop}
\State{Return $\mathcal{C}$}
\end{algorithm}


\begin{algorithm}[htb]
\SetVline
\SetFuncSty{textsf}
\SetArgSty{textsf}
\small
\caption{ \textsc{FindCoverNode}($v,\mathcal{H},c$)}
\label{alg:findCoverNode}
\State{$\mathcal{H}_{max} = \mathcal{H}[v_0]$, $\mathcal{CN} = v_0$ }
\For{\bf{each} $v$ $\in$ $c$}
{
\label{findCN:nodeOrder}
	\If{$\mathcal{H}[v]> \mathcal{H}_{max}$}
	{
		\State{$\mathcal{H}_{max} = \mathcal{H}[v]$}
		\State{$\mathcal{CN}=v$}
	}
}
\State{Return $\mathcal{CN}$}
\end{algorithm}

\vspace{1mm}
\noindent\textbf{Correctness.}~\rev{Since Algorithm~\ref{alg:adp} traverses all the vertices in $V$ and increments the result set by one vertex until no new constrained cycle is detected, it is self-evident that the induced graph (which is constructed by eliminating all the vertices from the cover set) contains no hop-constrained cycles.}
\eat{
Assume that there is a \kc $c_0$, which is initially explored by Algorithm~\ref{alg:adp} with vertex $v_c$.
When we explore vertex $v_c$ for the first time, the $FindCycle$ produces a \kc $c_i$. The $c_i$ could be $c_0$ or not. There are two cases.
\begin{itemize}
\item If $c_i = c_0$, our algorithm covers $c_0$ in the subsequent steps.
\item If $c_i \neq c_0$, we choose a cover vertex $CN$. 
\end{itemize}
Additionally, there are two scenarios depending on whether $CN=v_c$ or not if $c_i \neq c_0$.
\begin{itemize}
\item If $CN = v_c$, the $c_0$ is covered by vertex $v_c$.
\item If $CN \neq v_c$, we will continue to check vertex $v_c$, in accordance with Line~\ref{adp:iMinus} Algorithm~\ref{alg:adp}.
\end{itemize}
Thus, the $c_0$ will be covered by the result set of Algorithm~\ref{alg:adp}.
}

\vspace{1mm}
\noindent\textbf{Time and Space Complexities.}~Algorithm~\ref{alg:adp} employs an array of $\mathcal{H}$ and a $k$-step DFS procedure. As a result, the space complexity is $O(m)$. \rev{Algorithm~\ref{alg:adp} Line~\ref{adp:for} needs $n$ iterations.} Each iteration contains three \rev{steps}: $FindCycle$, $UpdateH$, and $FindCoverNode$. Both $UpdateH$ and $FindCoverNode$ \rev{takes} a time complexity of $O(k)$ time complexity, \rev{due to the fact that} they include a for-loop on \kc $c$. $FindCycle$ takes $O(n^k)$, \rev{since} it is a DFS algorithm that determines whether there is a \kc within $k$ steps. \rev{Thus}, the overall time complexity is $O(n^{k+1})$. 

Nonetheless, the practical performance is acceptable, due to the following reasons.
To begin, $FindCycle$ could early terminate when it finds a \kc. It is not necessary to locate all the \kcs.
Second, each time we choose a cover vertex, the in-edges and out-edges of it would be eliminated from the graph $G$. Thus, the graph is becoming smaller and smaller.

\subsection{The Minimal Pruning Algorithm}
\label{subsec:minimal}
\rev{This subsection proposes a minimal pruning method s.t. it can further reduce the \kpc to a \textit{minimal} result set.} \eat{The main idea is to remove every \rev{possible} vertex that can be eliminated until the minimal one.}

\begin{algorithm}[htb]
\SetVline
\SetFuncSty{textsf}
\SetArgSty{textsf}
\small
\caption{ \textsc{FindMinimalCover}($G,k,R$)}
\label{alg:minimal}
\For{\bf{each} $v$ $\in$ $\mathcal{R}$}
{
\label{lmin:for}
	\State{$c$ $\leftarrow$ \textsc{FindCycle}($G-R+(v),k, \emptyset,v$)}
	\label{lmin:findC}
	\If{$c = \emptyset$ }
	{
	\label{lmin:update}
		\State{Remove $v$ from $\mathcal{R}$}
		\label{lmin:remove}
	}
}
\State{Return $\mathcal{R}$}
\end{algorithm}

Algorithm~\ref{alg:minimal} Line~\ref{lmin:for} verifies each vertex in the $R$, which is the \kpc generated by Algorithm~\ref{alg:adp}. \rev{Line~\ref{lmin:findC} attempts to find a \kc inside $G-R+(v)$.} Nota bene, \rev{in Algorithm~\ref{alg:minimal}}, the input graph $G-R+(v)$ is the reduced graph, which has no vertex in $R$ except for the vertex $v$. Line~\ref{lmin:update} determines if there exists a \kc for $v$. Otherwise, $v$ will be deleted from the $R$.

Theorem~\ref{theorem:min} establishes that Algorithm~\ref{alg:minimal} provides a minimal solution of \kpc.

\begin{theorem}
\label{theorem:min}
Algorithm~\ref{alg:minimal} \rev{returns} a feasible and minimal \kpc $R$.
\end{theorem}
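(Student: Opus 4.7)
The plan is to establish the two claims of the theorem, feasibility and minimality, as separate inductive arguments over the iterations of the for-loop in Algorithm~\ref{alg:minimal}. Let $R_0$ denote the cover passed in from Algorithm~\ref{alg:adp}, and let $R_i$ denote the state of $R$ immediately after the $i$-th vertex has been examined. Because a vertex is either dropped or retained, the sequence $R_0 \supseteq R_1 \supseteq \cdots \supseteq R_{\text{final}}$ shrinks monotonically; this monotonicity will be the key structural fact driving both parts.

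For feasibility, I would maintain the invariant that $R_i$ is a \kpc of $G$. The base case holds by the correctness of Algorithm~\ref{alg:adp}. For the inductive step, suppose $v$ is examined at iteration $i+1$ and \textsc{FindCycle} returns $\emptyset$ on the graph $G - R_i + \{v\}$ starting at $v$. Any constrained cycle $C$ of $G$ is still covered by $R_i \setminus \{v\}$: if $C$ does not contain $v$, the induction hypothesis gives some covering vertex, which necessarily lies in $R_i \setminus \{v\}$; if $C$ contains $v$, then $C$ must also contain some other vertex of $R_i$, for otherwise $C$ would appear as a constrained cycle through $v$ inside $G - R_i + \{v\}$, contradicting the outcome of \textsc{FindCycle}.

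For minimality, I would show that every $v \in R_{\text{final}}$ is irreplaceable. Fix such a $v$ and consider the iteration $i+1$ at which it was examined. Since $v$ was kept, \textsc{FindCycle} returned a non-empty constrained cycle $C$ through $v$ inside $G - R_i + \{v\}$; in particular, $C$ uses no vertex of $R_i$ other than $v$. By monotonicity, $R_{\text{final}} \setminus \{v\} \subseteq R_i \setminus \{v\}$, so $C$ still uses no vertex of $R_{\text{final}}$ other than $v$. Hence removing $v$ from $R_{\text{final}}$ leaves $C$ uncovered, so $R_{\text{final}}$ is minimal.

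The main obstacle is the potential order-dependence of the greedy deletion: one might worry that dropping some $v_j$ later makes an earlier-kept $v_i$ redundant in retrospect, so that a single-pass scan could miss further reductions. The monotonicity argument above resolves this cleanly, since a witness cycle certified at the time $v$ is examined remains a witness against every later, smaller state of $R$, so no decision must be revisited. A secondary subtlety is to confirm that \textsc{FindCycle} on input $(G - R_i + \{v\}, k, \emptyset, v)$ indeed reports a constrained cycle through $v$ whenever one exists in that subgraph; this follows from Algorithm~\ref{alg:findCycle}, whose DFS is rooted at $v$ and returns the current path only upon re-encountering $v$ within $k$ hops.
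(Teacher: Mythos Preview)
Your proposal is correct and follows essentially the same approach as the paper: feasibility is preserved because a vertex is dropped only when \textsc{FindCycle} certifies no constrained cycle through it survives in the current reduced graph, and minimality follows from the witness-cycle-plus-monotonicity argument you spell out (the paper phrases this as $c_w \cap (R_f - \{v\}) \subseteq c_w \cap (R - \{v\}) = \emptyset$). Your version is somewhat more carefully structured, making the induction over iterations and the case split for feasibility explicit, but the underlying ideas coincide.
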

\begin{proof}
Take note that Algorithm~\ref{alg:adp} generates the input vertex set $R$. $R$ is a viable \kpc, as \rev{its correctness has been proven}. Algorithm~\ref{alg:minimal} \rev{removes a vertex $v$ from $R$} only if the reduced $G$ contains no \kc. As a consequence, the \rev{result} set $R$ \rev{includes} all the \kcs after the termination. 

\rev{As for the minimality property}, if Algorithm~\ref{alg:minimal} does not prune each vertex $v$, there will exist a witness \kc $c_w$, where $c_w \cap (R - \{v\}) = \emptyset$. Given that the result set is $R_f \subset R$, then  $c_w \cap (R_f - \{v\}) = \emptyset$. Thus, if any vertex $v$ in the final result set $R_f$ is deleted, no vertex in $R_f - \{v\}$ will cover the witness \kc $c_w$. Thus, $R_f$ is a \kpc of $G$ that is both feasible and minimal.

\end{proof}

\vspace{1mm}
\noindent\textbf{Time and Space Complexities.}~\rev{Since there is no index and $FindCycle$ is a DFS algorithm, its space complexity is $O(m)$.} Following that, \rev{the time complexity is investigated}. Line~\ref{lmin:for} of Algorithm~\ref{alg:minimal} contains only one for-loop. Given that the size of $R$ is no larger than $n$, Line~\ref{min:for} requires at most $n$ iterations. In the worst-case scenario, the procedure $FindCycle$ requires $O(n^k)$. As a result, the overall time complexity is $O(n^{k+1})$.

\section{Top-Down Algorithm}
\label{sect:tpd}
\rev{This section presents the Top-Down method in this section, which aims to improve the efficiency.}


\subsection{Motivation}
\label{subsect:tpd_motivation}
\textit{Why Costly?}~The most expensive aspect of the \kpc problem is the repeated usage of the constrained cycle search.
This paper accelerates it from two aspects.
 \begin{itemize}
 \setlength{\itemindent}{0em}
\item \textit{Decrease the Search Space.}~The purpose of researching top-down algorithms is to reduce search space. We must verify all of the vertices. The search spaces in the bottom-up method range from the whole graph $G$ to the graph $G-R$, where $R$ is the cover set. In the $top-down$ algorithm, the search areas would range from $\emptyset$ to $G-R$. As a result, the top-down approach has the potential to substantially shrink the search space.
\item \textit{Increase the Speed of the Cycle Search Function.}~\rev{Since the cycle search function is frequently utilized in the whole process, it is one of the bottlenecks. Thus, this work proposes a delicate block-based and BFS-filter-based method to accelerate this process.}
 \end{itemize}

\subsection{\tpd Algorithm Description}
\label{subsect:al_des}

The \rev{key idea} of \tpd algorithm is \rev{different} from that of the bottom-up algorithm. The method begins with an empty graph $G_0$ and a full cover set with all vertices in it. Then, \rev{it evaluates} each vertex $v$ in it. It determines whether or not to remove $v$ from the result set. \rev{If $true$, all in-edges and out-edges of $v$ are inserted into $G_0$}.
\rrev{The cycle validation algorithm is adapted from~\cite{peng2019towards}. The general idea is as follows: It conducts a DFS search to validate whether there is a cycle including the query vertex. In the DFS search, for each vertex $u$, if it has been searched before and we can guarantee that it is not included in a cycle with a certain length threshold, we can avoid searching $u$. The length threshold is at most $t$, and it is maintained during the search. In this way, the threshold value is updated at most $k$ times and every time it explores at most $m$ edges. Therefore, the \tpd algorithm runs in $O(kmn)$ time, where $n$ is the number of vertices to be validated.}



\rev{Algorithm~\ref{alg:top_down} Line~\ref{line:td_init} initializes the graph for verifying the node necessary.}
Line~\ref{min:for} is a for-loop that verifies all of the cover set's vertices.
Lines~\ref{line:tp_insert} and~\ref{min:findC} attempt to insert all of the vertex $v$'s edges and determine whether there \rev{exists} a constrained cycle.
If not, the vertex $v$ \rev{is deleted} from the cover set in Line~\ref{min:remove}.
Otherwise, vertex $v$ is \rev{maintained}, but all of its edges are removed in Line~\ref{min:delete}.
\begin{algorithm}[htb]
\SetVline
\SetFuncSty{textsf}
\SetArgSty{textsf}
\small
\caption{ \textsc{Top-Down}($G,k,R$)}
\label{alg:top_down}
\State{$G_0 \leftarrow \emptyset$, $\mathcal{R} \leftarrow G$ }
\label{line:td_init}
\For{\bf{each} $v$ $\in$ $\mathcal{R}$}
{
\label{min:for}
	\State{Insert all in-edges and out-edges of $v$ into $G_0$}
	\label{line:tp_insert}
	\State{$c$ $\leftarrow$ \textsc{FindCycle}($G_0,k, v$)}
	\label{min:findC}
	\If{$c = \emptyset$ }
	{
	\label{min:update}
		\State{Remove $v$ from $\mathcal{R}$}
		\label{min:remove}
	}
	\Else
	{
		\State{Delete all in-edges and out-edges of $v$}
		\label{min:delete}
	}
}
\State{Return $\mathcal{R}$}
\end{algorithm}

\subsection{\rev{Node Necessary Validation}}
\label{subsec:nodeNec}
A frequent operation is to \rev{validate whether a vertex $v$ is in a constrained cycle in the current graph $G_0$.}
A straightforward method is modified DFS \rev{BFS}. \rev{As for modified BFS}, Figure~\ref{fig:BFScts} \rev{demonstrates} counter-examples.
\begin{example}
\label{exam:mBFS}
Specifically, by executing BFS from vertex $v$ and assigning a new color to each neighbor. When \rev{it explores} a vertex further and \rev{locates} a vertex with a neighbor of a different color for the first time, \rev{it} discovered a shortest cycle through $v$.
Nevertheless, when beginning from vertex $a$, the modified BFS was unable to distinguish between Figure~\ref{fig:bfs_ct1} and~\ref{fig:bfs_ct2}. When the modified BFS is employed, $b$ and $c$ are marked as visited during the first iteration. Then, in the third iteration, \rev{it} returned to the edge $(d, c)$ of $c$. In such an instance, it is unable to determine if a constrained cycle begins at $a$ and produces the cycle $a \rightarrow c$ in Figure~\ref{fig:bfs_ct1}.
\end{example}


\begin{figure}[htb]
	\newskip\subfigtoppskip \subfigtopskip = -0.1cm
	\newskip\subfigcapskip \subfigcapskip = -0.1cm
	\centering
     \subfigure[\small{Counter-example 1.}]{
     \includegraphics[width=0.45\linewidth]{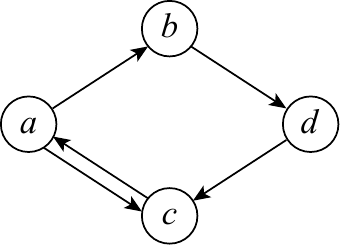}
	 \label{fig:bfs_ct1}
     }
     \subfigure[\small{Counter-example 2.}]{
     \includegraphics[width=0.45\linewidth]{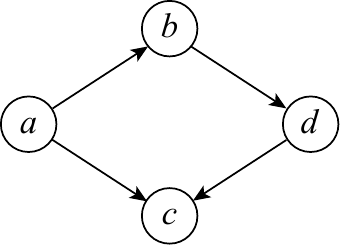}
	 \label{fig:bfs_ct2}
     }
\caption{Counter-examples for the modified BFS.}
\label{fig:BFScts}
\end{figure}

To accelerate \rev{\textit{Node Necessary Validation}}, an $O(km)$ time complexity method is proposed, \rev{which is inspired by the barrier technique in~\cite{peng2019towards}}. Firstly, the $block$ for a given vertex $u$ during the search procedure \rev{is formally defined}. \rev{It could be regarded as} the lower bound of $dis(u, s)$ in the current stack $S$. $s$ is the starting vertex. \rev{The $block$ is utilized to prune unnecessary candidates.}
\begin{definition}($u.block$)
\label{def:cor_blocks}
For a given vertex $u$, $u.block$ is \textit{correct} if and only if
given the current stack $\mathcal{S}$, if there is a path $p(u \rightarrow s)$, not containing any vertex in $\mathcal{S}$, we have $len(p) \geq u.block$, i.e., $sd(u,s| \mathcal{S}) \geq u.block$.
\end{definition}


\begin{figure}[htb]
	\centering

     \includegraphics[width=0.6\linewidth]{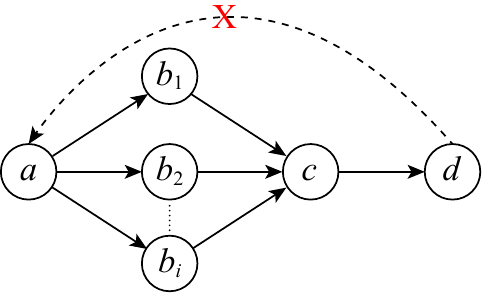}

\caption{\small An example of the block idea.}
\label{fig:ex_block}
\end{figure}

\rev{An example of the \textit{block} is illustrated in Figure~\ref{fig:ex_block}}.
\begin{example}
Assume \rev{vertex $a$ is the starting vertex with $k = 5$}. \rev{With a DFS search, it validates} whether there \rev{exists} a constrained cycle \rev{containing} vertex $a$.
\rev{In the first iteration, path} $a \rightarrow b_1 \rightarrow c \rightarrow d$ \rev{is checked}. \rev{In this path,} vertex $c$ cannot reach vertex $a$ in $5-2=3$ hops.
\rev{Then, $c.block$ is set to $3+1=4$.} This block information may be utilized in the subsequent DFS exploration, e.g., $a \rightarrow b_2 \rightarrow c \rightarrow d$. It may end prematurely when exploring $a \rightarrow b_i \rightarrow c$. \rev{In this path, there are $5-2=3$ remaining \rev{depths} for DFS, which is smaller than $c.block$.}
\end{example}

\rev{The idea of $block$ is to utilize the failure information to prune invalid search space. The algorithm is formally described as follows:} \rev{$S$ is used to denote the stack of the currently explored path}. $S_{old}$ indicates the past exploration path, whereas $S_{new}$ denotes the current exploration path.
\rev{Algorithm~\ref{alg:node_blocks} illustrates the whole process.}

	\begin{algorithm}[htb]
\SetVline
\SetFuncSty{textsf}
\SetArgSty{textsf}
\small
\caption{ \textsc{NodeNecessary}($s$, $u$, $\mathcal{S}, R,G'$)}
\label{alg:node_blocks}


\If{$\mathcal{R} \neq \emptyset$}
{
	\label{line:block_trueR}
	\StateCmt{Return R}{Vertex $s$ is necessary}
}

\State{$u.block \leftarrow  k - len(\mathcal{S})+1$ }
\label{line:blockLevel}
\State{ $\mathcal{S}$.push($u$)}
\If {$u=s$ $\land$ $len(\mathcal{S}) \geq 2$}
{
  \label{line:find_true}
  \State{ $u$ is unstacked from $\mathcal{S}$}
  \State{\textsc{Unblock}($\mathcal{S}.top()$, $\mathcal{S}$, $1$)}
  \If{$len(\mathcal{S}) > 2$}
  {
  	\label{line:largerTwo}
  \State{insert $p(\mathcal{S})$ into $\mathcal{R}$}
  \label{alg:node_blocks_output}
  \Return{ $\mathcal{R}$}
  }
}
\ElseIf{ $len(\mathcal{S}) < k$}
{

  \For{ vertex $v$ of $adj_{out}[u]$ where $v \not \in (\mathcal{S}-\{ s \})$ }
  {
      \If{ $len(\mathcal{S})+1+v.block \leq k$ }
      {
      \label{line:blockRule}
          \State{ $\mathcal{R}$ $\leftarrow$ \textsc{NodeNecessary}($s$,$v$, $\mathcal{S}, \mathcal{R},G'$)}
          \If{$\mathcal{R} \neq \emptyset$}
	{
	\label{line:block_trueR2}
		\State{Return $\mathcal{R}$}
	}
      }
  }
}
\State{ $u$ is unstacked from $\mathcal{S}$ }
\label{alg:node_blocks_unstack}
\Return {$\mathcal{R}$}

\end{algorithm}

\begin{algorithm}[htb]
\SetVline
\SetFuncSty{textsf}
\SetArgSty{textsf}
\small
\caption{ \textsc{Unblock}($u$,$\mathcal{S}$, $l$)}
\label{alg:unblock}
\State{ $u.block = l$ }
    \For{ {\bf each} vertex $v$ of $adj_{in}[u]$ with $v \not \in \mathcal{S}$}
    {
    \If { $v.block > l+1$ }
	{
        	\State{ \textsc{Unblock}($v$, $\mathcal{S}$, $l + 1$)}
           }
    }
\end{algorithm}


\noindent\textbf{\rev{Details}.}~\rev{Algorithm~\ref{alg:node_blocks} Line~\ref{line:block_trueR} terminates} the recursive algorithm \rev{if} a valid constrained cycle is \rev{found}.
Line~\ref{line:blockLevel} initializes the block \rev{value} associated with the current vertex $u$ \rev{to $k - len(\mathcal{S} + 1)$}.
Line~\ref{line:find_true} is the condition \rev{for} a valid \kc. 
Line~\ref{line:largerTwo} determines whether it is a \rev{valid} constrained cycle. 

If $true$, the cycle \rev{will} be \rev{inserted} into the result set $\mathcal{R}$.
\rev{Line~\ref{line:blockRule} is the condition for blocking.} When $len(\mathcal{S})+1+v.block > k$, vertex $v$ is \rev{blocked}. \rev{The worst-case time complexity is $O(km)$ due to the block level of each vertex ranging from $0$ to $k$.}
\rev{Algorithm~\ref{alg:unblock} iteratively updates the block values for the vertices whose block is larger than $l$.}

\noindent\textbf{Modification to Cycle Cover without Constraints}.~\rev{To cope with the variant without hop constraint, we only need to modify the node necessary function. The modification can be summarized as the following steps}:
    \begin{itemize}
        \item \rev{i) Replace $u.block \leftarrow  k - len(\mathcal{S})+1$ to $u.block \leftarrow \infty$ in Line 3 Algorithm 9.}
        \item \rev{ii) Remove condition $len(\mathcal{S}) < k$ in Line 11 Algorithm 9.}
        \item \rev{iii) Replace $len(\mathcal{S})+1+v.block \leq k$ with $v.block \neq \infty$ in Line 13 Algorithm 9.}
    \end{itemize}


\subsection{Analysis}
\label{subsec:ana_blocks}
This part \rev{proves} the correctness of Algorithm~\ref{alg:node_blocks}. 
To begin, \rev{the} condition \rrev{of} \rev{the correct} $u.block$ \rev{value} \rev{is given}.
\begin{lemma}
\label{lemma:cor_blocks}
\rev{$u.block$ is \textit{correct} iff given the stack $\mathcal{S}$, there is a path $p(u \rightarrow s)$ without any vertex in $\mathcal{S}$, $len(p) \geq u.block$, i.e., $sd(u, s| \mathcal{S}) \geq u.block$.}
\end{lemma}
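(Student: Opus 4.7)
The plan is to recognize that this lemma is essentially the equivalence between the universal statement in Definition 2 (``every $\mathcal{S}$-avoiding path $p$ from $u$ to $s$ satisfies $len(p) \geq u.block$'') and the infimum statement $sd(u, s | \mathcal{S}) \geq u.block$. Since $sd(u, s | \mathcal{S})$ is defined as the minimum of $len(p)$ over all $\mathcal{S}$-avoiding paths $p$ from $u$ to $s$, the two formulations are logically equivalent, and the proof reduces to a short bookkeeping exercise that unpacks the quantifiers.

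For the forward direction I would assume $u.block$ is correct per Definition 2. If an $\mathcal{S}$-avoiding path from $u$ to $s$ exists, let $p^*$ be a shortest one; then by the correctness hypothesis, $sd(u, s | \mathcal{S}) = len(p^*) \geq u.block$. If no such path exists, I adopt the standard convention $sd(u, s | \mathcal{S}) = \infty$, under which the inequality holds vacuously for any finite $u.block$.

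For the reverse direction I would assume $sd(u, s | \mathcal{S}) \geq u.block$ and take an arbitrary $\mathcal{S}$-avoiding path $p$ from $u$ to $s$. By the definition of shortest distance, $len(p) \geq sd(u, s | \mathcal{S}) \geq u.block$. Since $p$ was arbitrary, the universal condition of Definition 2 is met, and $u.block$ is correct.

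The lemma itself does not require any reasoning about the dynamics of Algorithm~\ref{alg:node_blocks} or Algorithm~\ref{alg:unblock}; that invariance argument is a separate obligation that will come in subsequent lemmas. The only genuine subtlety here is the vacuous case where no valid path exists from $u$ to $s$ avoiding $\mathcal{S}$, which I would dispose of explicitly by fixing the convention $sd(u, s | \mathcal{S}) = \infty$. Beyond this, the main ``obstacle'' is purely notational, since the lemma is, by design, a convenient reformulation of Definition 2 that subsequent arguments will invoke directly when justifying the pruning rule $len(\mathcal{S}) + 1 + v.block \leq k$ in Line~\ref{line:blockRule} of Algorithm~\ref{alg:node_blocks}.
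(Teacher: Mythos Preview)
Your reading is correct: Lemma~\ref{lemma:cor_blocks} is nothing more than a restatement of Definition~\ref{def:cor_blocks}, which already contains the ``i.e., $sd(u,s|\mathcal{S}) \geq u.block$'' clause, and the paper accordingly provides no separate proof for it. Your short unpacking of the quantifiers (universal over paths versus infimum over path lengths) and the explicit treatment of the vacuous case are exactly the right way to dispose of it, and you are also right that the dynamic invariance of the block values is deferred to Theorem~\ref{the:value} rather than handled here.
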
 


The \textit{budget} of a vertex $u$ is defined as follows:
\begin{definition}[budget]
\label{def:budget}
\rev{budget[u]} is the number of hops \rev{remaining} for $u$ to continue the search.
\begin{equation}
    budget[u] = k - len(\mathcal{S}), when \ u = S[0]
\end{equation}
\end{definition}
The following lemma \rev{gives} the condition under which a vertex $u$ may reach the target vertex $s$ in Algorithm~\ref{alg:node_blocks}.

\begin{lemma}
\label{lemm:c}
Assume that the top vertex of the current stack $\mathcal{S}$ is $u$. There is a path $p(u \rightarrow s)$. The vertex $u$ could reach the vertex $s$ in Algorithm~\ref{alg:node_blocks} only if $k -len(\mathcal{S}) \geq len(p)$ and every vertex in the path (except $u$) is not included by $\mathcal{S}$.
\end{lemma}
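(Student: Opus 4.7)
My plan is to prove this by induction on $\ell = len(p)$, the length of the path from $u$ to $s$ that is traced out by Algorithm~\ref{alg:node_blocks}. The two required conclusions correspond directly to two guards in the pseudocode, so the real work is just bookkeeping across the recursion.

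The first condition, $k - len(\mathcal{S}) \geq len(p)$, comes from the recursion guard "\textbf{elseif} $len(\mathcal{S}) < k$". A recursive call that pushes a new vertex can only fire when the stack currently has fewer than $k$ elements, and each such call grows $\mathcal{S}$ by exactly one before any subsequent push. Since no vertices are popped until after the deepest recursive call returns, traversing all $\ell$ edges of $p$ from the state where $u$ is on top of $\mathcal{S}$ yields a final stack length of $len(\mathcal{S}) + \ell$, which must be at most $k$. Equivalently, $k - len(\mathcal{S}) \geq \ell = len(p)$.

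The second condition follows from the filter "for vertex $v$ of $adj_{out}[u]$ where $v \notin (\mathcal{S} - \{s\})$". Each time a vertex of $p$ is pushed, the corresponding iteration of this for-loop forbids it from already lying in the current stack (with $s$ as the sole permitted exception, matching the lemma's "except $u$" wording for the strict intermediates $v_1, \ldots, v_{\ell-1}$). Combining this across all pushes on the path gives that none of the intermediate vertices are in $\mathcal{S}$ at any point during the descent.

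Formally, the induction base $\ell = 0$ has $u = s$ and both conditions hold vacuously. For the step, the first edge of $p$ leads to some $v_1$; the algorithm's selection of $v_1$ forces $v_1 \notin \mathcal{S} - \{s\}$ and $len(\mathcal{S}) < k$. Applying the induction hypothesis to the recursive call on $v_1$ with stack $\mathcal{S} \cup \{v_1\}$ and the tail path of length $\ell - 1$ yields $k - (len(\mathcal{S}) + 1) \geq \ell - 1$ and that no vertex of the tail lies in $\mathcal{S} \cup \{v_1\}$; combining with the $v_1$-selection constraint gives the claim for $p$. The only subtle point is the treatment of $s$ itself, which is technically always in $\mathcal{S}$ since it was pushed at the outermost call; I will resolve this in a short remark by interpreting the "except $u$" clause as referring to the strict intermediates, which is precisely what the $\mathcal{S} - \{s\}$ exception in the algorithm is designed to permit. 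Other than this wording clarification, the proof is a direct read-off of the two guards in Algorithm~\ref{alg:node_blocks} and presents no technical difficulty.
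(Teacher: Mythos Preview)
Your proposal reads the lemma's ``only if'' literally and argues the necessary direction: if Algorithm~\ref{alg:node_blocks} actually traces out the path $p$ from $u$ to $s$, then the depth guard and the stack-membership filter must have been satisfied at each step. That bookkeeping is fine as far as it goes.

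However, the paper proves --- and, crucially, \emph{uses} in Theorem~\ref{the:value} --- the opposite implication: assuming there exists a path $p$ with $k - len(\mathcal{S}) \ge len(p)$ and with no intermediate vertex lying in $\mathcal{S}$, the algorithm \emph{will} reach $s$ from $u$. The phrase ``only if'' in the statement is a wording slip; the proof in the paper concludes ``Thus, $u$ can reach $t$ in Algorithm~\ref{alg:node_blocks}'', and the invocation in Theorem~\ref{the:value} (``According to Lemma~\ref{lemm:c}, the vertex $v$ can reach the vertex $s$'') makes clear that sufficiency is what is intended and needed.

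For that sufficient direction, the two guards you discuss are simply the hypotheses; the real obstacle is the third guard you never mention, namely the block test at Line~\ref{line:blockRule}: $len(\mathcal{S}) + 1 + v.block \le k$. The paper's argument handles this by invoking the correctness of block values (Definition~\ref{def:cor_blocks}): for each intermediate vertex $x$ on $p$, since $x \notin \mathcal{S}$ and $x$ can reach $s$ in $p[x]$ hops avoiding $\mathcal{S}$, correctness of $x.block$ forces $x.block \le p[x]$, so the block test cannot prune the step into $x$. Your proposal omits block values entirely and therefore does not touch the part of the lemma that carries the actual content in the paper's development.
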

\begin{proof}
The inequation $k - len(\mathcal{S}) \geq len(p)$ indicates that the vertex $u$ has a sufficient \textit{budget} to reach $t$.
\rev{Since $\mathcal{S}$ does not include all vertices $\{x\}$ along the path, the search can only be early terminated due to their block values.}
Because $x$ is not in the $\mathcal{S}$ and $x.block$ is correct w.r.t $\mathcal{S}$, $x.block \leq p[x]$, as $x$ can reach $t$ within $p[x]$ hops. \rev{Thus,} $x$ cannot use the block \rev{value} to \rev{terminate} the search.
Thus, $u$ can reach $t$ \rev{in Algorithm~\ref{alg:node_blocks}}.
\end{proof}

\rev{The correctness of Algorithm~\ref{alg:node_blocks} is proved by demonstrating \rrev{that} block values are correct throughout it.}

\begin{theorem}
\label{the:value}
All the block values are correctly computed, and they remain correct \rev{in} Algorithm~\ref{alg:node_blocks}.
\end{theorem}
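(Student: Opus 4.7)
The plan is to prove the theorem by induction on the recursion depth of Algorithm~\ref{alg:node_blocks} together with its calls to Unblock (Algorithm~\ref{alg:unblock}), maintaining as invariant that, at the entry and exit of every recursive invocation, every vertex $v$ not currently on the stack satisfies the correctness condition of Lemma~\ref{lemma:cor_blocks}, i.e.\ $v.block \leq sd(v, s \mid \mathcal{S})$. The base case is the top-level call, where all blocks are initialized to a trivial lower bound (e.g.\ $1$); this is immediate since any path from $v$ to $s$ has positive length.

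For the inductive step, the two operations that can modify block values are Line~\ref{line:blockLevel} of Algorithm~\ref{alg:node_blocks} and the Unblock procedure. I would handle Unblock first: whenever it sets $v.block \leftarrow l+1$, the guard $v.block > l+1$ guarantees the update only lowers the stored value, and by induction $u.block = l$ certifies a witness path from $u$ to $s$ of length $l$ avoiding $\mathcal{S}$; prepending the edge $v \to u$ (legal because the code requires $v \notin \mathcal{S}$) yields a witness from $v$ to $s$ of length $l+1$ avoiding $\mathcal{S}$. Hence $sd(v, s \mid \mathcal{S}) \leq l+1$, and the lower-bound claim is preserved. The recursive propagation of Unblock to further in-neighbors is analogous.

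The subtle operation is Line~\ref{line:blockLevel} of Algorithm~\ref{alg:node_blocks}, which sets $u.block \leftarrow k - len(\mathcal{S}) + 1$ before pushing $u$ --- a value that may temporarily exceed $sd(u, s \mid \mathcal{S})$. I would argue this temporary overassignment is harmless because while $u$ is on the stack its own block is never consulted: the pruning rule at Line~\ref{line:blockRule} inspects only the blocks of $u$'s out-neighbors. By the time the call returns and $u$ is popped, one of two things has happened: either (i) a cycle through $u$ was found and the Unblock at Line~\ref{line:find_true} retroactively corrects $u.block$ to the true distance, or (ii) the DFS over all out-neighbors returned empty. In case (ii) I would combine two kinds of evidence: for every out-neighbor $v$ that was skipped, the guard $len(\mathcal{S}) + 1 + v.block > k$ together with the inductive correctness of $v.block$ forces $sd(v, s \mid \mathcal{S}) > k - len(\mathcal{S}) - 1$; for every out-neighbor $v$ that was recursed into and failed, the inductive hypothesis applied to that deeper invocation yields the same bound via the correctly-updated $v.block$ on exit. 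Together these certify that $sd(u, s \mid \mathcal{S}) \geq k - len(\mathcal{S}) + 1 = u.block$, so the pessimistic initial value is in fact a valid lower bound.

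The main obstacle will be formalizing case (ii), since it requires interleaving the correctness statement for $u$'s block with those of its out-neighbors across the recursion boundary; some care is also needed with the semantics of $sd(u, s \mid \mathcal{S})$ when $u \in \mathcal{S}$ (the natural reading here is $sd(u, s \mid \mathcal{S}\setminus\{u\})$, matching the use at Line~\ref{line:blockRule}). Once this is settled, the remaining stack bookkeeping is routine: pushing a vertex onto $\mathcal{S}$ only restricts the set of admissible paths, so $sd(\cdot, s \mid \mathcal{S})$ is monotonically non-decreasing in $\mathcal{S}$ and previously-established lower-bound invariants are preserved; the Unblock call on the success branch correspondingly restores correctness for the vertex that is about to leave $\mathcal{S}$.
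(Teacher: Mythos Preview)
Your inductive approach differs from the paper's, which argues by contradiction: it supposes some unstacking of a vertex $v$ is the first event that renders another vertex's value $u.block$ incorrect, then uses a timestamp ordering $T_{inU} < T_{outV} < T_{outU}$ together with Lemma~\ref{lemm:c} to show the algorithm would have early-terminated before $v$ was ever unstacked, a contradiction.

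Your plan has a genuine gap in case~(ii). You argue that when the DFS from $u$ fails and $u$ is popped, the pessimistic value $u.block = k - len(\mathcal{S}) + 1$ is a valid lower bound for $u$ itself. But the invariant you are maintaining requires that \emph{every} off-stack vertex have a correct block at exit, and popping $u$ can invalidate vertices other than $u$: any vertex $w$ whose block was (over)written at Line~\ref{line:blockLevel} during the subtree rooted at $u$ had that value justified only relative to a stack containing $u$. Once $u$ leaves $\mathcal{S}$, paths of the form $w \to \cdots \to u \to \cdots \to s$ become admissible and could in principle drop $sd(w, s \mid \mathcal{S})$ below $w.block$. Your monotonicity remark covers only \emph{pushes}; nothing in case~(ii) handles this for \emph{pops} on the failure branch, where no Unblock is issued. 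This is exactly the case the paper isolates as ``$v \neq u$'' and dispatches with the timestamp contradiction. A repair would argue that, since $u$ itself cannot reach $s$ within its (larger) budget, routing through $u$ cannot give $w$ a path short enough to violate $w.block$ --- but that step is the crux of the whole theorem, and it is absent from the plan.

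A smaller point: your Unblock analysis is confused. You write that $u.block = l$ ``certifies a witness path from $u$ to $s$ of length $l$'' and deduce $sd(v, s \mid \mathcal{S}) \leq l+1$, which is the wrong direction for a lower-bound invariant (a lower bound never certifies a witness). The Unblock case is actually trivial for your invariant: the guard $v.block > l+1$ ensures the update only \emph{lowers} $v.block$, and lowering a valid lower bound keeps it valid; no witness-path reasoning is needed.
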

\begin{proof}
\rev{Firstly}, $u.block$ is correct \rev{if $u \in \mathcal{S}$}.
When $u.block$ \rev{is set} at Line~\ref{line:blockLevel}, the value is correct w.r.t $\mathcal{S}$.

\rev{It is demonstrated that $u.block$ is properly specified in the following search.}
If a new vertex $v$ is pushed into $\mathcal{S}$, then $u.block$ is immediately \rev{correct} since $\mathcal{S} = \mathcal{S} \cup \{v\}$ \rev{leads to} a strictly smaller search space.

\rev{Consequently}, the sole remaining scenario is \rev{vertexs' unstacking}.
The vertex $v$ denotes the first vertex \rev{that leads} $u.block$ \rev{to be incorrect}.
If \rev{unstack} of $v$ \rev{does not affect} $u.block$, $u.block$ \rev{is still correct} for the new stack $\mathcal{S} \setminus \{v\}$.
Alternatively, $u.block$ may be updated in \rev{two cases}:

\begin{itemize}
\vspace{1mm}
\item \noindent (1) If $v=u$, \rev{$u.block$ is still $true$}. \rev{When a valid \kc containing $u$ exists}, the algorithm \rev{terminates}. \rev{Then}, $u$ \rev{will not be unstacked due to the early termination}.
Thus, the \rev{unstack} of vertex $u$ \rev{indicates} that the current $\mathcal{S}$ does not include \rev{any} valid \kc.

\vspace{1mm}
\item \noindent (2) Given $v \not = u$ and that the \rev{unstack} of $v$ impacts the $u.block$.
\rev{Thus, there exists a path $p(u \rightarrow v \rightarrow s)$, which does not include any vertex in $\mathcal{S}$, s.t. $len(p) < u.block$}. \rev{That is}, $u$ can reach $t$ with fewer hops \rev{due to} $v$'s \rev{unstack} from $\mathcal{S}$.
\rev{Assume} that vertices' block \rev{values} are properly \rev{maintained before} \rev{vertex $v$'s unstack}.
\end{itemize}

\rev{Assume} $v$ cannot reach $t$ in Algorithm~\ref{alg:node_blocks}, \rev{and} $u.block > len(p)=p(u)$ with respect to the current stack $\mathcal{S}$. \rev{It indicates that although} $v.block$ was correctly \rev{maintained} in the \rev{previous} step, it is \rev{incorrect} due to unstack of $v$ and $u.block > p(u)$.

Three \rev{vital} timestamps occurred throughout the proof. \rev{They are} $T_{inU}$, $T_{outU}$, and $T_{outV}$. $T_{inU}$ indicates the time \rev{when} $u$ is added into the stack.
$T_{outU}$ and $T_{outV}$ indicate \rev{the time when} the first\footnote{A vertex may be pushed and unstack for many times.} unstack of $u$ and $v$, respectively, after $T_{inU}$.
$S_0(y)$ indicates the stack size when the vertex $y$ is pushed to the stack $S_0$.
\rev{Note that} $T_{inU} < T_{outV} < T_{outU}$, and $S_0(v) < S_0(u)$.

If $v$ is the sole vertex that \rev{blocks} $u \rightarrow s$ at $T_{inU}$, the algorithm will terminate before $T_{outV}$. \rev{Note} that $S_0(v) < S_0(u)$.
According to Lemma~\ref{lemm:c}, the vertex $v$ can reach the vertex $s$, and according to Algorithm~\ref{alg:node_blocks} Line~\ref{line:block_trueR} and~\ref{line:block_trueR2}, the unstack of $v$ is \rev{early terminated}.
It violates the \rev{assumption} that there \rev{exists} unstack of $v$. 
The proof is similar if there \rev{exist} more \rev{than one} vertices.
\end{proof}

\vspace{1mm}
\noindent \textbf{Correctness.}
According to Theorem~\ref{the:value}, the vertices' block \rev{values} are correct.
Algorithm~\ref{alg:node_blocks} is a hop-constrained DFS that utilizes a block-based \rev{technique}. If the hop-constrained DFS and the block-based method are valid, the algorithm is correct. Notably, Algorithm~\ref{alg:node_blocks} guarantees the simple cycle property by default.

\vspace{1mm}
\noindent \textbf{Time Complexity.}
\rev{The following theorem indicates that Algorithm~\ref{alg:node_blocks} is $O(km)$}. \rev{The time complexity of \tdb is $O(k \cdot m \cdot n)$}.

\begin{theorem}
Algorithm~\ref{alg:node_blocks} is capable of \rev{returning} a valid answer in $O(km)$.
\end{theorem}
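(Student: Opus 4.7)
The plan is to bound the total running time of Algorithm~\ref{alg:node_blocks} by an amortized argument based on the evolution of the block values. The central observation is that for every vertex $v$ in the graph, the value $v.block$ lies in the range $\{1, 2, \ldots, k+1\}$ throughout the execution, and hence the number of times $v.block$ can strictly increase is at most $k$. I would use this to amortize both the work done inside \textsc{NodeNecessary} and the work done inside \textsc{Unblock}.

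First I would separately account for the two sources of work. The \textsc{Unblock} procedure scans $adj_{in}[u]$ whenever it modifies $u.block$ to a new value $l$. Because $u.block$ is strictly increasing across successive calls that affect it (the recursion only fires when $v.block > l+1$), the number of \textsc{Unblock} invocations per vertex $u$ is $O(k)$. Summing over all vertices yields total \textsc{Unblock} work of $\sum_u k \cdot |adj_{in}[u]| = O(km)$. For the DFS itself, the outer for-loop at Line~\ref{line:blockRule} iterates over $adj_{out}[u]$ only when $u$ is the stack top. I would argue that each directed edge $(u,v)$ is traversed at most $O(k)$ times in total: every time $(u,v)$ is followed, either the algorithm discovers a valid constrained cycle and terminates (Line~\ref{line:block_trueR}), or the recursive call returns without success, and by Theorem~\ref{the:value} the value $v.block$ is updated (at Line~\ref{line:blockLevel} upon the next push, or during \textsc{Unblock} upon the pop) to a strictly larger value that tightly reflects $sd(v, s \mid \mathcal{S})$. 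Hence each subsequent traversal of $(u,v)$ either sees a strictly larger $v.block$ or is pruned outright by the rule $len(\mathcal{S})+1+v.block \leq k$. Since $v.block \leq k+1$ and can only grow, the total number of DFS traversals of $(u,v)$ is $O(k)$, giving $O(km)$ across all edges.

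Combining the two contributions and noting that the push/pop and stack maintenance at each visit cost only $O(1)$, the overall running time is $O(km)$. Correctness has already been established: Theorem~\ref{the:value} guarantees the block values remain correct throughout the recursion, and Lemma~\ref{lemm:c} ensures that no valid constrained cycle through $s$ is ever pruned by the block rule.

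The main obstacle will be the rigorous tying of failed recursive descents to strict increases of the associated block value. In particular, I need to show that after a failed call through $(u,v)$ returns, the value of $v.block$ seen on any subsequent traversal of $(u,v)$ is strictly larger than on the prior traversal; otherwise the edge could be re-explored without making progress and the $O(k)$ bound per edge would fail. This depends crucially on the interplay between Line~\ref{line:blockLevel} (which resets $u.block$ upon push to $k - len(\mathcal{S})+1$) and \textsc{Unblock} (which can only decrease block values for vertices not on the stack), and I will need to verify this invariant by induction on the recursion depth, using Theorem~\ref{the:value} as the inductive hypothesis.
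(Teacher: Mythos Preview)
Your amortization via the bounded range of block values is exactly the paper's strategy, and you have correctly identified the crux: after an unsuccessful descent through a vertex, its block value must be strictly larger on the next visit. The paper establishes this slightly more directly by arguing per vertex rather than per edge: if $u$ is pushed with stack length $|\mathcal{S}_1|$ and later again with $|\mathcal{S}_2|$, then between the two pushes $u.block$ was set to $k-|\mathcal{S}_1|+1$ and (since no \textsc{Unblock} intervened without terminating the search) remains so; the second push passes the test $|\mathcal{S}_2|+u.block\le k$, forcing $|\mathcal{S}_2|<|\mathcal{S}_1|$. Hence each successive push strictly increases $u.block$, giving at most $k$ pushes per vertex and $O(km)$ total out-edge scans. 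This is the same invariant you flag as the ``main obstacle,'' and the paper's resolution is precisely the inductive step you anticipate.

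One slip to fix: your \textsc{Unblock} accounting has the monotonicity backwards. \textsc{Unblock} \emph{decreases} $u.block$ (it fires on $v$ only when $v.block>l+1$ and then sets $v.block=l+1$), so the sentence ``$u.block$ is strictly increasing across successive calls'' is wrong as stated. The correct potential argument is that every \textsc{Unblock} invocation on $u$ strictly decreases $u.block$, while Line~\ref{line:blockLevel} is the only place $u.block$ increases; since the total increase over the whole run is at most $k$ per vertex (by the push argument above) and block values are nonnegative, the total number of \textsc{Unblock} calls on $u$ is also $O(k)$, yielding $\sum_u O(k)\,|adj_{in}[u]|=O(km)$. The paper itself glosses over the \textsc{Unblock} cost (it is essentially invoked only when a cycle is found, after which the procedure returns), so your explicit accounting---once corrected---is actually a useful addition.
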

\begin{proof}
Assume that Algorithm~\ref{alg:node_blocks} unstacked a vertex $u$ twice.
This implies that none of these two unstacks is \rev{early} terminated.
Let $S_1$ and $S_2$ \rev{denote} the \rev{stacks} after the first and second times that $u$ is pushed into the stack, respectively.
After \rev{unstack} $u$ for the first time, $u.block = k - S_1 +1$.
When $u$ is pushed to the stack \rev{at the} second time, $u.block$ \rev{remains} unchanged.
As $u$ passes \rev{block conditions} in the second visit \rev{with} $S_2 + u.block \leq k$,
and \rev{thus} $S_2 < S_1$. \rev{Consequently}, $u.block$ will be increased by at least one \rev{every time} $u$ is unstacked.

This \rev{indicates} that a vertex may be pushed to stack no more than $k$ times. When $u$ is \rev{added} into the stack, an edge $(u,v)$ is visited. \rev{An edge is visited at most $k+1$ times}. \rev{Hence, the time complexity of it is $O(km$}. It is worth noting that omitting the bidirectional edges as cycles has no effect on the time complexity. Assume the start vertex is $v$, and $u, w$ are its bidirectional out-neighbors. Assume that the unblock of $u$ and $w$ would have an effect on the vertex sets $A_u$ and $A_w$.
Then either $A_u \cap A_w = \emptyset$, or there exists a constrained cycle when $u$ and $w$ are explored. Both of them show that Algorithm~\ref{alg:node_blocks} has an $O(km)$ time complexity.
\end{proof}

\vspace{1mm}
\noindent \textbf{Space Complexity}.~\rev{Since the stack size is always bounded by $k$}, \rev{the} space complexity of Algorithm~\ref{alg:node_blocks} is $O(m+k)$.

\begin{theorem}
Algorithm~\ref{alg:top_down} \rev{produces} a \kpc $R$ that is both \rev{\textit{valid}} and \rev{\textit{minimal}}.
\end{theorem}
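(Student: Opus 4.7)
The plan is to split the claim into feasibility and minimality and have both rest on a single structural invariant on $G_0$, together with the correctness of the node-necessity subroutine guaranteed by Theorem~\ref{the:value}: \textsc{FindCycle}$(G_0, k, v)$ returns a non-empty cycle iff some constrained cycle through $v$ exists in $G_0$. The invariant I would prove by induction on the outer-loop iterations is the following: immediately after Line~\ref{line:tp_insert} when processing $v$, the subgraph of $G_0$ induced on $(V \setminus \mathcal{R}) \cup \{v\}$ coincides with the subgraph of $G$ induced on the same vertex set; any other edges present in $G_0$ are incident to some vertex $u \in \mathcal{R} \setminus \{v\}$ whose remaining incident edges are absent from $G_0$ (either not yet inserted, or already removed in a past Line~\ref{min:delete}), so $u$ cannot lie on a constrained cycle of length at least $3$ through $v$ inside $G_0$.

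Given the invariant, feasibility follows by a ``last-vertex'' argument. Let $c$ be any constrained cycle of $G$, and suppose for contradiction that $c \cap R = \emptyset$. Let $u$ be the vertex of $c$ processed last by the outer loop; at the instant $u$'s edges are inserted, every other vertex of $c$ has already been removed from $\mathcal{R}$, so by the invariant every edge of $c$ lies in $G_0$. Hence \textsc{FindCycle} returns a non-empty cycle through $u$, the algorithm takes the else-branch of Line~\ref{min:delete}, and $u$ is kept in $\mathcal{R}$, contradicting $u \notin R$.

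For minimality, I would fix any $v \in R$. By the else-branch, when $v$ was processed \textsc{FindCycle} returned a non-empty constrained cycle $c$ through $v$ in the $G_0$ of that moment, and by the invariant every vertex of $c$ other than $v$ lay in $V \setminus \mathcal{R}$ at that point. Since $\mathcal{R}$ is monotonically non-increasing throughout the outer loop, these vertices remain in $V \setminus R$ at termination, giving $c \cap R = \{v\}$, which certifies that $v$ cannot be dropped from $R$. The principal technical obstacle is the invariant itself, because the bookkeeping is delicate: Line~\ref{line:tp_insert} inserts \emph{all} of $v$'s incident edges, including those to still-in-$\mathcal{R}$ neighbors, and a later iteration that processes a neighbor $w$ re-adds an edge $(v,w)$ even after $v$'s edges were purged by Line~\ref{min:delete}. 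The hard part is to show that these bookkeeping artifacts never equip a vertex $u \in \mathcal{R}$ with enough $G_0$-edges to close a length-$\geq 3$ cycle through $v$; this holds precisely because $u$'s remaining incident edges stay absent from $G_0$ until $u$ itself is processed, so the spurious edges degenerate to cycles of length $2$ at worst and are excluded by the constrained-cycle definition $|C| \geq 3$.
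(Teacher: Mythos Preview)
Your high-level plan---an invariant on $G_0$, a last-vertex argument for feasibility, and a witness-cycle argument for minimality---is exactly the structure behind the paper's one-line proof (``similar to Theorem~\ref{theorem:min}''), and you are more careful than the paper in flagging the spurious-edge bookkeeping. But your resolution of that bookkeeping is wrong, and the gap breaks the minimality half.

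You assert that a vertex $u\in\mathcal{R}\setminus\{v\}$ cannot sit on a length-$\ge 3$ cycle in $G_0$ because ``$u$'s remaining incident edges stay absent''. That is false: if two neighbors $a,b$ of $u$ were processed earlier and removed from $\mathcal{R}$, then both $(a,u)$ and $(u,b)$ were inserted at their respective Line~\ref{line:tp_insert} calls and never deleted, so $u$ carries both an in- and an out-edge in $G_0$ and can lie on a long cycle. Concretely, let $G$ consist of a $4$-cycle $v\to a\to w\to b\to v$ together with a $3$-cycle $w\to p\to q\to w$, take $k\ge 4$, and process in the order $p,w,a,b,v,q$. When $b$ is reached, $G_0$ already contains all four edges of the $4$-cycle even though $w$ and $v$ are still in $\mathcal{R}$; \textsc{FindCycle} returns that cycle, $b$ is kept, and the final output is $\{w,b\}$. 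But $\{w\}$ alone covers both cycles, so $\{w,b\}$ is not minimal---under your literal reading of Lines~\ref{line:tp_insert} and~\ref{min:delete}, not only your invariant but the theorem itself would fail. The paper's appeal to Theorem~\ref{theorem:min} works only because Algorithm~\ref{alg:minimal} searches explicitly on $G-R+(v)$; the analogous statement for Algorithm~\ref{alg:top_down} requires reading $G_0$ as the induced subgraph $G[(V\setminus\mathcal{R})\cup\{v\}]$ at the moment of each \textsc{FindCycle} call (equivalently, Line~\ref{min:delete} removes $v$ from $G_0$ as a \emph{vertex}, so later insertions cannot reattach edges to it). Under that reading there are no spurious edges, your invariant is immediate, and the rest of your argument goes through unchanged; you should adopt and justify that reading rather than try to argue the artifacts are harmless.
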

The proof is \rev{similar} to the \rev{proof} of Theorem~\ref{theorem:min}.

\noindent \textbf{Comparison with Barrier Technique}.~\rev{Firstly, the hop-constrained path enumeration problem focuses on how to efficiently enumerate all the paths, but in the constrained cycle cover problem, the point is how to efficiently detect the existence of any constrained cycle. There are many algorithms in the problem of hop-constrained path enumeration, IDX-DFS~\cite{sun2021pathenum}, IDX-JOIN~\cite{sun2021pathenum}, PathEnum~\cite{sun2021pathenum}, BC-DFS~\cite{peng2019towards}, T-DFS~\cite{rizzi2014efficiently}, T-DFS2~\cite{grossi2018efficient}, and JOIN~\cite{peng2019towards}. IDX-DFS, IDX-JOIN, PathEnum, and JOIN are more efficient than BC-DFS in terms of hop constrained path enumeration, but their technique is not suitable to adapt to the constrained cycle cover problem due to \rev{the} different focuses of these two different problems. They either need preprocessing costs to construct a light-weighted index~\cite{sun2021pathenum} or find the middle cut (JOIN~\cite{peng2019towards}) to accelerate the whole enumerate process. In the context of hop-constrained path enumeration, such cost is affordable since the bottleneck is the heavy enumeration stage due to a large number of results. Nevertheless, for the constrained cycle cover problem, only one cycle is needed, then the bottleneck is altered. }

\subsection{Upper Bounds Filtering}
\label{subsec:filter}
\rev{This subsection introduces an upper bound to filter some unnecessary vertices.}

\vspace{1mm}
\noindent\textit{\bft.}~According to Example~\ref{exam:mBFS}, a modified BFS could not examine whether a vertex is necessary in the \textit{Top-Down} algorithm. Nevertheless, for a vertex $v$, if its distance to itself is larger than $k$ in the modified BFS, then it could be safely excluded in the current iteration. 

\rev{\noindent\textbf{Details}.~The details are presented as Algorithm~\ref{alg:bfs_filter}.}
\begin{algorithm}[htb]
\SetVline
\SetFuncSty{textsf}
\SetArgSty{textsf}
\small
\caption{ \textsc{BFS-filter}($G_0,k,v$)}
\label{alg:bfs_filter}
\State{$\mathcal{U}$ $\leftarrow$ the upper bound distance from $v$ to $v$ using the modified BFS}
\label{line:bfs}
\If{$\mathcal{U} > k$}
{
\label{line:prune}
	\State{Prune vertex $v$}
}
\Else
{
\label{line:bfs_false}
	\State{Vertex $v$ needs further verify.}
}
\end{algorithm}
Line~\ref{line:bfs} computes the upper bound of the distance of $v$ to itself using the modified BFS (see Example~\ref{exam:mBFS} for details) and represents it with $\mathcal{U}$. Two cases for the \bft method are shown in Lines~\ref{line:prune} to~\ref{line:bfs_false}. When $\mathcal{U} > k$, the vertex is pruned safely. Line~\ref{line:bfs_false} represents the situation where $\mathcal{U} \leq k$. Then, the vertex has to be verified further using Algorithm~\ref{alg:node_blocks}.


\reat{
\subsection{Additional Constraints}
\label{subsec:moreCons}
This part discusses additional constraints and ways to solve them.

\vspace{1mm}
\noindent\textit{Label Constraints.}
\vspace{1mm}
In graphs, label constraints are naturally imposed~\cite{peng2020answering}. We may preprocess the graph and then use the induced graph as input for these constraints. Alternatively, we could impose the constraints on the search process.

\noindent\textit{Temporal Constraints.}
In real-world applications, temporal constraints are also often applied~\cite{wen2020efficiently}. To accommodate these constraints, it is necessary to modify the block technique. The $block$ does not keep track of the hops budget. It keeps track of the timestamp value. It could handle the temporal constraints similarly to the hop constraints.

}

\reat{
\subsection{Graph Reduction}
\label{subsec:Preprocess}
\vspace{1mm}
\noindent\textit{Graph Reduction.}
We introduce the graph reduction techniques with a Theorem~\ref{lemma:pre}.
\begin{theorem}
\label{lemma:pre}
All vertices in a simple cycle are in the Strong Connected Component (SCC).
\end{theorem}
\begin{proof}
According to the definition of SCC,  if there are two vertices in different SCCs, $S_1$ and $S_2$. Since they are in the same cycle, then these two SCCs could be merged into one.
\end{proof}

\begin{figure}[htb]
	\centering

     \includegraphics[width=0.5\linewidth]{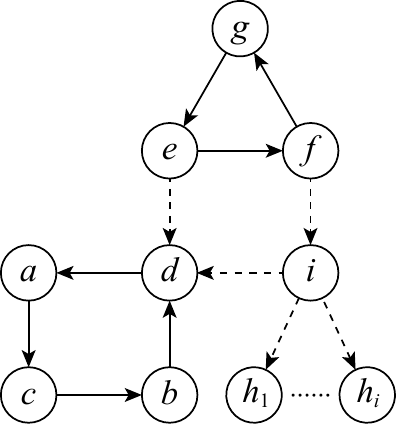}

\caption{\small An Example of how the graph reduction method works.}
\label{fig:ex_pre}
\end{figure}
\begin{example}
As shown in Figure~\ref{fig:ex_pre}, the pruning effects of Theorem~\ref{lemma:pre} is twofold:
Firstly, we can prune all the vertices, which are not in any SCCs, i.e., $g, h_1, .. , h_i$. Then the graph $G$ could be an graph $G'$.
Secondly, when we explore from a vertex, i.e., $a$ in Figure~\ref{fig:ex_pre}, only vertices in the same SCC with vertex $a$ need to be explored. Note that the graph reduction could be done efficiently within $O(m+n)$~\cite{su2016reachability}.
Note that the graph reduction is a preprocessing technique, and used for all the algorithms.
\end{example}
}


\section{Experimental Results}
\label{sect:experiment}
\rev{This section evaluates the effectiveness and efficiency of the proposed techniques on comprehensive experiments.}

\subsection{Experimental Settings}
\label{sect:settings}

\vspace{1mm}
\noindent\textbf{Compared Algorithms.}~\rev{The following baselines are compared in the experimental part.}

\begin{itemize}
\item \textit{\darc}. The state-of-the-art algorithm~\cite{kuhnle2019scalable} introduced in Section~\ref{subsec:baseline}.
\item \textit{\adp}. The bottom-up approach introduced in Section~\ref{subsec:Adaptive}.
\item \textit{\adpm}. The bottom-up approach with the minimal technique introduced in Section~\ref{subsec:minimal}.
\item \textit{\tbs}. The \underline{T}op-\underline{D}own \underline{B}locks algorithm introduced in Section~\ref{sect:tpd}.
\item \textit{\tbk}. The Top-Down {B}locks algorithm with the \bt introduced in Section~\ref{sect:tpd}.
\item \textit{\tdb}. The Top-Down Blocks algorithm with the block and BFS-filter techniques introduced in Section~\ref{sect:tpd}.

\end{itemize}

\vspace{1mm}
\noindent\textbf{Datasets.}~Table~\ref{tb:datasets} summarizes the key statistics about the real graphs used in the experiments. Most of these graphs are from either SNAP \cite{leskovec2016snap} or KONECT \cite{kunegis2013konect}. \\

\vspace{1mm}
\noindent\textbf{Settings.}~All programs were implemented in standard C++11 and compiled using G++4.8.5.
\\
All experiments were performed on a machine with 36X Intel Xeon 2.3GHz and 385GB main memory running Linux (Red Hat Linux 7.3 64 bit).

\begin{table}
  \centering
      \caption{Statistics of datasets. K indicates $10^3$. M indicates $10^6$. B indicates $10^9$.}
\label{tb:datasets}
    \begin{tabular}{lrrrr}
      \hline
       \cellcolor{gray!25}\textbf{Name}	& \cellcolor{gray!25}\textbf{Dataset} 	&  \cellcolor{gray!25}\textbf{$\lvert V \rvert$}	 &  \cellcolor{gray!25}\textbf{$\lvert E \rvert$} 	&  \cellcolor{gray!25}\textbf{$d_{avg}$} \\ \hline 
	WKV		&  Wiki-Vote		&	7K	 & 	104K	& 	29.1			   \\ 
	ASC		&  as-caida&	26K	 & 	107K	& 	8.1			   \\ 
	GNU		&  Gnutella31	&	63K	 & 	148K	& 	4.7			   \\ 
	EU		& Email-Euall 		&	265K	 & 	420K	& 	3.2			   \\ 
	SAD		& Slashdot0902	&	82K	 & 	948K	& 	23.1			   \\ 

	WND		& web-NotreDame 	&	325K	 & 	1.5M	& 	9.2			   \\ 

	CT		& citeseer 		&	384K	 & 	1.7M	& 	9.1			   \\ 
	WST		& webStanford	&	281K	 & 	2.3M	& 	16.4			   \\ 
	LOAN		& prosper-loans	&	89K	 & 	3.4M	& 	76.1			   \\ 

	WIT		& Wiki-Talk		&	2.4M	 & 	5.0M	& 	4.2			   \\ 

	WGO		& webGoogle 	&	875K	 & 	5.1M	& 	11.7			   \\ 
		WBS		& webBerkStan 	&	685K	 & 	7.6M	& 	22.2			   \\ 
		FLK		& Flickr 	&	2.3M	 & 33.1M	& 			28.8	   \\ 
			LJ		& LiverJournal 	&	10.6M	 & 	112M	& 	21.0			   \\ 
		WKP		& Wikipedia 	&	18.2M	 & 	172M	& 	18.85			   \\ 
	TW		& Twitter(WWW) 	&	41.6M	 & 1.47B	& 	70.5			   \\ \hline
	
       \end{tabular}

\end{table}

\vspace{1mm}
\begin{table}[htbp]
  \centering
      \caption{The cover size (the number of vertices) and runtime (seconds) for algorithms when $k = 5$.}
\label{tb:CoverTime}
\scalebox{0.9}{
    \begin{tabular}{l|rr|rr|rr}
      \hline
      	\cellcolor{gray!25}\textbf{Name} 	& \multicolumn{2}{r|}{\cellcolor{gray!25}\darc } & \multicolumn{2}{r|}{\cellcolor{gray!25}\adpm}& \multicolumn{2}{r}{\cellcolor{gray!25}\tdb }\\  
		\cellcolor{gray!25}	 	&	\cellcolor{gray!25}Size		& \cellcolor{gray!25}Time		&	\cellcolor{gray!25}Size		& \cellcolor{gray!25}Time	&	\cellcolor{gray!25}Size		& \cellcolor{gray!25}Time	    		   \\ \hline
	WKV	 	&490 			& 53.8			&	\textbf{469}	 	& 402.8	 &	 	 491		& 	\textbf{0.41}         		   \\ 
	ASC	 		&620 & 2.42		&	\textbf{607}	 & 44.01	 	&		 612		& 	 	\textbf{0.11}		   \\ 
	GNU	 	&184 	& 1.3		&	\textbf{180}	 	& 1.49	&	  193	 & \textbf{0.69} 			   \\ 	
	EU	 	&622 	& 114.7		&	\textbf{609}	 	& 702.1	&	 627	 & \textbf{1.25} 			   \\ 	
	
		
	SAD	 		&6,377 & 440.1	&	\textbf{6,005}	& 4,717	&		6,380	& 	 	\textbf{3.13}     		   \\ 	

	WND	 	&27,067& 29,916.8		&	\textbf{23,853}	 & 28,953.3	 &	 	24,290& \textbf{2.67}	 		   \\ 
 
	CT	 		&1,621	& 37.03		&	\textbf{1,610} 	& 43&	1,611 	& 	\textbf{16.2}  	   \\ 	
	WST	 		&31,253		& 140.7		&	\textbf{30,811} 	& 275.6	 	&	 	31,148	& 		\textbf{2.99}	    		   \\ 
	LOAN	 	&332	& 184.5		&	\textbf{320}	 	& 450.7&	 347 &			 \textbf{127.9}	      		   \\

	WIT	 		&7,040& 2,296.8	&	6,923	 & 4,708.3	 &		\textbf{6,894}	& 	\textbf{56.3}	      		   \\ 	
 	
	WGO	 	&130,382		 & 42.2		&	\textbf{129,009} 	& 110.8	 &	129,421& \textbf{5.99} 	   \\ 
		WBS	 	&98,570 & 3,571.4	&	\textbf{94,817} & 12,739	 &	 100,668		&\textbf{6.96} 		    		   \\ 
		FLK	 	&- 		& -				&	- 	& -	 &	\textbf{206,912} 	& 	\textbf{92.3}		   \\ 
	LJ	 		&- 		& -				&	- 	& -	 &	\textbf{39,183} 	& \textbf{20,466.8}		   \\ 
	WKP	 	&- 		& -				&	- 	& -	 &	\textbf{685,759} 	& 	\textbf{4,132}		   \\ 
		TW	 	&- 		& -				&	- 	& -	 &	\textbf{3,731,522} 	& 	\textbf{89,634}	   \\ \hline
       \end{tabular}

}

\end{table}

\begin{figure*}[htbp]
	\vspace{-2mm}
	\newskip\subfigtoppskip \subfigtopskip = -0.1cm
	\newskip\subfigcapskip \subfigcapskip = -0.1cm
	\centering
     \subfigure[\small{WKV}]{
     \includegraphics[width=0.23\linewidth]{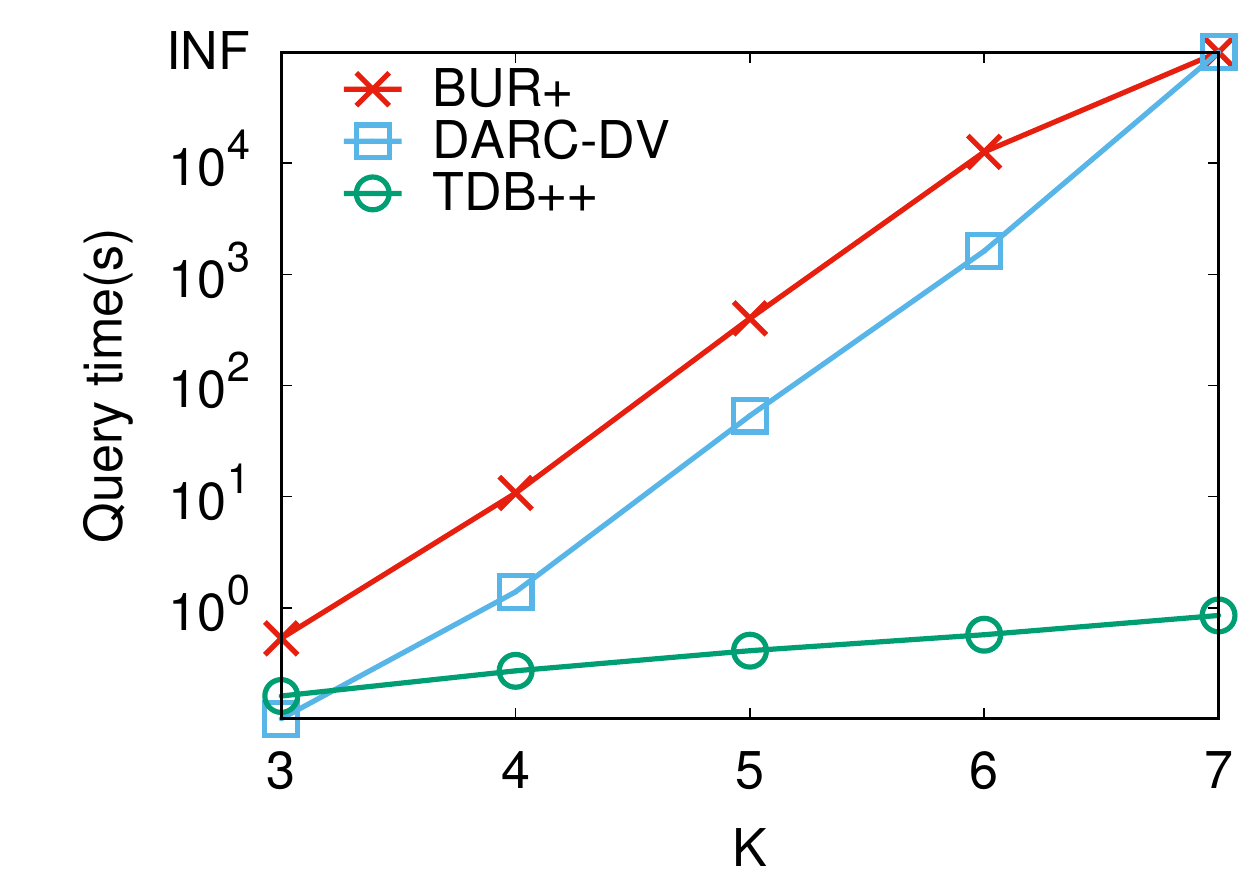}
	 \label{fig:ITL}
     }
     \subfigure[\small{ASC}]{
     \includegraphics[width=0.23\linewidth]{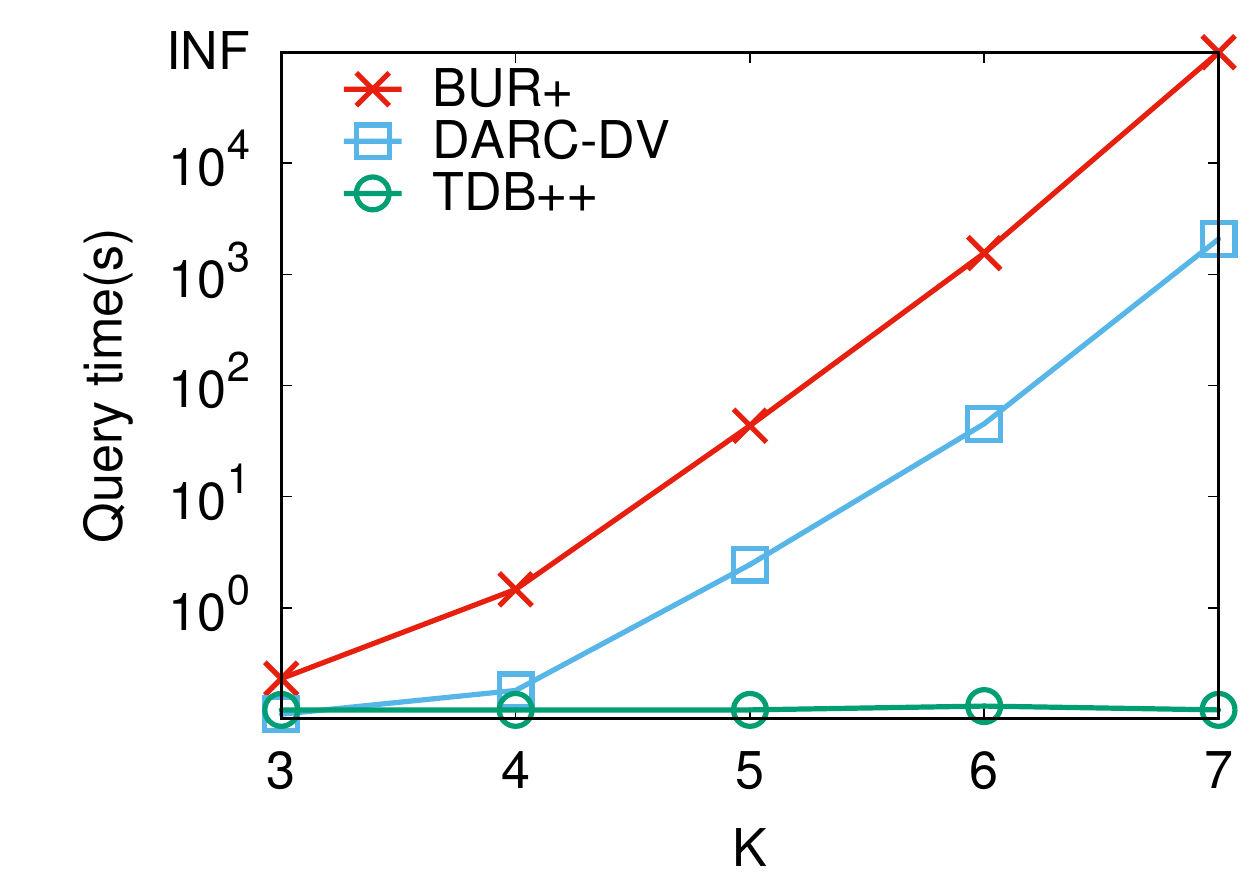}
	 \label{fig:ITW}
     }
          \subfigure[\small{GNU}]{
     \includegraphics[width=0.23\linewidth]{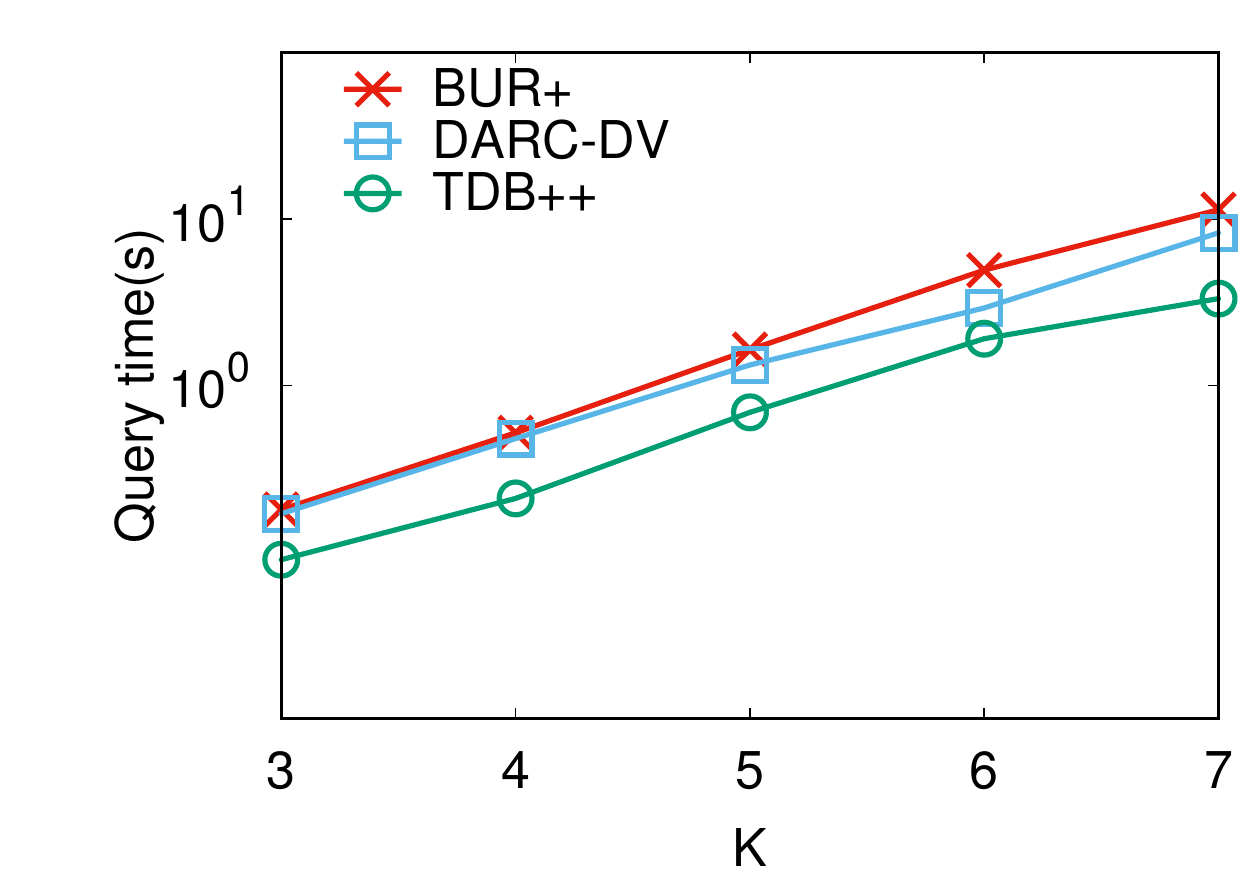}
	 \label{fig:ITL}
     }
     \subfigure[\small{EU}]{
     \includegraphics[width=0.23\linewidth]{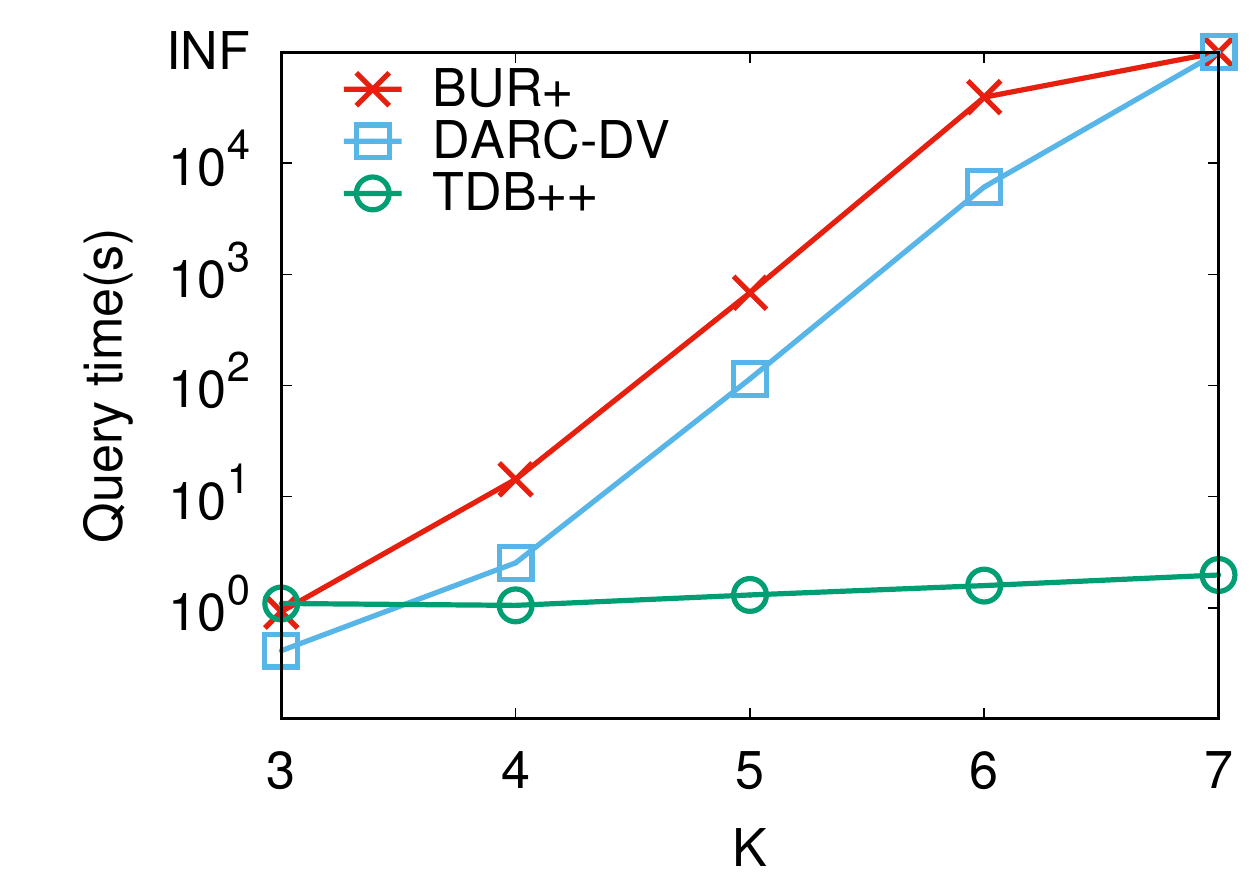}
	 \label{fig:ITW}
     }
          \subfigure[\small{SAD}]{
     \includegraphics[width=0.23\linewidth]{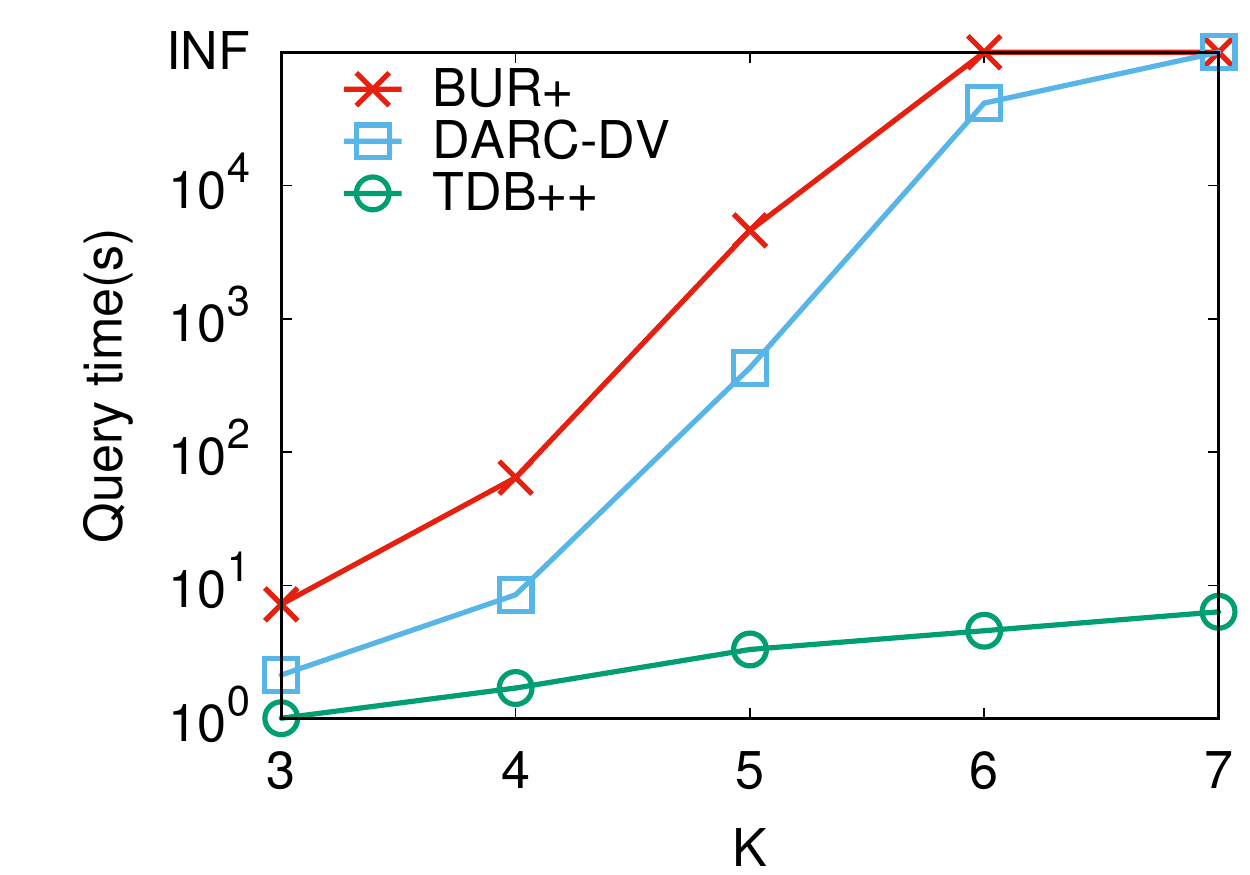}
	 \label{fig:ITL}
     }
     \subfigure[\small{WND}]{
     \includegraphics[width=0.23\linewidth]{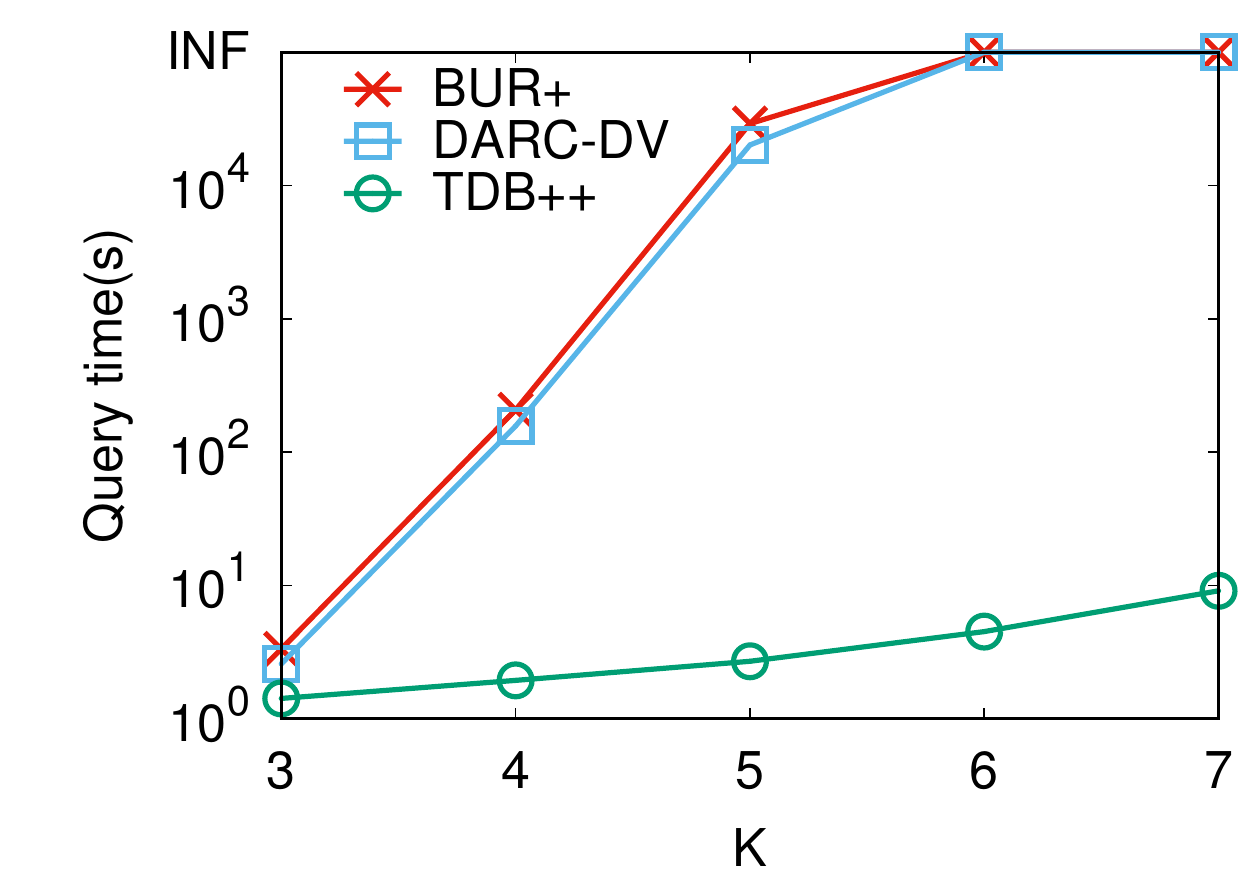}
	 \label{fig:ITW}
     }
          \subfigure[\small{CT}]{
     \includegraphics[width=0.23\linewidth]{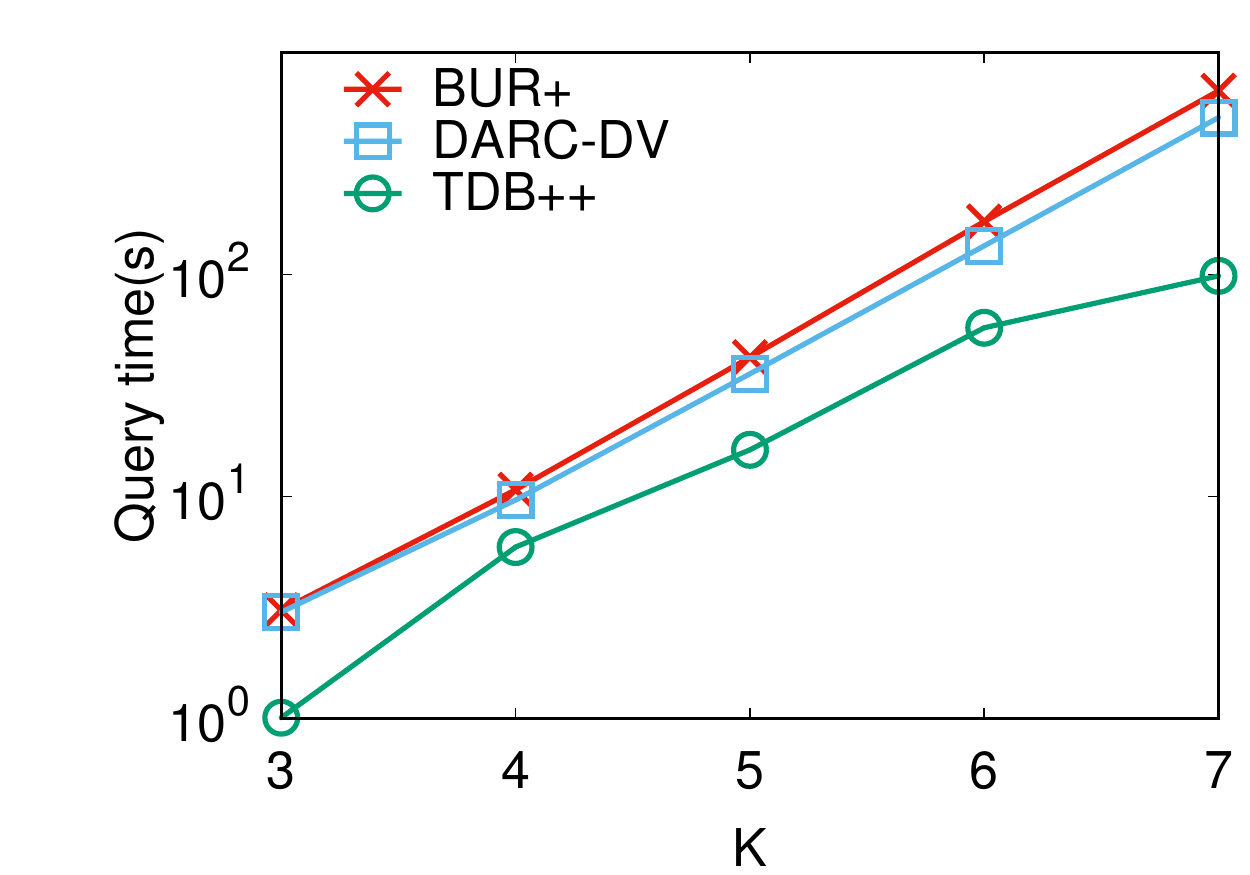}
	 \label{fig:ITL}
     }
     \subfigure[\small{WST}]{
     \includegraphics[width=0.23\linewidth]{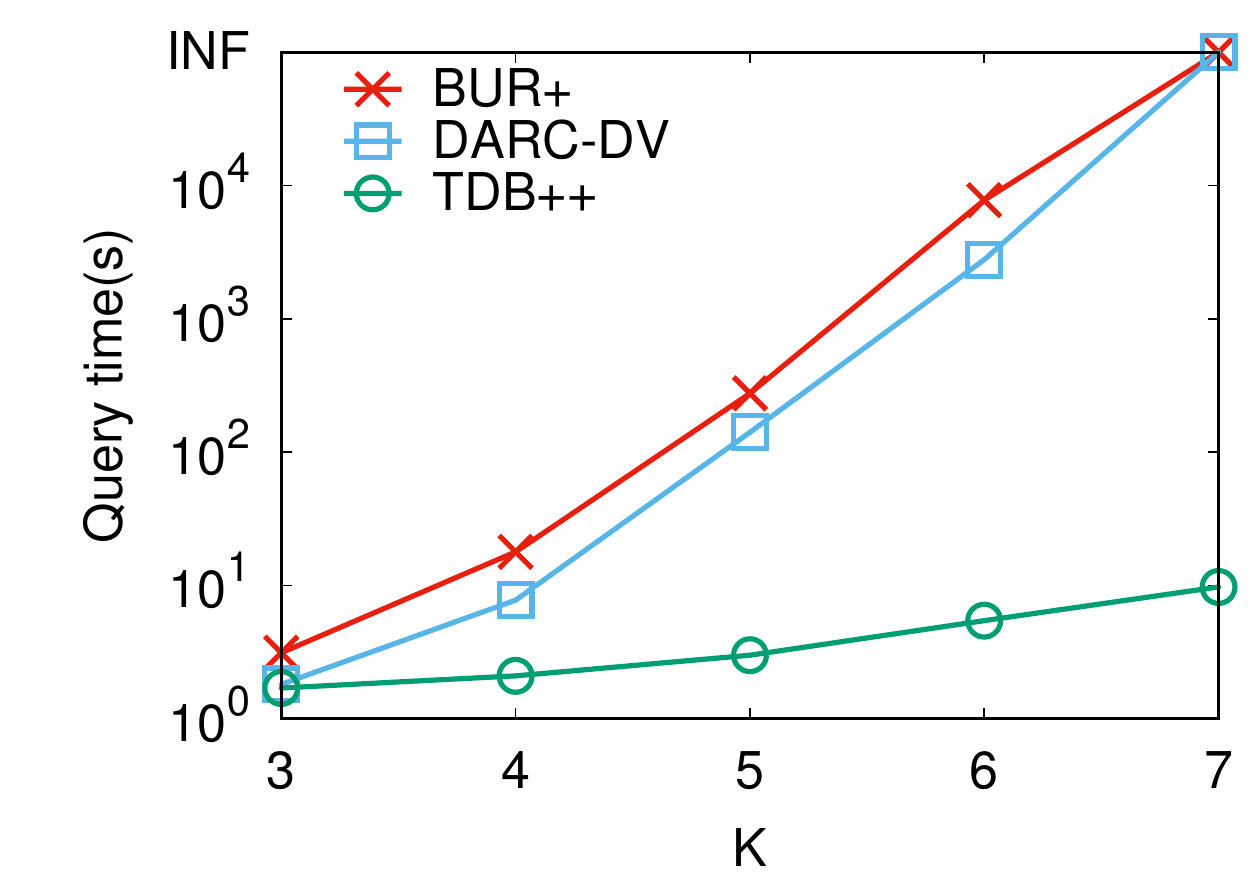}
	 \label{fig:ITW}
     }
               \subfigure[\small{LOAN}]{
     \includegraphics[width=0.23\linewidth]{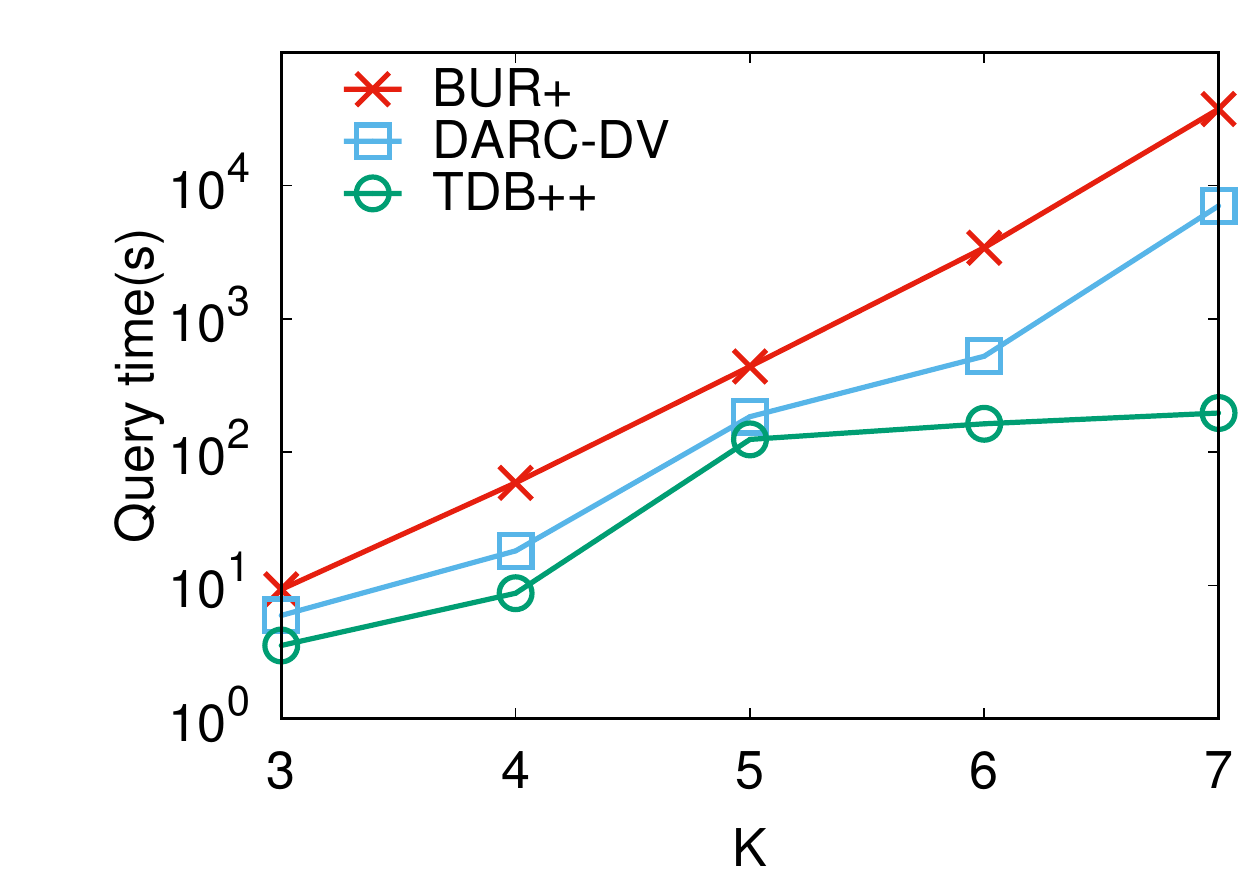}
	 \label{fig:ITL}
     }
     \subfigure[\small{WIT}]{
     \includegraphics[width=0.23\linewidth]{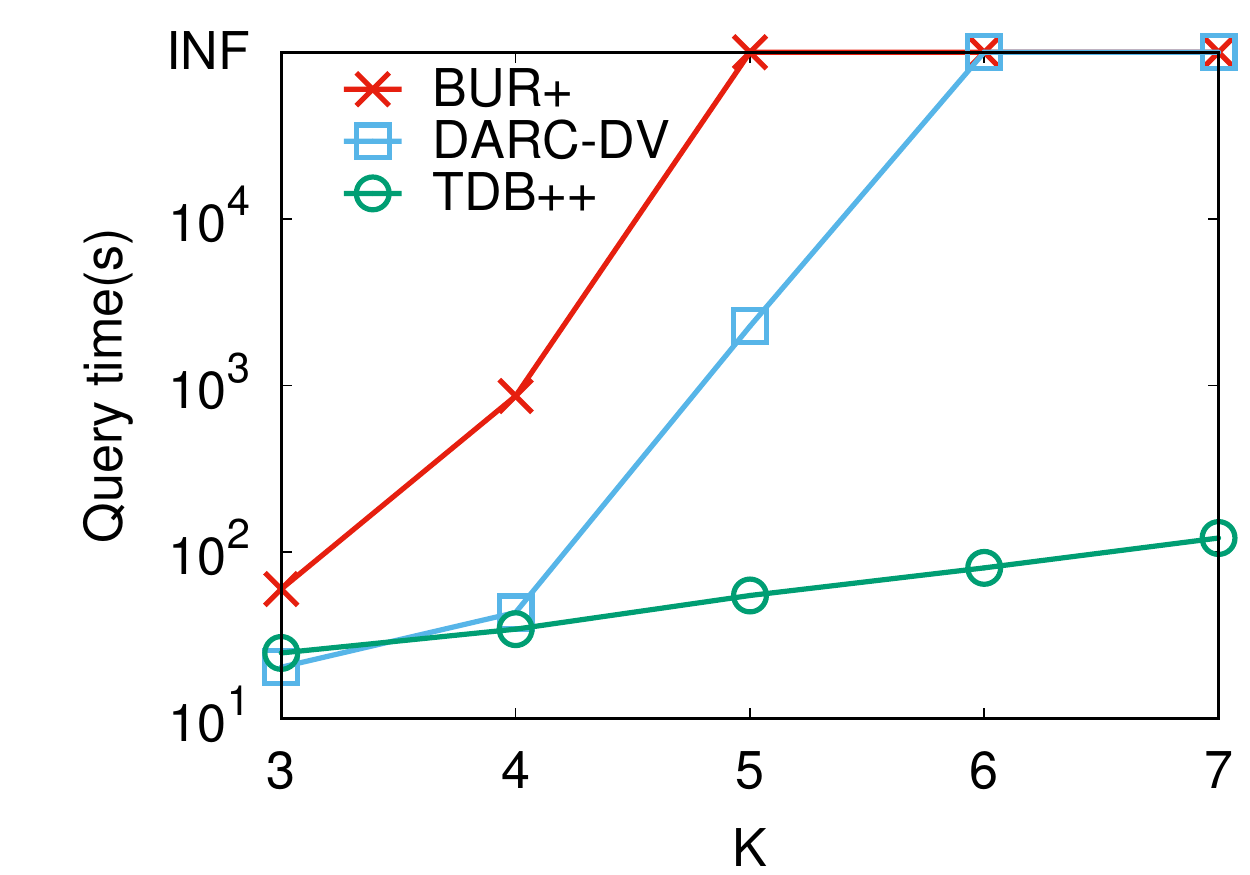}
	 \label{fig:ITW}
     }
               \subfigure[\small{WGO}]{
     \includegraphics[width=0.23\linewidth]{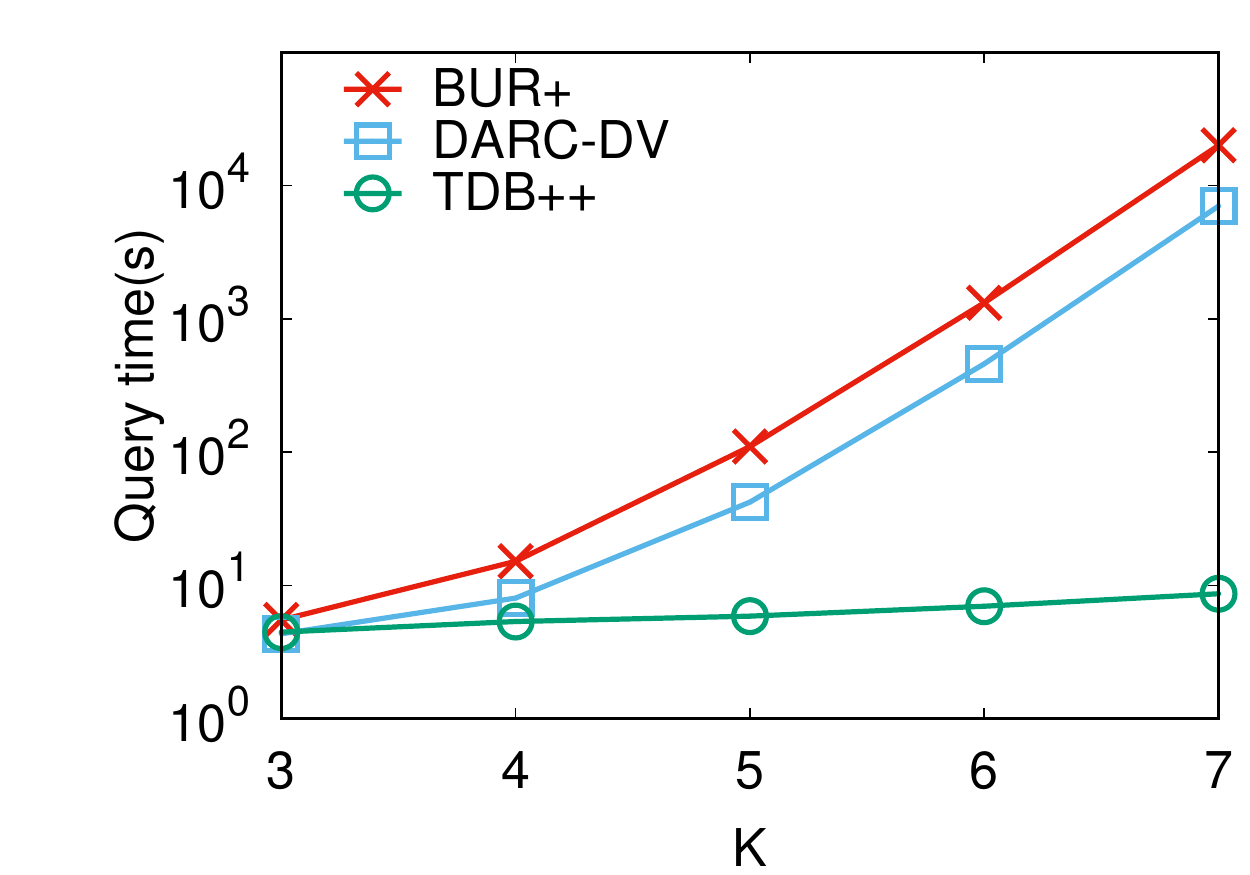}
	 \label{fig:ITL}
     }
     \subfigure[\small{WBS}]{
     \includegraphics[width=0.23\linewidth]{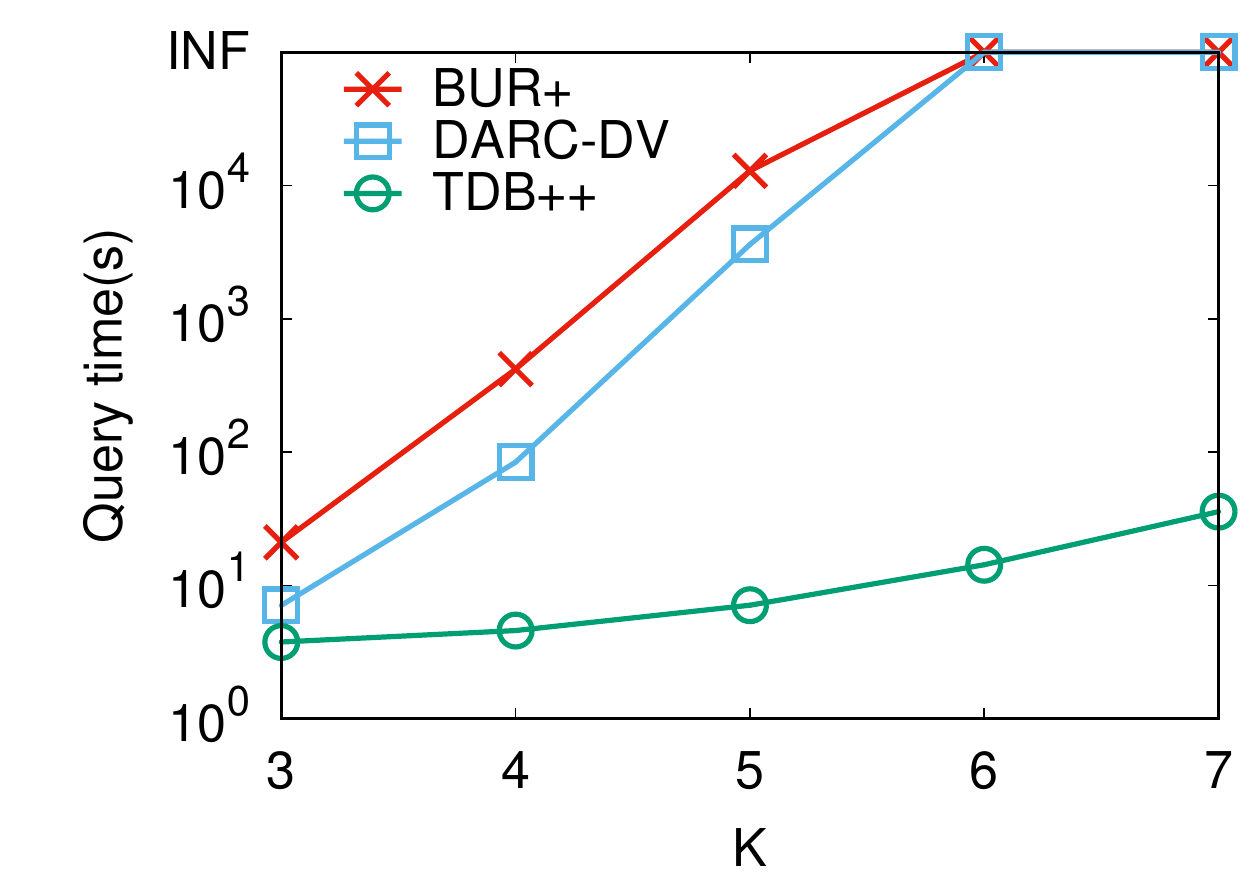}
	 \label{fig:ITW}
     }
	\vspace{-1mm}
\caption{Runtime (s).}
	\vspace{-1mm}
\label{fig:Runtime}
\end{figure*}

\begin{figure*}[htbp]
	\vspace{-1mm}
	\newskip\subfigtoppskip \subfigtopskip = -0.1cm
	\newskip\subfigcapskip \subfigcapskip = -0.1cm
	\centering
     \subfigure[\small{WKV}]{
     \includegraphics[width=0.23\linewidth]{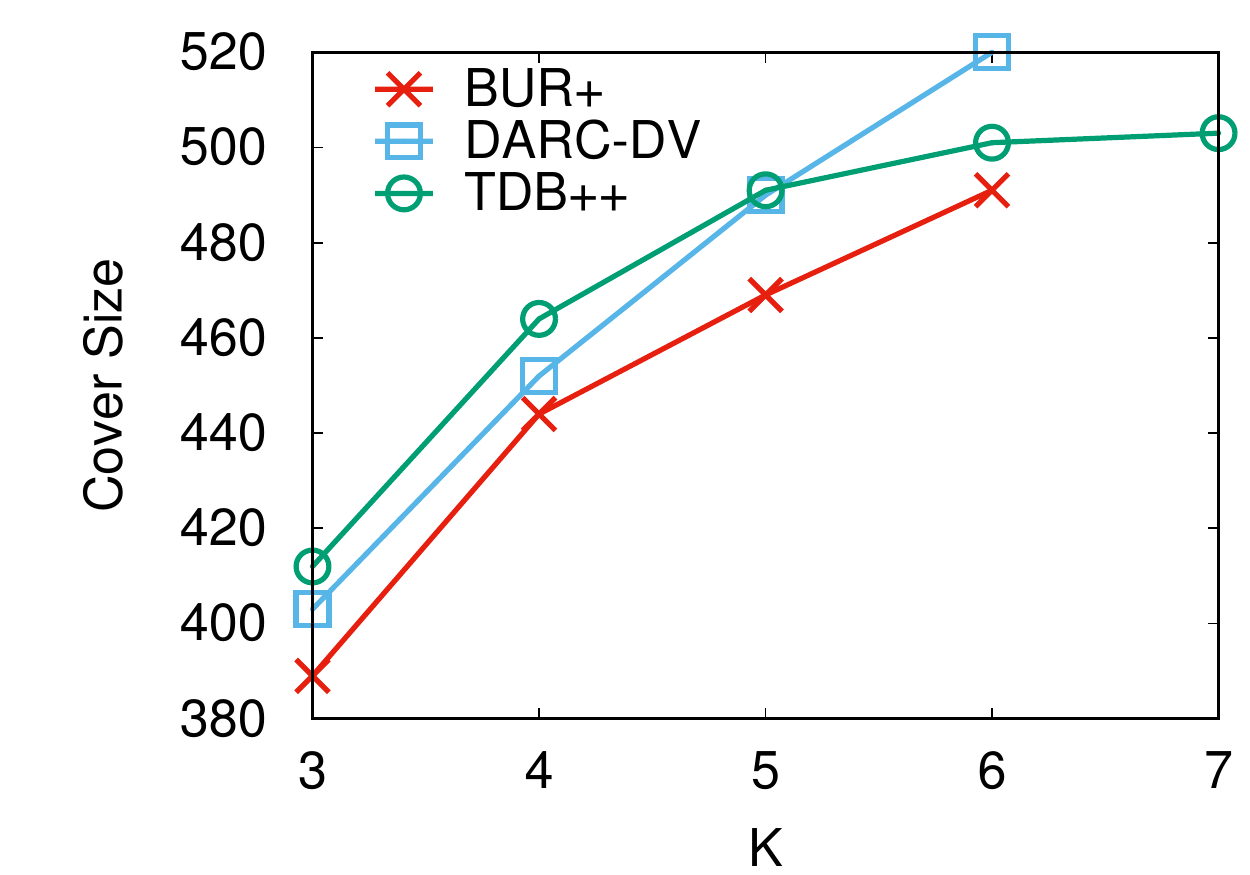}
	 \label{fig:ITL}
     }
     \subfigure[\small{ASC}]{
     \includegraphics[width=0.23\linewidth]{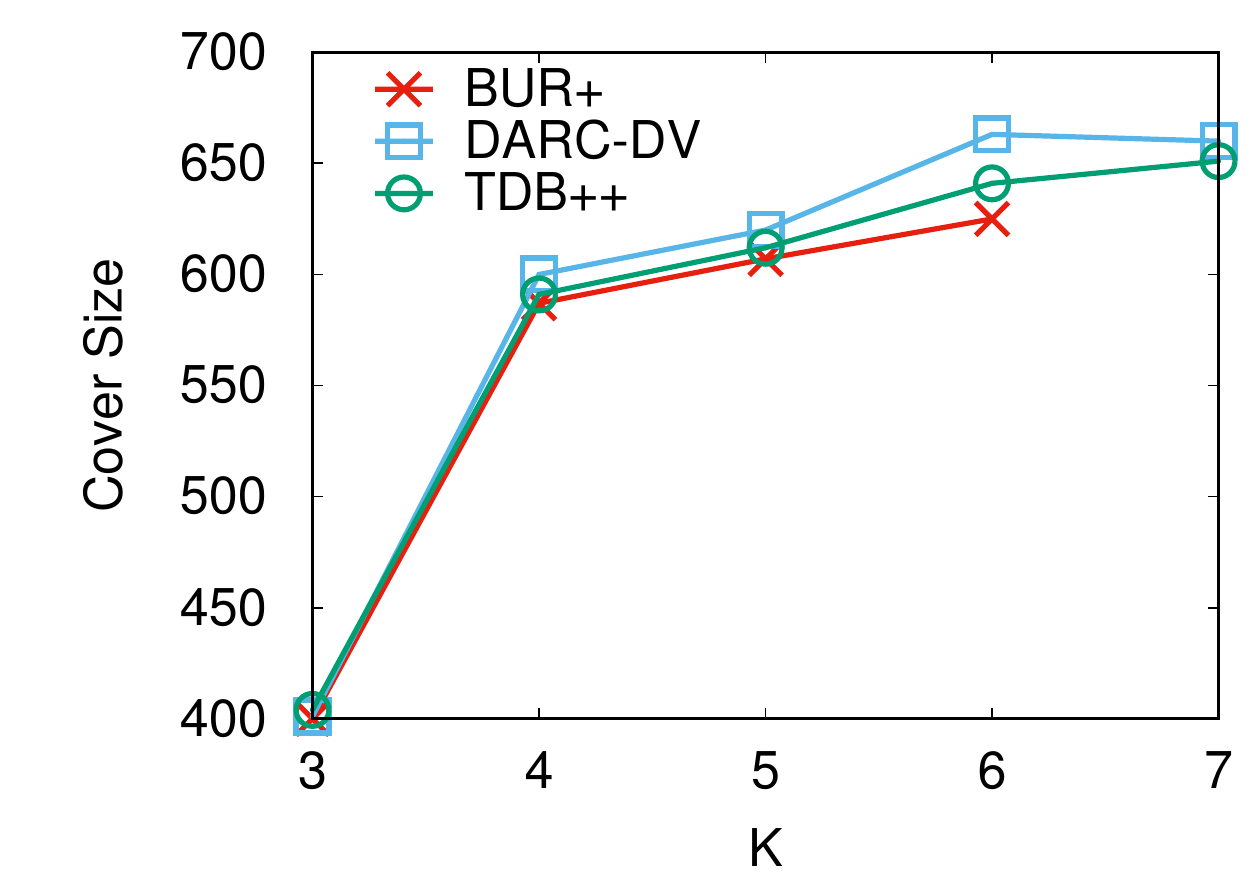}
	 \label{fig:ITW}
     }
          \subfigure[\small{GNU}]{
     \includegraphics[width=0.23\linewidth]{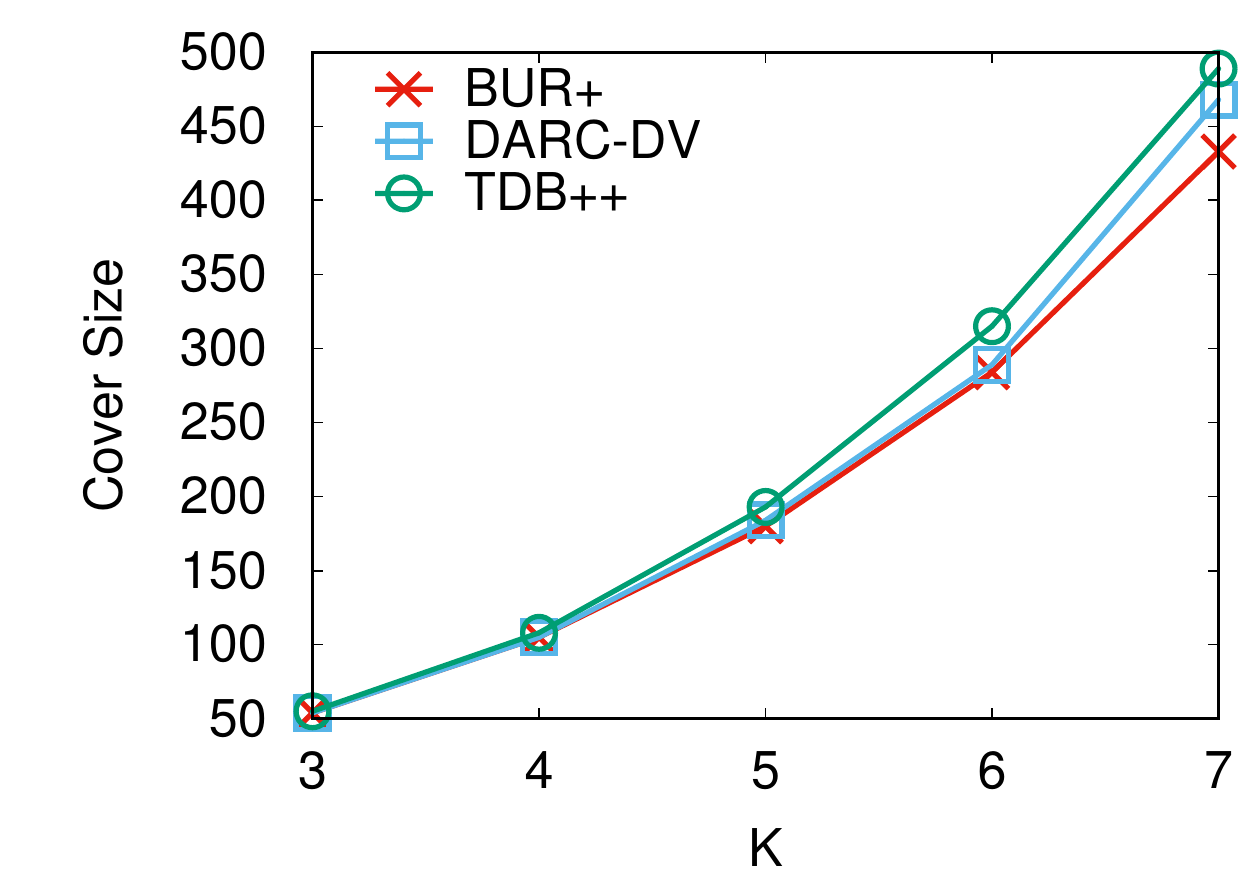}
	 \label{fig:ITL}
     }
     \subfigure[\small{EU}]{
     \includegraphics[width=0.23\linewidth]{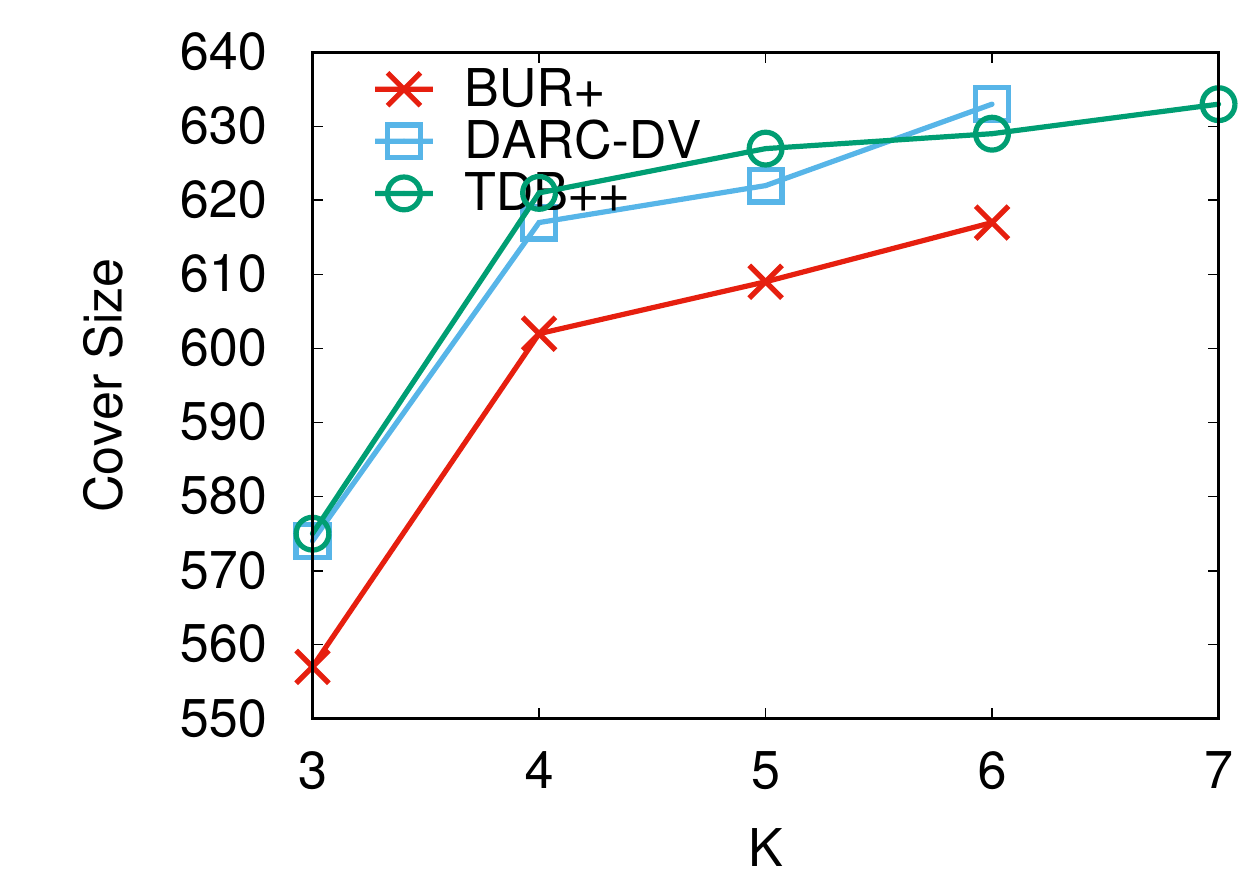}
	 \label{fig:ITW}
     }
          \subfigure[\small{SAD}]{
     \includegraphics[width=0.23\linewidth]{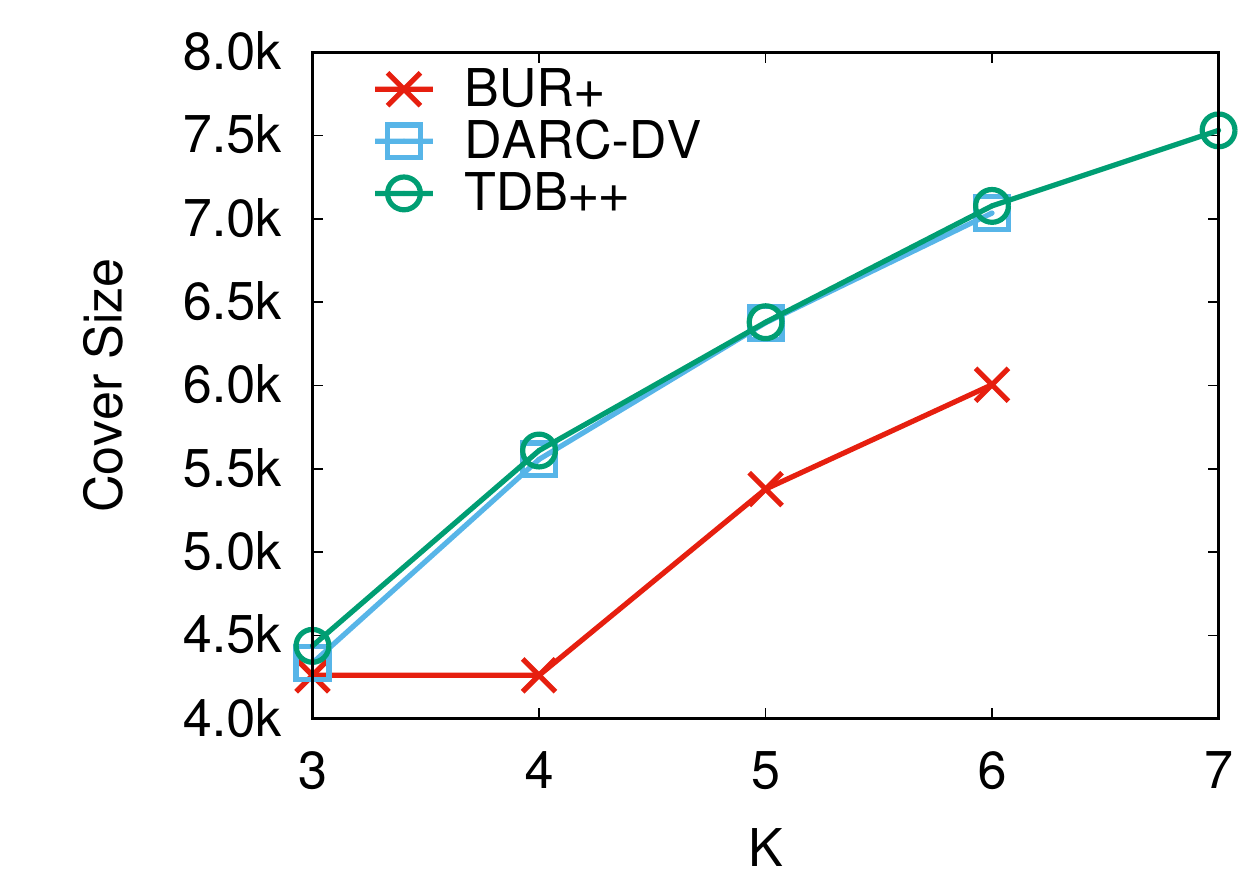}
	 \label{fig:ITL}
     }
     \subfigure[\small{WND}]{
     \includegraphics[width=0.23\linewidth]{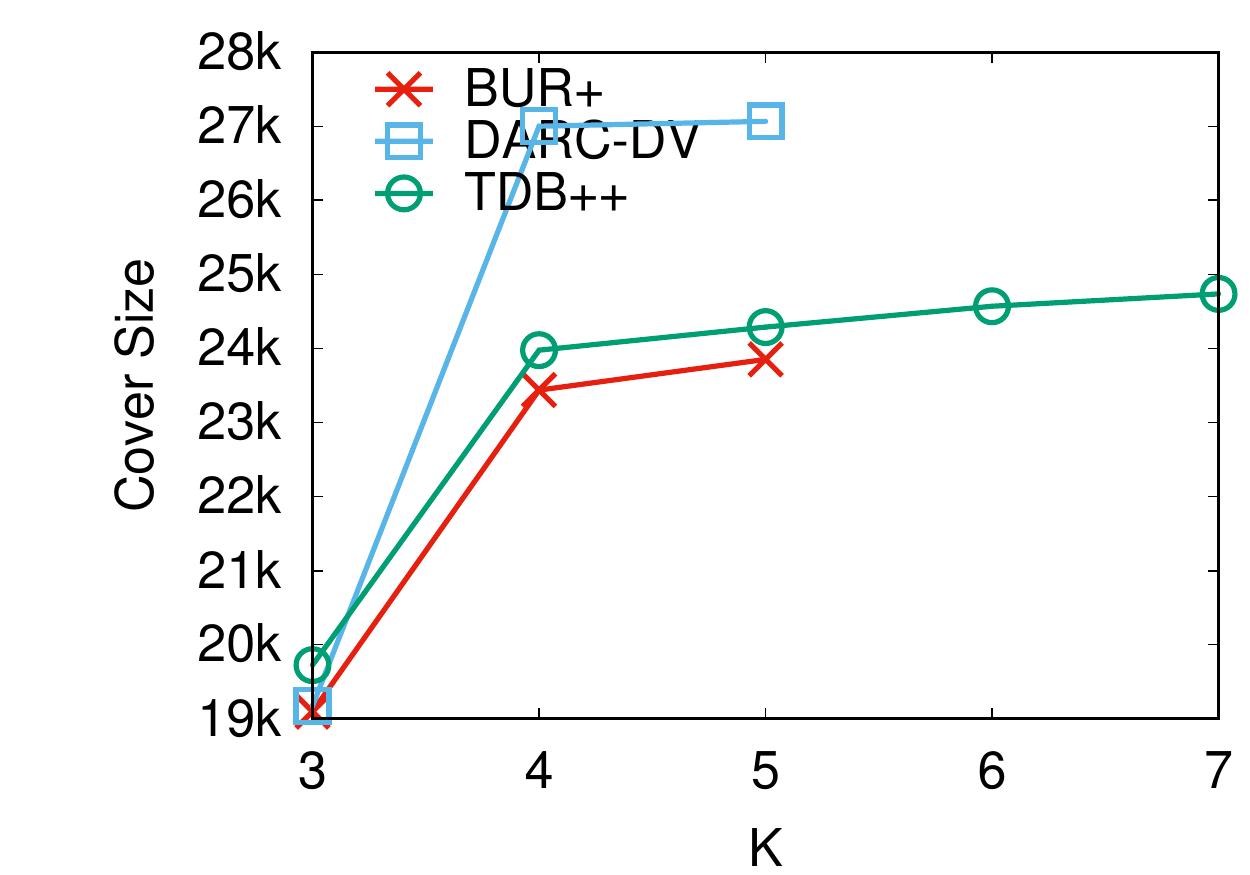}
	 \label{fig:ITW}
     }
          \subfigure[\small{CT}]{
     \includegraphics[width=0.23\linewidth]{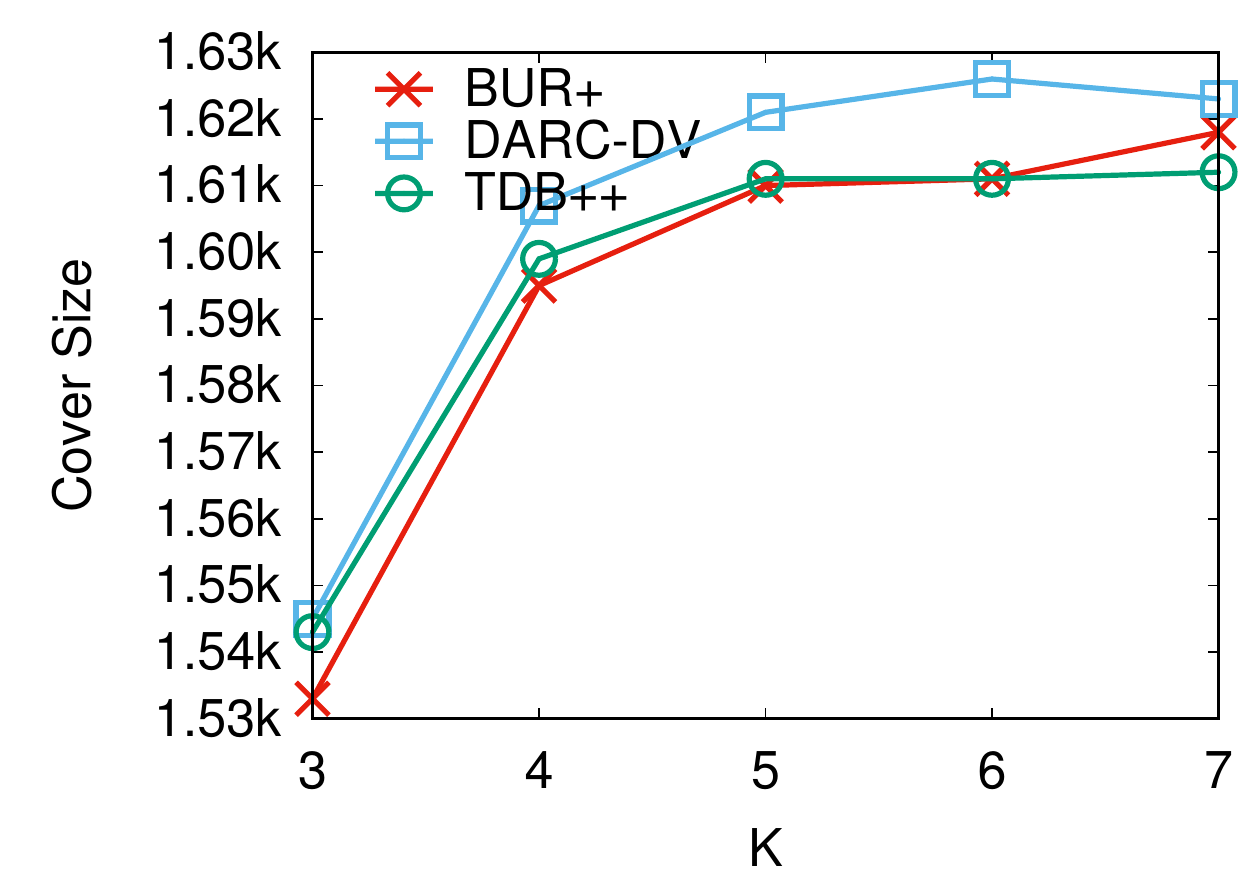}
	 \label{fig:ITL}
     }
     \subfigure[\small{WST}]{
     \includegraphics[width=0.23\linewidth]{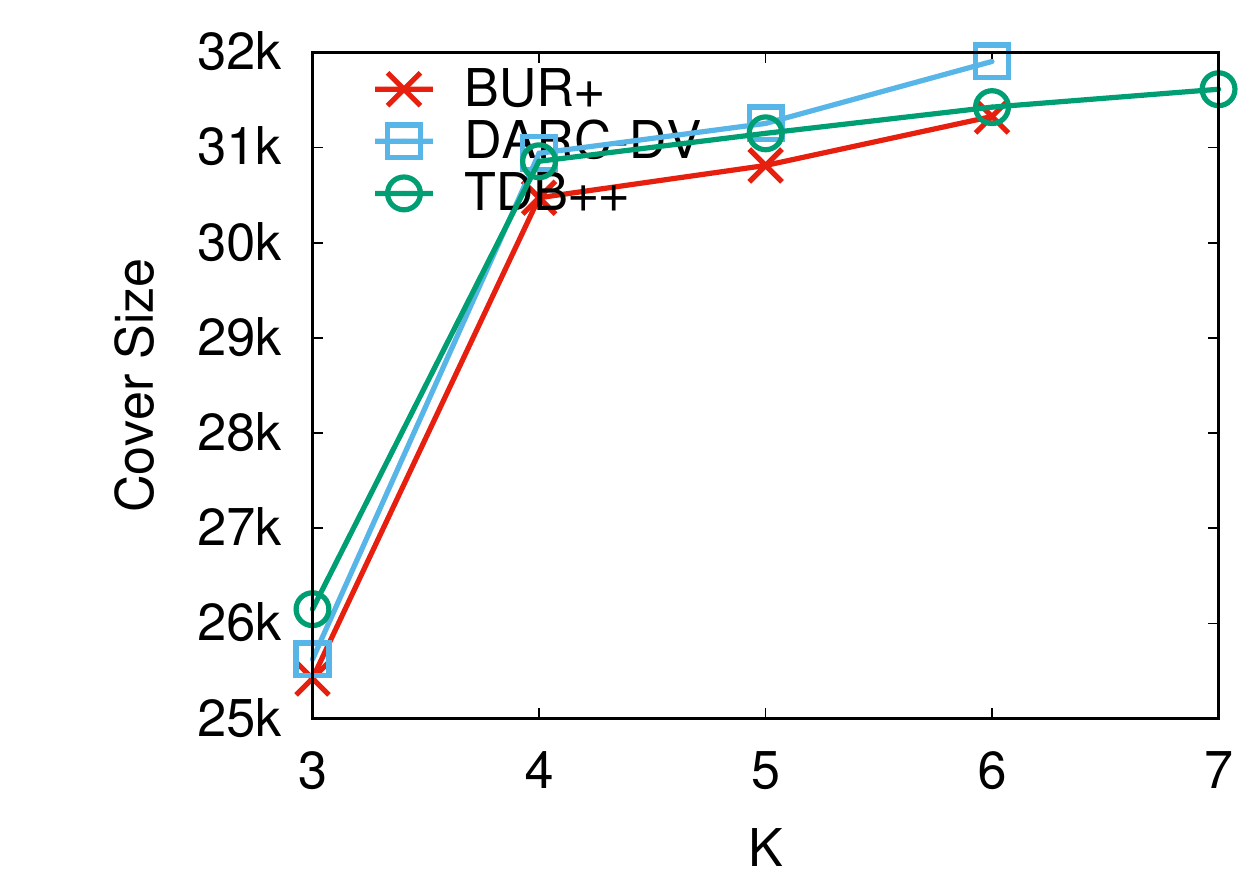}
	 \label{fig:ITW}
     }
               \subfigure[\small{LOAN}]{
     \includegraphics[width=0.23\linewidth]{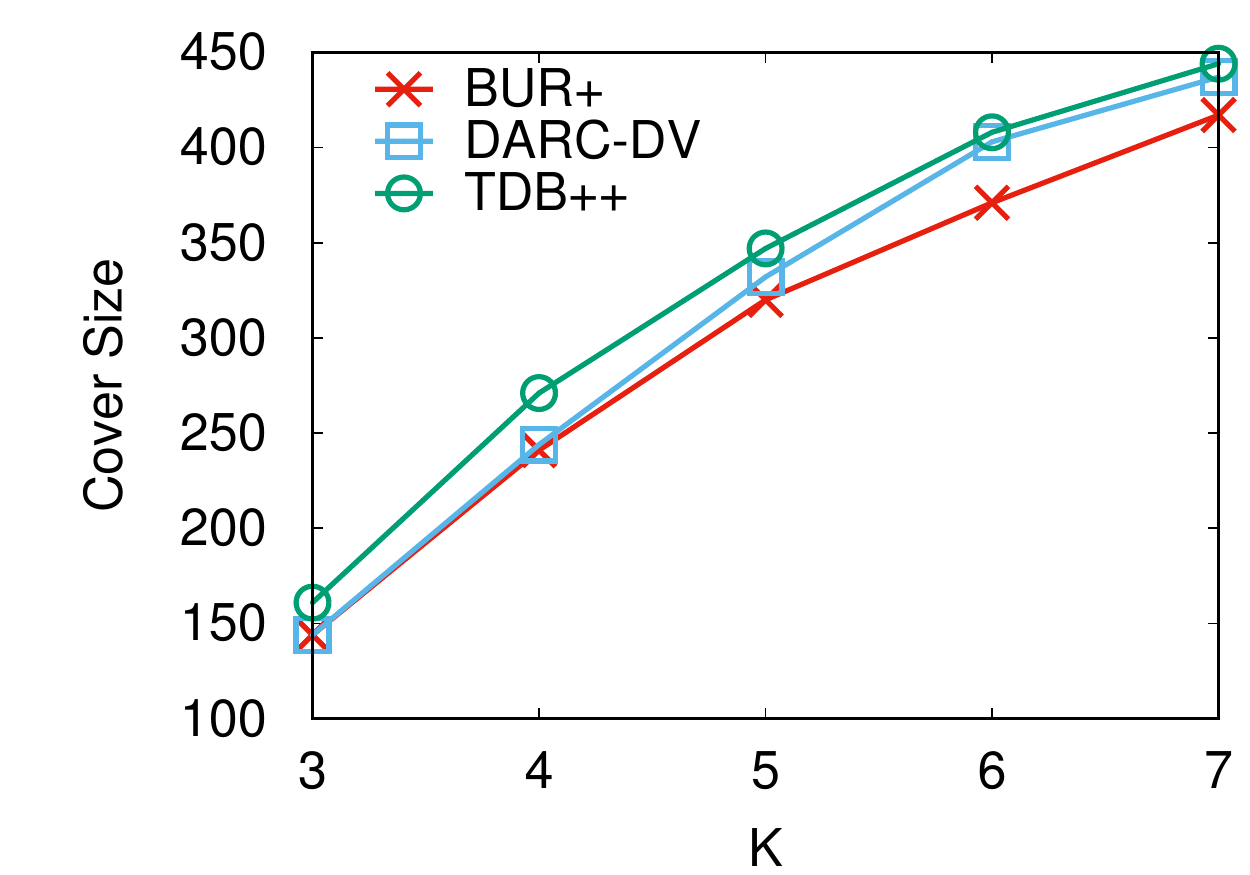}
	 \label{fig:ITL}
     }
     \subfigure[\small{WIT}]{
     \includegraphics[width=0.23\linewidth]{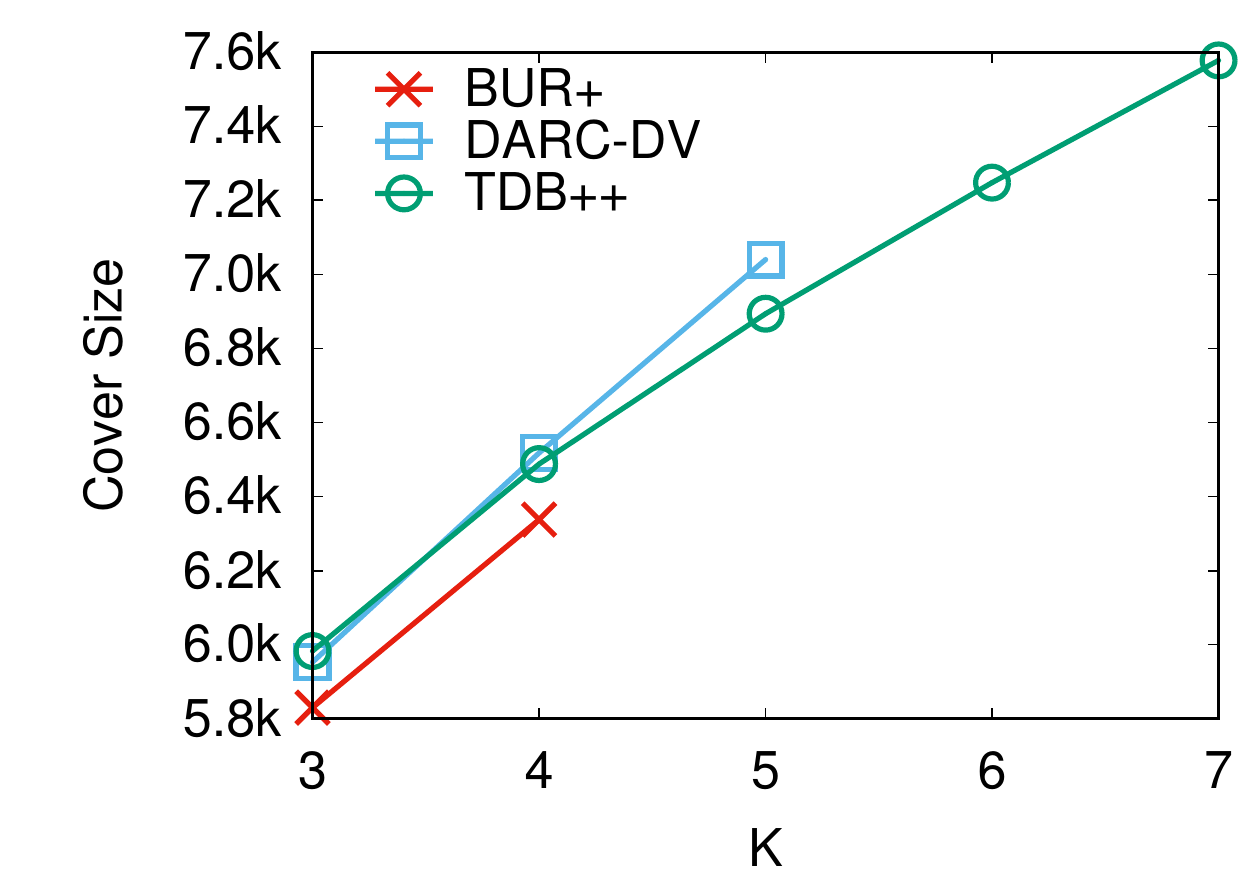}
	 \label{fig:ITW}
     }
               \subfigure[\small{WGO}]{
     \includegraphics[width=0.23\linewidth]{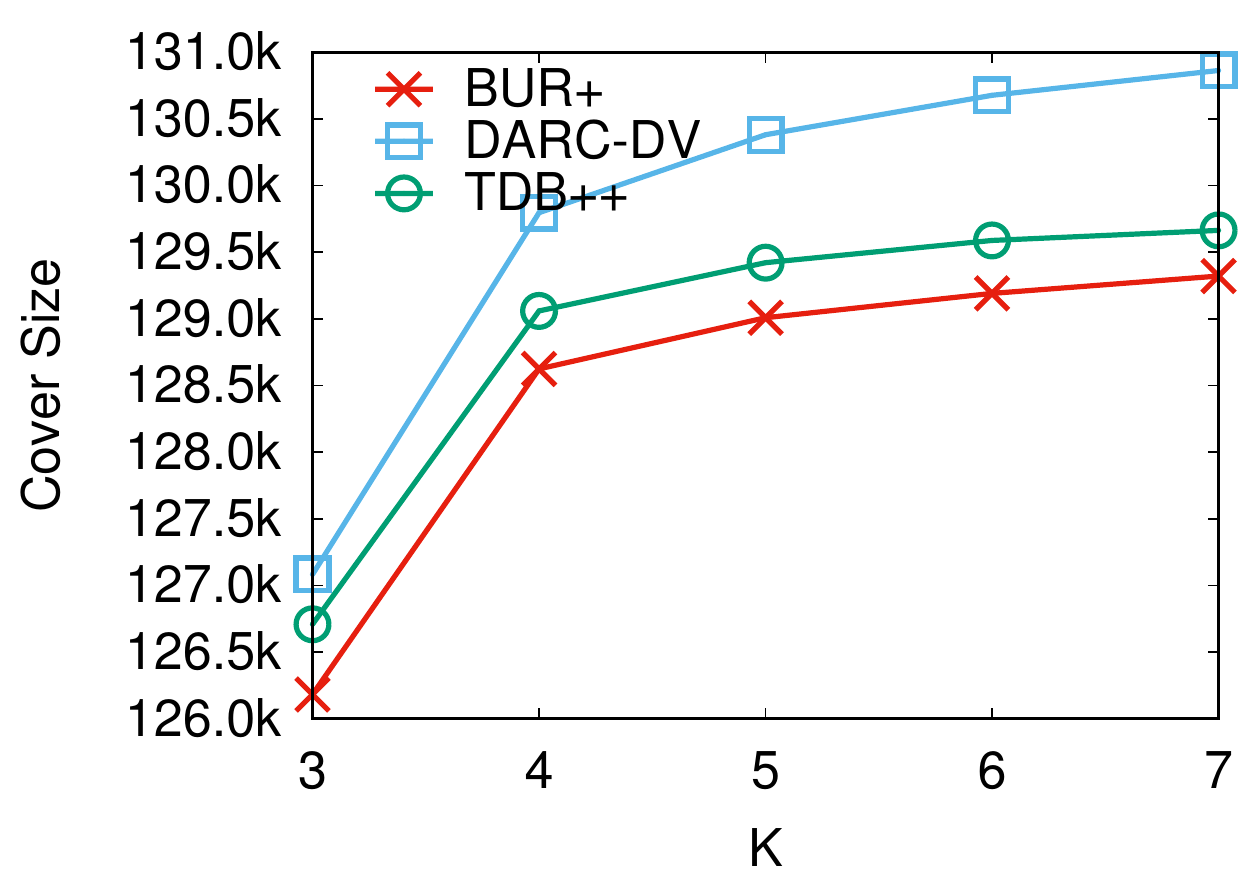}
	 \label{fig:ITL}
     }
     \subfigure[\small{WBS}]{
     \includegraphics[width=0.23\linewidth]{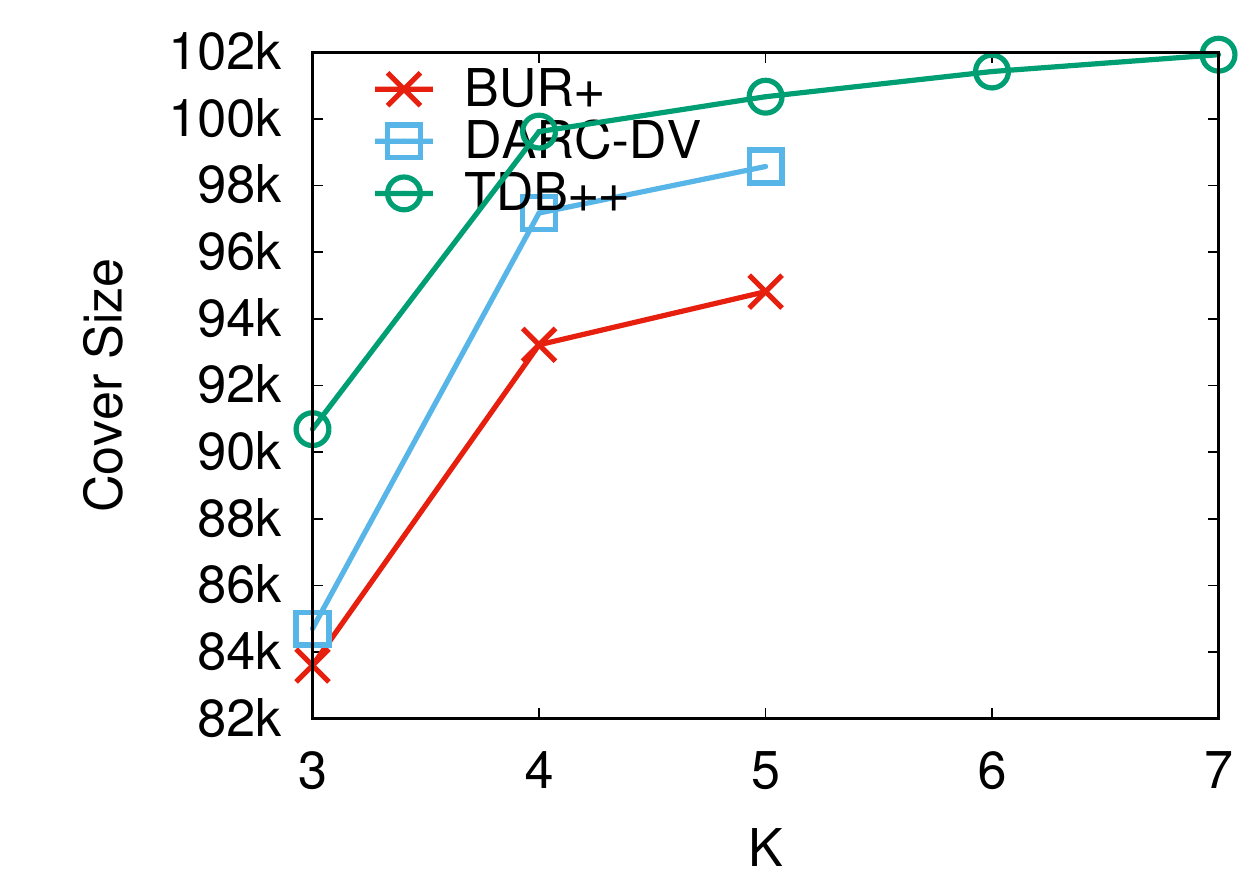}
	 \label{fig:ITW}
     }
	\vspace{-1mm}
\caption{Cover size (\# of vertices).}
	\vspace{-1mm}
\label{fig:CoverSize}
\end{figure*}

\begin{figure}[htbp]
	\vspace{-2mm}
	\newskip\subfigtoppskip \subfigtopskip = -0.1cm
	\newskip\subfigcapskip \subfigcapskip = -0.1cm
	\centering
     \subfigure[\small{WKV}]{
     \includegraphics[width=0.45\linewidth]{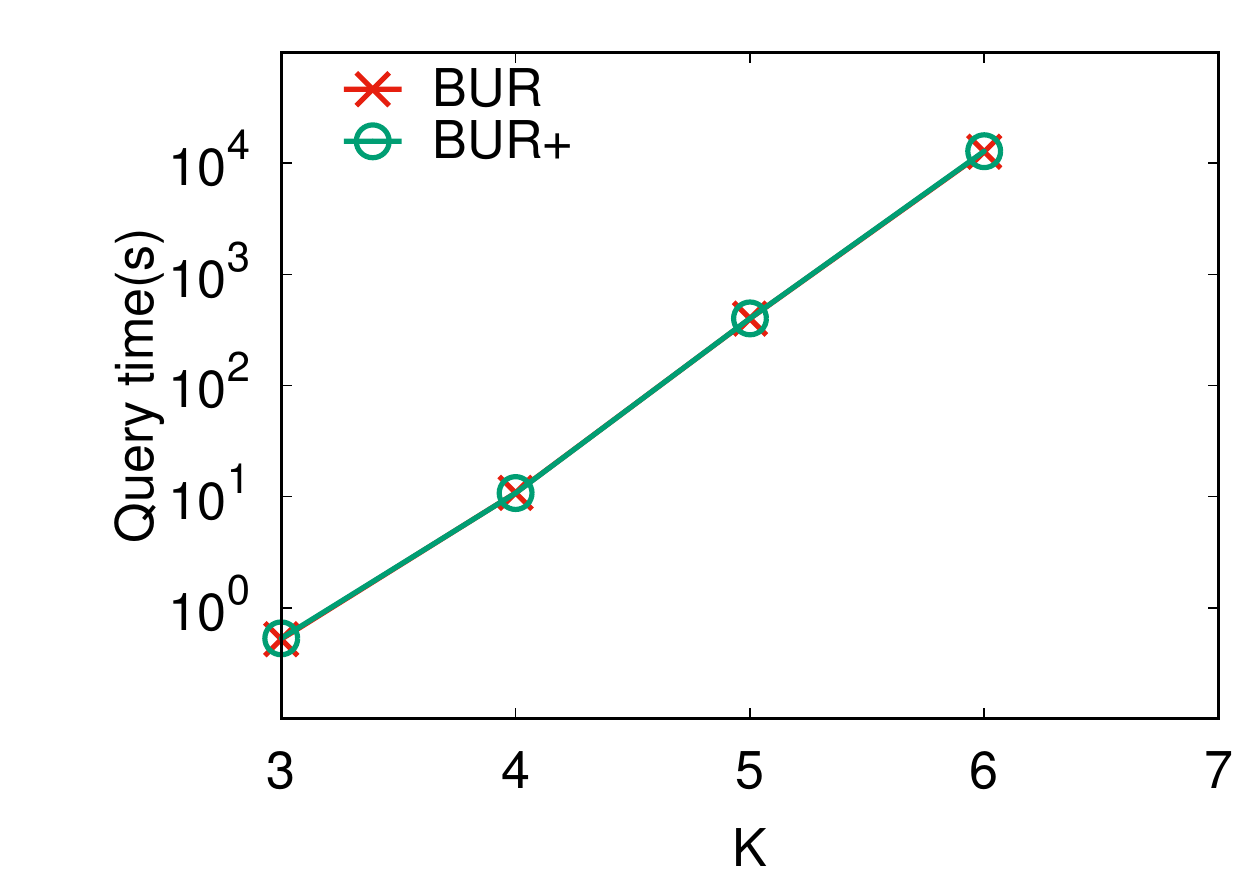}
	 \label{fig:ITLP}
     }
     \subfigure[\small{WGO}]{
     \includegraphics[width=0.45\linewidth]{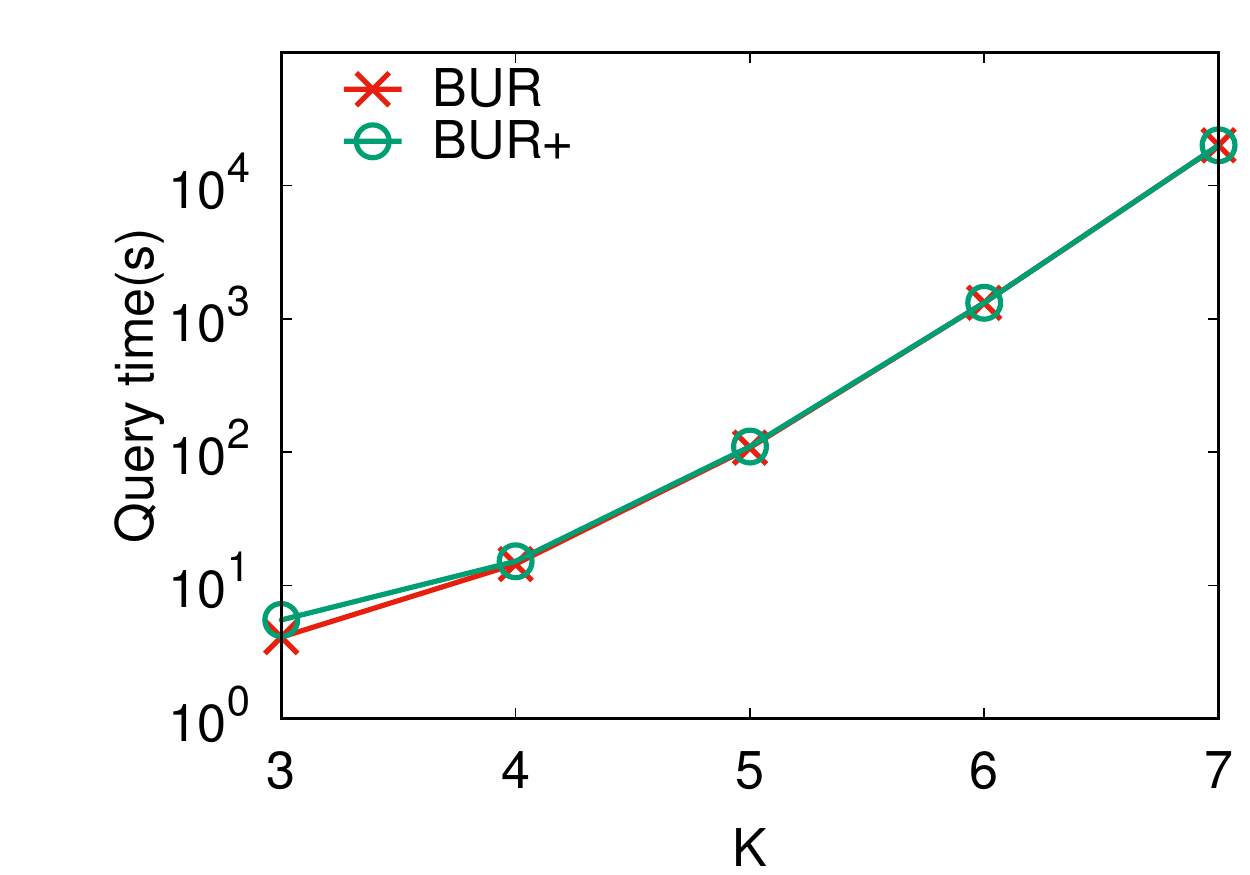}
	 \label{fig:ITWP}
     }
\caption{Runtime (s).}
\label{fig:RuntimeP}
\end{figure}

\begin{figure}[htbp]
	\newskip\subfigtoppskip \subfigtopskip = -0.1cm
	\newskip\subfigcapskip \subfigcapskip = -0.1cm
	\centering
 \subfigure[\small{WKV}]{
     \includegraphics[width=0.45\linewidth]{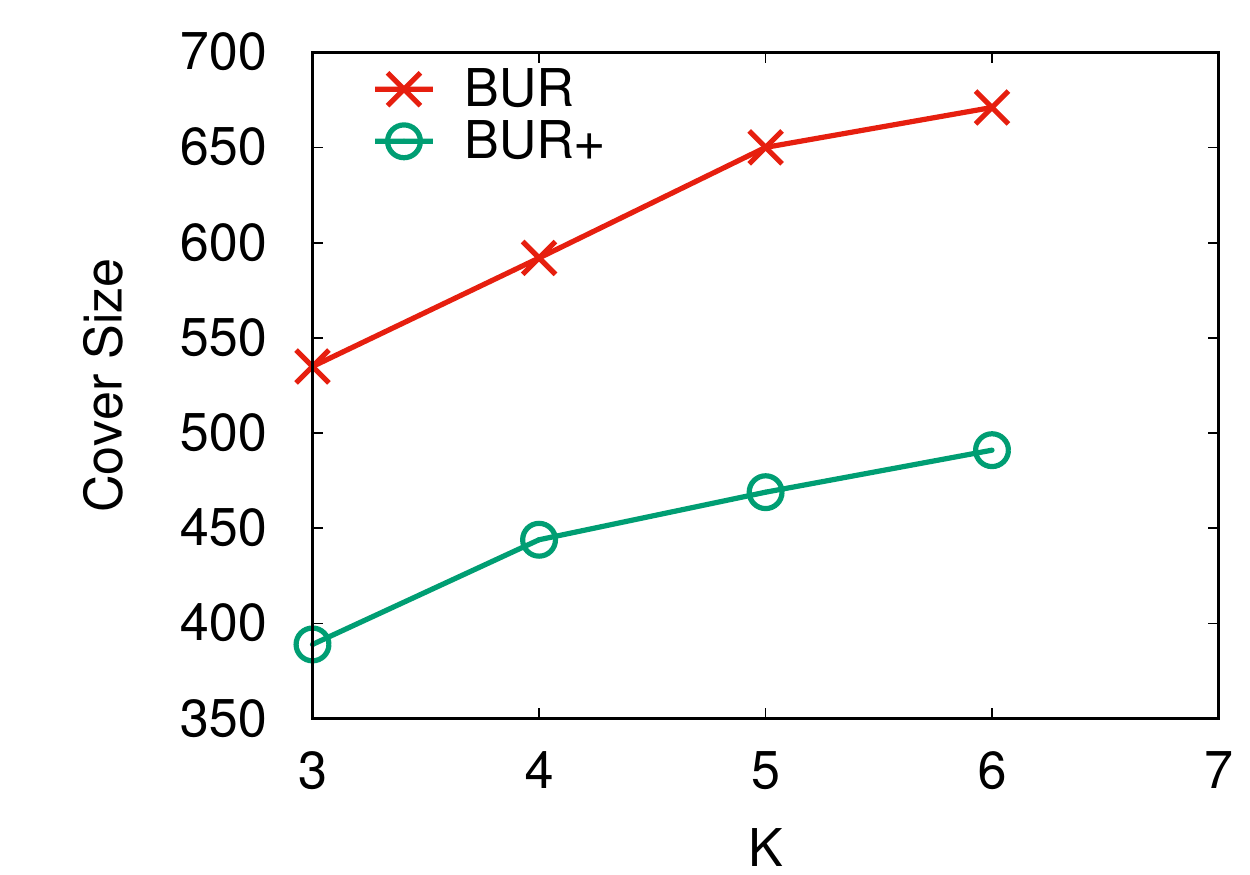}
	 \label{fig:ISLP}
     }
 \subfigure[\small{WGO}]{
     \includegraphics[width=0.45\linewidth]{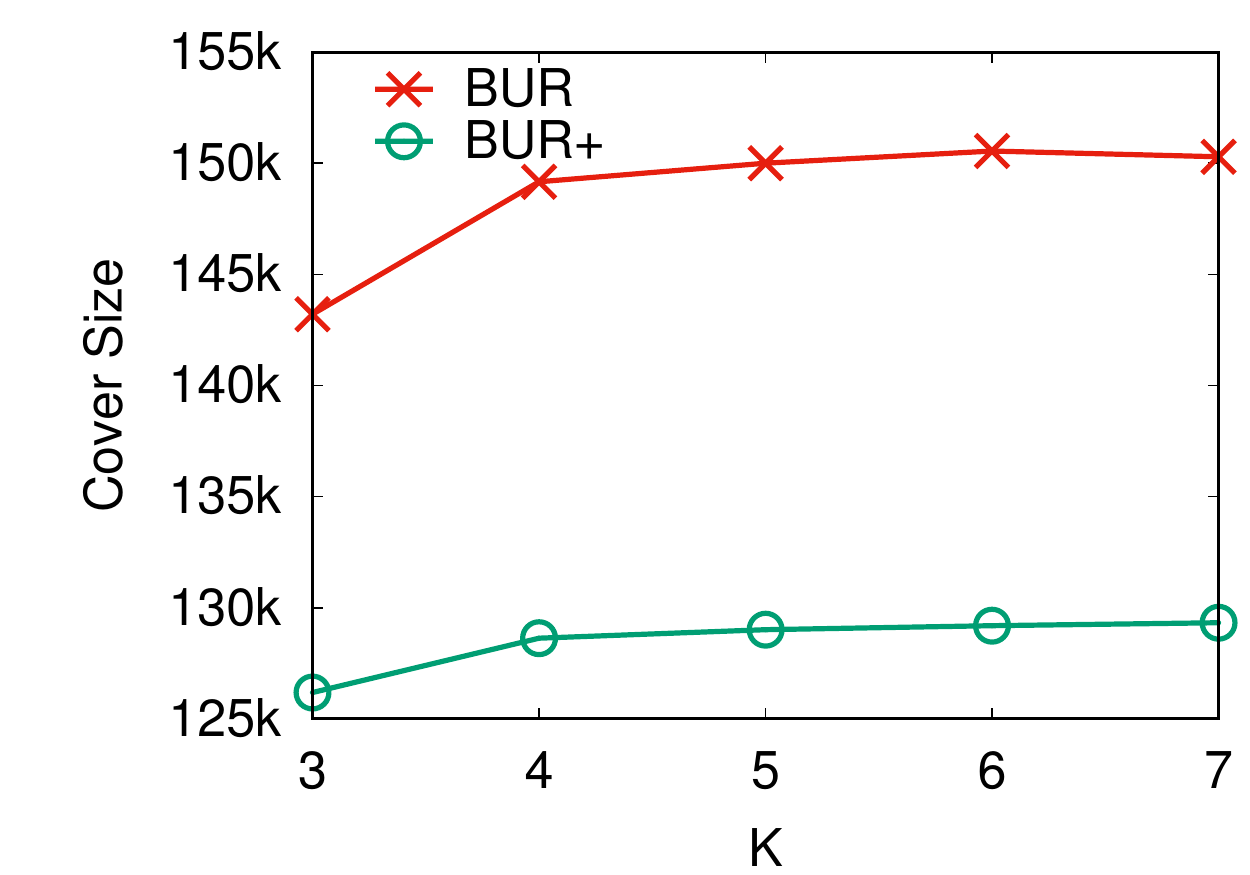}
	 \label{fig:ISTP}
     }
\caption{Cover size (\# of vertices).}
\label{fig:CoverSizeP}
\end{figure}

\subsection{The Speedup Effects}
\label{sect:speedup}
In this subsection, \rev{all the techniques in \tpd are evaluated.} Figure~\ref{fig:Runtime_sp} illustrates the speed-up benefits of all the techniques in WKV and WGO, varying $k$ from $3$ to $7$. What is remarkable about the figures is that when $k$ is large, the \bft contributes more speedup effect than the \bt. The insight is that the \bft is a linear filter technique and is effective in a wide variety of situations. Nonetheless, both the \bt and \bft contribute comparable speed-up effects in both datasets when $k$ is small. \rev{Since the result sets generated by all three methods are identical, their cover sizes are not reported.}

\begin{figure}[htbp]
	\vspace{-2mm}
	\newskip\subfigtoppskip \subfigtopskip = -0.1cm
	\newskip\subfigcapskip \subfigcapskip = -0.1cm
	\centering
     \subfigure[\small{WKV}]{
     \includegraphics[width=0.45\linewidth]{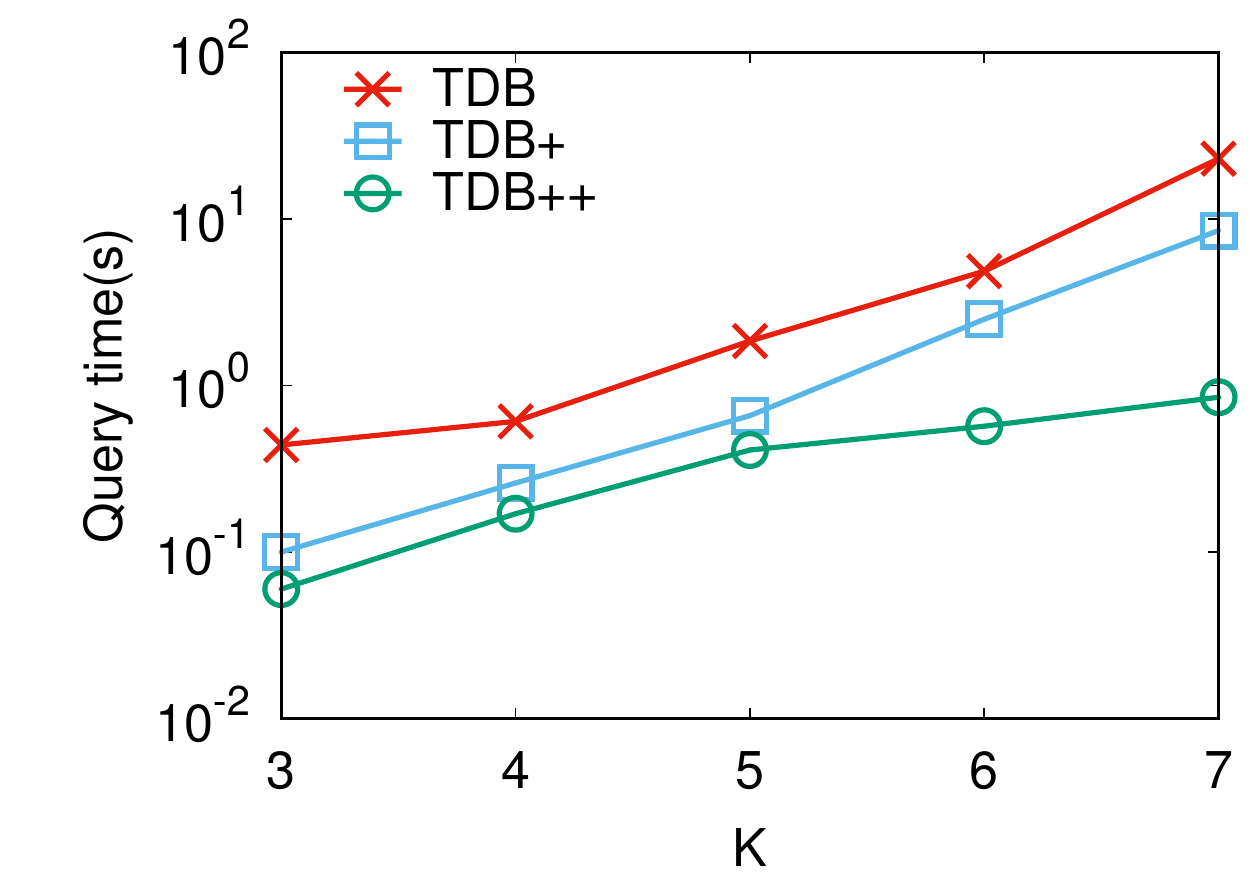}
	 \label{fig:sp_V}
     }
     \subfigure[\small{WGO}]{
     \includegraphics[width=0.45\linewidth]{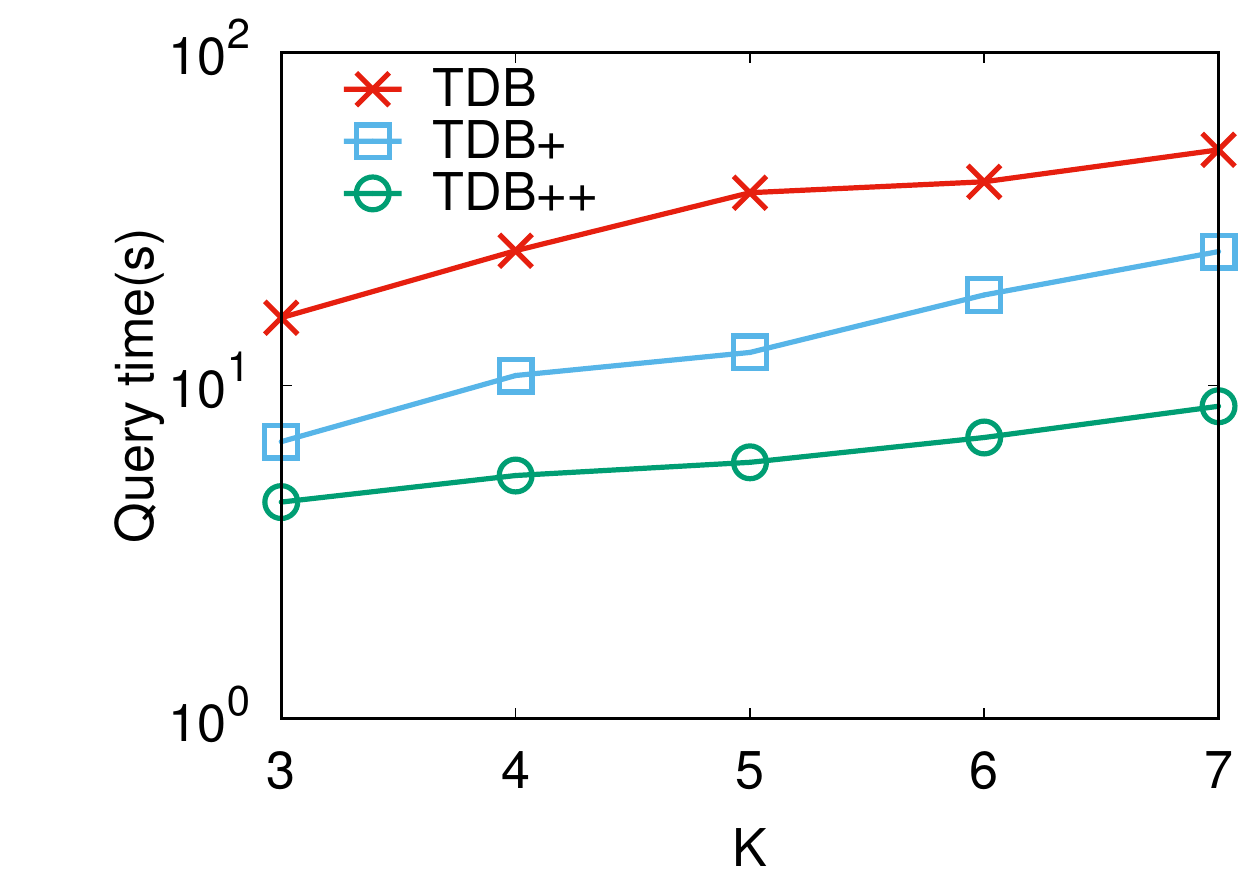}
	 \label{fig:sp_O}
     }
\caption{Runtime (s) for \tpd techniques.}
\label{fig:Runtime_sp}
\end{figure}

\subsection{Effectiveness and Efficiency on \kpc}
\label{sect:EffectAndEfficency}
In this part, the effectiveness and efficiency of all the experiments are discussed as follows. On real datasets, Table~\ref{tb:CoverTime} presents the cover size and runtime for \adpm, \darc, and \tdb. The $k$ is set to $5$ in this experiment. As demonstrated in Table~\ref{tb:CoverTime}, \adpm consistently returns the smallest cover size in 11 datasets except for WIT, despite using more time among all the algorithms. It is apparent from this table that \tdb method produces the smallest cover size in WIT among 12 datasets. In the remaining $11$ datasets, it provides a cover size that is comparable to \adpm, with an average difference of less than $4\%$. \tdb, on the other hand, runs $3$ orders faster than \adpm and up to $4$ orders in WND. When compared to \darc, what stands out in the table is that \tdb runs about $2$-$3$ orders of magnitude faster while returning a comparable cover size. Notably, only \tdb was capable of producing results on large graphs, i.e., FLK, LJ, WKP, and TW.
 
\subsection{Tuning the Parameter $k$}
\label{sect:tuningK}
\rev{Additionally, experiments are conducted by varying the parameter $k$.} 
\rev{This experiment is conducted with tuning the value of the parameter $k$ from $3$ to $7$ on $12$ distinct datasets to determine the cover size and runtime.}
As shown in Figure~\ref{fig:Runtime}, \tdb is the fastest algorithm across all the datasets, followed by \darc.
The figure demonstrates that \adpm runs slowest. Nevertheless, what stands out in Figure~\ref{fig:CoverSize} is \adpm generates the smallest cover size. \tdb produces a comparable cover size as \adpm but has the fastest runtime. As for \darc, it returns the worst cover size among these three methods.

\subsection{The Pruning Effects}
\label{sect:pruning}
\rev{This subsection conduct experiments to demonstrate the pruning effects.} As shown in Figure~\ref{fig:RuntimeP}, \adp and \adpm have a similar runtime in both WKV and WGO. Nevertheless, it is shown in Figure~\ref{fig:CoverSizeP} that \adpm has a smaller cover size owing to the minimal pruning approach in both datasets. WKV and WGO vary in that WKV has a higher average degree than WGO. \rev{In WKV, it could prune more percentage results.} \rev{As for WGO, the cover size difference between \adpm and \adp stays steady when $k$ grows.}

\subsection{Cover Size including $2$-cycles}
\label{sect:cover_size_2cycle}
\rev{Table~\ref{tb:Cover2Cycle} illustrates the cover size for our algorithm when including 2-cycles or not. What stands out in this table is that the cover size would be 3 times larger on average when including 2-cycles. For some graphs, e.g., GNU, the cover size does not grow too much. Nevertheless, for graphs ASC, SAD, WND, CT, WST, WIT, WGO and WBS, the cover size significantly grows. Since 2-cycles could be efficiently verified separately, our problem concentrates on constrained cycles without 2-cycles.}
    \vspace{1mm}
\begin{table}[htbp]
  \centering
      \caption{\rev{The cover size (the number of vertices) $k = 5$.}}
\label{tb:Cover2Cycle}
    \begin{tabular}{l|c|c|c}
      \hline
     	\cellcolor{gray!25}\textbf{Name} 	&	\cellcolor{gray!25}No $2$-cycle		& \cellcolor{gray!25}With $2$-cycle & \cellcolor{gray!25}Ratio	     		   \\ \hline
	WKV	 	&    491 & 714   & 1.45 		   \\ 
	ASC	 	&    612	& 	 	5,285 & 8.64 		   \\ 
	GNU	 	&    193	 & 222 	& 1.15		   \\ 	
	EU	 	& 627	 & 1,270 	& 2.03		   \\ 	
	
		
	SAD	 		& 6,380	& 	 	27,461 & 4.30    		   \\ 	

	WND	 	& 24,290 & 51,466	& 2.12 		   \\ 
 
	CT	 		& 1,611 	& 	7,615 & 4.73 	   \\ 	
	WST	 		& 31,148	& 		116,065 & 3.73  		   \\ 
	LOAN	 	& 347 &			568     & 1.64		   \\

	WIT	 		& 6,894	& 	21,781	      	& 3.16	   \\ 	
 	
	WGO	 	& 129,421&  217,799 & 1.68	   \\ 
		WBS	 	& 100,668		&256,281 & 2.55  		   \\ \hline
       \end{tabular}
\end{table}
\balance
\vspace{-2mm}
\section{Conclusion}
\label{sect:conclusion}
\rev{This paper introduced the \kpc problem, whose objective is to discover a collection of vertices that covers all hop-constrained cycles in a given directed graph.} 
On the theoretical side, \rev{this work demonstrates that approximating} the \kpc issue with length between $3$ and $k$ is UGC-hard (Unique Games Conjecture) for a given directed, unweighted graph $G$.
Our comprehensive experiments show the effectiveness and efficiency of our proposed methods in terms of cover size and runtime when compared to the state-of-the-art $k$-cycle traversal algorithm \darc.

\section*{Acknowledgment}
Wenjie Zhang is supported by ARC Future Fellowship FT210100303. Lu Qin is supported by ARC FT200100787 and DP210101347.

\newpage
\bibliographystyle{ieeetr}
{
\small
\bibliography{ref}
}
\end{document}